\newtheorem{theorem}{Theorem}[section]
\newtheorem{corol}[theorem]{Corollary}
\newtheorem{proposition}[theorem]{Proposition}
\newtheorem{remark}{Remark}
\newtheorem{example}{Example}
\newtheorem{fact}{Fact}
\newif\ifarxiv
\newif\ifdiff
\newcommand{\tabitem}{~~\llap{\textbullet}~~}
\newcommand{\Ex}{\mathbb{E}}
\newcommand{\bd}{\boldsymbol{d}}
\newcommand{\bq}{\boldsymbol{q}}
\newcommand{\bphi}{\boldsymbol{\phi}}
\newcommand{\nf}{2^{\mathbb{N}_f}}
\title{Admissible online closed testing must employ e-values}
\author{%
  Lasse Fischer and Aaditya Ramdas\\
   University of Bremen and Carnegie Mellon University \\
  \texttt{fischer1@uni-bremen.de,  aramdas@cmu.edu}
%   \thanks{Use footnote for providing further information
    % about author (webpage, alternative address)---\emph{not} for acknowledging  funding agencies.} \\
%   Department of Computer Science\\
%   Carnegie Mellon University\\
%   Pittsburgh, PA 15213 \\
%   \texttt{aramdas@cmu.edu} \\
  % examples of more authors
  % \And
  % Coauthor \\
  % Affiliation \\
  % Address \\
  % \texttt{email} \\
  % \AND
  % Coauthor \\
  % Affiliation \\
  % Address \\
  % \texttt{email} \\
  % \And
  % Coauthor \\
  % Affiliation \\
  % Address \\
  % \texttt{email} \\
  % \And
  % Coauthor \\
  % Affiliation \\
  % Address \\
  % \texttt{email} \\
}
\begin{document}

\maketitle
\begin{abstract}
 In contemporary research, data scientists often test an
 infinite sequence of hypotheses $H_1,H_2,\ldots $ one by one, and are required to make real-time decisions without knowing the future hypotheses or data. In this paper, we consider such an online multiple testing problem with the goal of providing simultaneous lower bounds for the number of true discoveries in data-adaptively chosen rejection sets. 
 % In offline multiple testing, it has been recently established that to be admissible, such simultaneous inference is admissible iff it proceeds through (offline) closed testing. We establish an analogous result in this paper 
 Employing the recent online closure principle, we show that for this task it is necessary to use an anytime-valid test for each intersection hypothesis. This connects two distinct branches of the literature: online testing of multiple hypotheses (where the hypotheses appear online), and sequential anytime-valid testing of a single hypothesis (where the data for a fixed hypothesis appears online). Motivated by this result, we construct a new online closed testing procedure and a corresponding short-cut with a true discovery guarantee based on multiplying sequential e-values. This general but simple procedure gives uniform improvements over the state-of-the-art methods but also allows to construct entirely new and powerful procedures. 
 % In addition, we introduce new ideas for hedging and boosting of sequential e-values that provably increase power. 
 
 % \ifarxiv Finally, we also propose the first online true discovery procedures for exchangeable and arbitrarily dependent e-values. 
 % \fi
\end{abstract}

% \tableofcontents

\begin{bibunit}
\section{Introduction}

In the online multiple testing framework a potentially infinite stream of hypotheses $H_1, H_2, \ldots$ is tested one by one over time \citep{foster2008alpha, javanmard2018online}. At each time $t\in \mathbb{N}$, one must decide on the hypothesis $H_t$ without knowing the future hypotheses or data. This setting occurs in the tech industry \citep{kohavi2013online, ramdas2017online}, machine learning \citep{feng2021approval, zrnic2020power}, open data repositories as used in genomics \citep{aharoni2014generalized, mouse, 10002015global} and other data science tasks where flexible and real-time decision making is required. 

A common error metric for a chosen rejection set $R_t$ at time $t$ is the \emph{false discovery proportion}
\begin{align}
    \mathrm{FDP}(R_t)=\frac{\text{Number of true hypotheses in } R_t}{\text{Size of }R_t}. \label{eq:FDP}
\end{align}
Classical offline and online methods typically control its expectation, the false discovery rate (FDR) \citep{benjamini1995controlling}, below some level $\alpha\in (0,1)$. However, since the $\mathrm{FDP}$ may have high variance, controlling its expectation may not be enough.  

In a seminal work, \citet{goeman2011multiple}  proposed to control the tail probabilities of $\mathrm{FDP}(S)$ \emph{simultaneously over all possible sets} $S$ instead. More precisely, they suggest to provide an upper bound $\bq(S)$ for $\mathrm{FDP}(S)$ \ifdiff \added{such that 
$\bq(S) \geq \mathrm{FDP}(S) $ for all $S$ with probability at least $1-\alpha$.} \else such that 
$\bq(S) \geq \mathrm{FDP}(S)$ for all $S$ with probability at least $1-\alpha$. \fi

A major advantage of such simultaneous bounds on the FDP, compared to FDR control, is that the final rejection set(s) can be chosen post-hoc, meaning after looking at the data and calculating the bounds $\bq(S)$, without violating this error control. In other words, a scientist is allowed to query many (or all) such sets $S$, examine the reported bounds, and later choose one or a few final sets (and bounds) to report or follow up on. Since many more sets $S$ will be queried than will be rejected, we call these sets $S$ as \emph{query} sets (rather than, say, \emph{rejection} sets).

Bounds that hold with high probability are advantageous when an inflated FDP has severe consequences. However, there are also connections between FDR and simultaneous FDP controlling procedures \citep{goeman2019simultaneous,katsevich2020simultaneous,iqraa2024false}. Following \citet{goeman2021only}, we consider simultaneous \emph{lower bounds on the number of true discoveries} $\bd(S)\coloneqq (1-\bq(S))|S|$, which is equivalent to bounding the FDP from above but often more convenient. Their work focused on offline multiple testing, while we study this error metric in the online setting.

At each time $t$, an \textit{online true discovery procedure} $\bd$ allows the scientist to pick any (and possibly multiple) query sets $S\subseteq \{1,\ldots,t\}$ of interest. Then $\bd(S)$ instantly  provides a lower bound for the number of false hypotheses in $S$, which holds true with probability at least $1-\alpha$ simultaneously over all sets that have been queried or that might be queried in the future. Note that this simultaneity permits to stop (or continue) the testing process data-adaptively at any time, e.g.\ after the 50th discovery. This is an additional benefit compared to online FDR control, which is usually only provided at fixed times \citep{zrnic2021asynchronous, xu2022dynamic}. Figure~\ref{fig:flowchart_online} illustrates this process. 

\ifarxiv
\begin{figure}[tb]
\centering
\begin{tikzpicture}[node distance=2cm, every node/.style={draw, text width=6.5cm, align=center, rounded corners}, every path/.style={draw,-latex,thick,->,>=stealth}]
    % Nodes
    \node (H) {(1) Scientist proposes $H_t$};
    \node (X) [right of=H, xshift=6cm] {(2) Scientist collects/observes data $X_t$};
    \node (S) [below of=H] {(3) Scientist chooses (possibly multiple) \\ query sets $S\subseteq \{1,\ldots,t\}$};
    \node (d) [right of=S, xshift=6cm] {(4) Statistician outputs $\boldsymbol{d}(S)$ };
    
    % Arrows
    \path (H) -- (X);
    \path (X) -- (S);
    \path (S) -- (d);
\end{tikzpicture}
\caption{Illustration of using an online procedure with simultaneous true discovery guarantee. At each time $t$, (1) the scientist proposes a hypothesis $H_t$ for testing, possibly based on the data used for testing the previous hypotheses; (2) the scientist collects the data $X_t$ required for testing $H_t$ which may consist of new data and/or the reuse of old data; (3) the scientist chooses based on the data observed so far (possibly multiple) query sets $S\subseteq \{1,\ldots,t\}$ of interest --- if the scientist is interested in rejections, these query sets could also be interpreted as candidates for rejection; (4) the statistician employs the online procedure $\bd$ to provide lower bounds $\bd(S)$ for the number of true discoveries in the query sets $S$  that hold simultaneously with probability of at least $1-\alpha$. }
\label{fig:flowchart_online}
\end{figure}
\else
\begin{figure}[tb]
\centering
\begin{tikzpicture}[node distance=2cm, every node/.style={draw, text width=5.5cm, align=center, rounded corners}, every path/.style={draw,-latex,thick,->,>=stealth}]
    % Nodes
    \node (H) {(1) Scientist proposes $H_t$};
    \node (X) [right of=H, xshift=6cm] {(2) Scientist collects/observes data $X_t$};
    \node (S) [below of=H] {(3) Scientist chooses (possibly multiple) \\ query sets $S\subseteq \{1,\ldots,t\}$};
    \node (d) [right of=S, xshift=6cm] {(4) Statistician outputs $\boldsymbol{d}(S)$ };
    
    % Arrows
    \path (H) -- (X);
    \path (X) -- (S);
    \path (S) -- (d);
\end{tikzpicture}
\caption{Illustration of using an online procedure with simultaneous true discovery guarantee. At each time $t$, (1) the scientist proposes a hypothesis $H_t$ for testing, possibly based on the data used for testing the previous hypotheses; (2) the scientist collects the data $X_t$ required for testing $H_t$ which may consist of new data and/or the reuse of old data; (3) the scientist chooses based on the data observed so far (possibly multiple) query sets $S\subseteq \{1,\ldots,t\}$ of interest --- if the scientist is interested in rejections, these query sets could also be interpreted as candidates for rejection; (4) the statistician employs the online procedure $\bd$ to provide lower bounds $\bd(S)$ for the number of true discoveries in the query sets $S$  that hold simultaneously with probability of at least $1-\alpha$. }
\label{fig:flowchart_online}
\end{figure}
\fi

A recently popularized approach for online testing of a single hypothesis, where not the hypotheses but the data itself comes in sequentially, is the e-value. An e-value $E_t$ for a hypothesis $H_t$ is a nonnegative random variable which has expected value less than or equal to one if $H_t$ is true. The e-value (or its sequential extension, the e-process) is an alternative to the well-known p-value but more suitable for settings where early stopping of the sampling process or optional continuation is desired \citep{shafer2011test, grunwald2020safe, ramdas2023game}. 
In this paper, we exploit this sequential suitability of the e-value to design online multiple testing procedures with simultaneous true discovery guarantees. This connects two mostly separate areas of the literature: online testing of multiple hypotheses and sequential testing of a single hypothesis. \ifdiff \added{In Supplementary Material~\ref{sec:related_lit} we provide a more detailed description of these existing branches of literature.} \else In Supplementary Material~\ref{sec:related_lit} we provide a more detailed description of these existing branches of literature. \fi

\subsection{Our contribution}

\ifdiff \added{Our contribution can be divided into a theoretical and a methodological part.} \else  Our contribution can be divided into a theoretical and a methodological part. \fi

\ifarxiv
\paragraph{Our theoretical contribution.}
\else 
\ifdiff 
\added{\textbf{Our theoretical contribution.}}\hspace{0.3cm }
\else 
\textbf{Our theoretical contribution.}\hspace{0.3cm }
\fi
\fi
We provide new insights about the central role of e-values in online multiple testing. In particular, we show that in order to derive admissible online procedures with simultaneous true discovery guarantee, we must necessarily employ \emph{anytime-valid tests} for each intersection hypothesis, which in turn \emph{must} employ test martingales \citep{ramdas2020admissible}. By transitivity, the construction of these online true discovery procedures must rely on test martingales, which are sequential generalizations of e-values. Thus, e-values enter naturally into the construction of admissible online procedures. Even if an online procedure is constructed solely based on p-values, it must implicitly employ anytime-valid tests and can be reconstructed or improved using e-values. 

\ifarxiv
\paragraph{Our methodological contribution.}
\else 
\ifdiff 
\added{\textbf{Our methodological contribution.}}\hspace{0.3cm }
\else 
\textbf{Our methodological contribution.}\hspace{0.3cm }
\fi
\fi
Guided by our theoretical results, we construct \texttt{SeqE-Guard}, a powerful and computationally efficient online true discovery procedure based on multiplying sequential e-values. \ifdiff \added{This improves the existing methodology on several fronts.} \else 
This improves the existing methodology on several fronts. \fi
\ifdiff
\begin{enumerate}
    \item \added{Due to the flexibility provided by e-values,  \texttt{SeqE-Guard} is a very general procedure. We show that by plugging specific e-values into \texttt{SeqE-Guard}, the online methods of \citet{katsevich2020simultaneous} and \citet{iqraa2024false}, which are all online true discovery procedures we know of, can be uniformly improved. In addition, \texttt{SeqE-guard} improves some offline procedures (Section~\ref{sec:katse}).}
    \item \added{The \texttt{SeqE-Guard} algorithm operates according to simple rules and is therefore easy to analyze. For instance, \texttt{SeqE-Guard} reduces the complex existing methods by \citet{katsevich2020simultaneous} and \citet{iqraa2024false} with various parameters to simple e-values that can only take two or three different values (Table~\ref{tab:improvements}). This facilitates the interpretation and comparison of these methods.}
    \item \added{The existing online procedures only provide bounds for one query path, meaning a specific nondecreasing sequence of query sets. \texttt{SeqE-Guard} allows to analyze all possible query sets simultaneously, without sacrificing power (Corollary~\ref{corol:multiple_paths}).}
    \item \added{\texttt{SeqE-Guard} does not only improve existing procedures, but can also be used to construct entirely new methods, depending on the choice of the e-values. We investigate the use of \texttt{SeqE-Guard} with growth rate optimal (GRO) e-values \citep{shafer2021testing, grunwald2020safe} and propose a hedging approach to increase the power to detect false hypotheses (Section~\ref{sec:gro}). }
\end{enumerate}
\else
\begin{enumerate}
    \item Due to the flexibility provided by e-values,  \texttt{SeqE-Guard} is a very general procedure. We show that by plugging specific e-values into \texttt{SeqE-Guard}, the online methods of \citet{katsevich2020simultaneous} and \citet{iqraa2024false}, which are all online true discovery procedures we know of, can be uniformly improved. In addition, \texttt{SeqE-guard} improves some offline procedures (Section~\ref{sec:katse}).
    \item The \texttt{SeqE-Guard} algorithm operates according to simple rules and is therefore easy to analyze. For instance, \texttt{SeqE-Guard} reduces the complex existing methods by \citet{katsevich2020simultaneous} and \citet{iqraa2024false} with various parameters to simple e-values that can only take two or three different values (Table~\ref{tab:improvements}). This facilitates the interpretation and comparison of these methods.
    \item The existing online procedures only provide bounds for one query path, meaning a specific nondecreasing sequence of query sets. \texttt{SeqE-Guard} allows to analyze all possible query sets simultaneously, without sacrificing power (Corollary~\ref{corol:multiple_paths}).
    \item \texttt{SeqE-Guard} does not only improve existing procedures, but can also be used to construct entirely new methods, depending on the choice of the e-values. We investigate the use of \texttt{SeqE-Guard} with growth rate optimal (GRO) e-values \citep{shafer2021testing, grunwald2020safe} and propose a hedging approach to increase the power to detect false hypotheses (Section~\ref{sec:gro}).
\end{enumerate}
\fi

\subsection{Paper outline}

In Section~\ref{sec:online_true_control}, we define the online setting formally and recap concepts like coherence (Section~\ref{sec:coherence}) and (online) closed testing (Section~\ref{sec:online_closed_test}). Afterwards, we introduce a general approach to online true discovery guarantee based on test martingales and prove that every procedure must be constructed in that way (Section~\ref{sec:test_martingale}). 
In Section~\ref{sec:seq_e-value}, we consider online true discovery guarantee with sequential e-values and propose our \texttt{SeqE-Guard} algorithm for this task. By plugging specific sequential e-values into \texttt{SeqE-Guard} we immediately obtain uniform improvements of the state-of-the-art methods by \citet{katsevich2020simultaneous, iqraa2024false} (Section~\ref{sec:katse}). Separately, we consider \texttt{SeqE-Guard} with GRO e-values in Section~\ref{sec:gro}. In Section~\ref{sec:sim}, we perform simulations to compare the proposed methods and to quantify the gain in power obtained by our improvements\footnote{The code for the simulations is available at  \url{github.com/fischer23/online_true_discovery}.}.

% \ifarxiv Finally, we provide new true discovery procedures for online settings where the e-values are not sequential but exchangeable (Section~\ref{sec:exchangeable_e-values}) or arbitrarily dependent (Section~\ref{sec:arbitrary_dep}).  
% \fi
%All proofs are  relegated to Appendix~\ref{sec:proofs}.

\section{Online true discovery guarantee\label{sec:online_true_control}}

In this section, we introduce notation and recall concepts like simultaneous (online) true discovery guarantee and (online) closed testing. Then, we prove that any admissible procedure for delivering an online true discovery guarantee must rely on test martingales (sequential generalizations of e-values).

We consider the general online multiple testing setting described in \citep{fischer2024online}. Let $(\Omega, \mathbb{F})$, where $\mathbb{F}=(\mathcal{F}_i)_{i\in \mathbb{N}_0}$, be a filtered measurable space and $\mathcal{P}$ some set of probability distributions on $(\Omega, \mathbb{F})$. The $\sigma$-field $\mathcal{F}_i$ defines the information that can be used for testing hypothesis $H_i$ ($\mathcal{F}_0=\{\emptyset, \Omega\}$). Hence, in online multiple testing every hypothesis test is only allowed to use some partial information which is increasing over time. One can think of $\mathcal{F}_i$ as the data that is available at time $i$. However, we may also include external randomization or coarsen the filtration. For example, many existing works on online multiple testing consider $\mathcal{F}_i=\sigma(P_1,\ldots,P_i)$ \citep{foster2008alpha, javanmard2018online}, where each $P_j$ is a p-value calculated for hypothesis $H_j$. We would also like to point out that the hypotheses do not have to be prespecified. Hence, in practice one is allowed to data-adaptively construct each $H_i$ based on $\mathcal{F}_{i-1}$.

We assume that the data follows some unknown distribution $\mathbb{P} \in \mathcal{P}$. A null hypothesis $H\subseteq \mathcal{P}$ is a collection of probability distributions; we are effectively testing whether $\mathbb P \in H$. When $\mathbb P \in H$, we say that $H$ is true null, and otherwise we call it a false null. We define $I_0^\mathbb{P}\coloneqq \{i\in \mathbb{N}: \mathbb{P} \in H_i\}$ and $I_1^\mathbb{P}\coloneqq \mathbb{N}\setminus I_0^\mathbb{P}$ as the index sets of true and false null hypotheses, respectively. 

As defined in \citep{goeman2021only, fischer2024online}, a procedure with \textit{simultaneous true discovery guarantee} is a random function $\boldsymbol{d}:2^{\mathbb{N}_f}\to \mathbb{N} \cup \{0\}$, where $2^{\mathbb{N}_f}$ is the set of all finite subsets of $\mathbb{N}$ (analogously we use $2^{\mathbb{N}_{-f}}$ for the set of all infinite subsets of $\mathbb{N}$), such that for all $\mathbb{P} \in \mathcal{P}$:
\begin{align} \mathbb{P}(\boldsymbol{d}(S)\leq |S\cap I_1^{\mathbb{P}}| \text{ for all } S\in 2^{\mathbb{N}_f}) \geq 1-\alpha. \label{eq:simult_guarantee}\end{align}
% This guarantee allows the user to choose fully data-adaptive rejection sets $R$ while ensuring that 
% % the probability that the number of true discoveries in $R$ is below the lower bound $\bd(R)$, is smaller than or equal to $\alpha$:
% \[
% \mathbb{P}( |R\cap I_1^{\mathbb{P}}| < \boldsymbol{d}(R)) \leq \alpha.
% \]
Clearly, if $\boldsymbol{d}$ always outputs 0, it achieves simultaneous true discovery guarantee. Thus, implicitly, the larger $\boldsymbol{d}$ is, the better (here, larger is meant componentwise; $\boldsymbol{d} \geq \boldsymbol{d}'$ if $\boldsymbol{d}(S) \geq \boldsymbol{d}'(S)$ for all $S \in 2^{\mathbb{N}_f})$. \ifdiff \added{It should be noted that the sets $S$ in \eqref{eq:simult_guarantee} can be considered as fixed because of the quantification \enquote{for all S} (in contrast to classical FDR control), but the procedure $\bd$ depends on the data and thus the probability is taken with regard to $\bd$. However, the simultaneous guarantee of $\bd$  particularly allows to choose query sets $S$ arbitrarily based on the data while ensuring that $\boldsymbol{d}(S)\leq |S\cap I_1^{\mathbb{P}}|$ with probability at least $1-\alpha$.} \else  It should be noted that the sets $S$ in \eqref{eq:simult_guarantee} can be considered as fixed because of the quantification \enquote{for all S} (in contrast to classical FDR control), but the procedure $\bd$ depends on the data and thus the probability is taken with regard to $\bd$. However, the simultaneous guarantee of $\bd$  particularly allows to choose query sets $S$ arbitrarily based on the data while ensuring that $\boldsymbol{d}(S)\leq |S\cap I_1^{\mathbb{P}}|$ with probability at least $1-\alpha$. \fi

$\boldsymbol{d}$ is called  an \textit{online} true discovery procedure if $\boldsymbol{d}(S)$ is measurable with respect to $\mathcal{F}_{\max(S)}$ for all  $S\in 2^{\mathbb{N}_f}$ \citep{fischer2024online}. This ensures that at any time $t\in \mathbb{N}$ the procedure $\bd$ provides a lower bound for the number of false hypotheses \ifdiff \added{(or, equivalently, true discoveries)} \else (or, equivalently, true discoveries) \fi  in every set $S\subseteq \{1,\ldots,t\}$ with $\max(S)=t$ that remains the same no matter how many hypotheses will be tested in the future. \ifdiff \added{See also Figure~\ref{fig:flowchart_online} for an illustration of using online true discovery procedures.} \else See also Figure~\ref{fig:flowchart_online} for an illustration of using online true discovery procedures. \fi

Note that one does not have to consider an infinite number of hypotheses but could just stop at some finite time $i\in \mathbb{N}$ by setting $H_j=\mathcal{P}$ for all $j>i$ and $\bd(S)=\bd(S\cap \{1,\ldots,i\})$ for all $S \in 2^{\mathbb{N}_f}$ with $\max(S)>i$. With this, the online setting becomes classical offline testing in the case of $\mathcal{F}_1=\mathcal{F}_2=\ldots$ and thus online multiple testing can be seen as a true generalization of classical multiple testing \citep{fischer2024online}. Although we are mainly interested in the strict online case $\mathcal{F}_1\subset \mathcal{F}_2\subset \ldots$, this implies that all online procedures in this paper also apply in the offline setting.

\ifdiff \added{Further note that simultaneous true discovery procedures generalize many existing error rates, such as k-FWER or false discovery exceedance (FDX) \citep{goeman2021only}. In particular, this shows that requiring simultaneous control \textit{over all} sets $S\in \nf$ is not conservative. Intuitively, this is because a procedure $\bd$ that assigns lower bounds only to a subset $G$ of $\nf$ can be interpolated to a procedure on the entire set $\nf$ (see \citet{goeman2019simultaneous}).} \else Further note that simultaneous true discovery procedures generalize many existing error rates, such as k-FWER or false discovery exceedance (FDX) \citep{goeman2021only}. In particular, this shows that requiring simultaneous control \textit{over all} sets $S\in \nf$ is not conservative. Intuitively, this is because a procedure $\bd$ that assigns lower bounds only to a subset $G$ of $\nf$ can be interpolated to a procedure on the entire set $\nf$ (see \citet{goeman2019simultaneous}). \fi

% \ifdiff \deleted{Note that simultaneous lower bounds for the number of true discoveries instantly provide bounds for many other error rates such as the k-FWER or false discovery exceedance (FDX). Furthermore, true discovery guarantee is equivalent to controlling the false discovery proportion defined in \eqref{eq:FDP}. For $\alpha=0.5$, this yields the noteworthy special case of median FDP control (as compared to FDR control, which bounds the mean of the FDP). } \else \fi

% In Figure~\ref{fig:flow_theorems}, we connect the results given by the above theorems and the related works \citep{ramdas2020admissible, fischer2024online}. The left-hand side provides a general approach to construct coherent online true discovery procedures based on test martingales and the right-hand side proves that every admissible online procedure must be constructed in that way. Since every step on the left-hand side involves taking the infimum over a large set, this general approach seems computationally inefficient. However, in practice one can avoid this by using the same test martingale for all $\mathbb{P}\in H$, the same anytime-valid test $(\psi_i^J)_{i\in I}$ for all $J$ with $J\cap \{1,\ldots,\sup(I)\}=I$ and constructing $\bphi$ in a way that permits a short-cut for $\bd^{\bphi}$. In the following section, we apply all this to derive computationally efficient online true discovery procedures based on the general approach shown in Figure~\ref{fig:flow_theorems}.

In Figure~\ref{fig:flow_theorems}, we connect the results that will be given in the rest of the section and the related works \citep{ramdas2020admissible, fischer2024online}. The left-hand side provides a general approach to construct coherent online true discovery procedures based on test martingales and the right-hand side proves that every admissible online procedure must be constructed in that way. While most of the upper and lower connections are known, the main insights provided by this paper are the middle connections between increasing families of online intersection tests and anytime-valid tests, connecting two distinct branches of the literature. All terms and results will be clarified in the following subsections from top to bottom, starting with \textit{coherent online procedures}.

\ifarxiv
\begin{figure}[tb]
\begin{tikzpicture}[node distance=2cm, every node/.style={text width=5cm, align=center, rounded corners}, every path/.style={-latex,thick,->,>=stealth}]
    % Nodes
    \node[draw] (COP) {Coherent online procedure $\bd$};
    \node[draw] (IFOIT) [below of=COP, yshift=-0.3cm] {Increasing family of online  intersection tests $\bphi$};
    \node[draw] (AVT) [below of =IFOIT, yshift=-0.3cm] {Anytime-valid test $(\psi_i)_{i\in I}$};
    \node[draw] (TM) [below of=AVT, yshift=-0.3cm] {Test martingale $(M_i^{\mathbb{P}})_{i\in I}$};
    
   \path (COP.340) edge[bend left=30] node[right, xshift=0.1cm, text width=6.5cm,align=left] {Theorem~\ref{prop:coherent_admissible}:\\
   $\phi_S=\mathbbm{1}\{\boldsymbol{d}(S)>0\} $ \\ $\phi_J=\sup\{\phi_S: S\subseteq J, S\in 2^{\mathbb{N}_f} \} $} (IFOIT.30);
    \path (IFOIT.150) edge[bend left=30] node[left, xshift=0.5cm, text width=6cm,align=left] {Online closure principle \citep{fischer2024online}:\\
    $\boldsymbol{d}^{\boldsymbol{\phi}}(S)= \inf\{|S\setminus I| : I\subseteq \mathbb{N}, \phi_I=0 \}$} (COP.200);
    
    \path (IFOIT.330) edge[bend left=30] node[right,yshift=0cm, xshift=0.1cm,align=left] {Theorem~\ref{theo:avTest}:\\
    $\psi^I_i=\phi_{I\cap \{1,\ldots,i\}}$} (AVT.20);
    \path (AVT.160) edge[bend left=30] node[left,yshift=0.1cm, text width=8.3cm, xshift=1.1cm,align=left] {Theorem~\ref{theo:av_test_family}:\\ $\phi_I=\inf\left\{\sup_{i\in I} \psi_i^J:J\cap \{1,\ldots,\sup(I)\}=I\right\}$ } (IFOIT.210);
    
    \path (AVT.340) edge[bend left=30] node[right, align=left, xshift=0.1cm,yshift=-0.05cm] {Fact~\ref{fact}:\\
    $M_i^{\mathbb{P}}= \dfrac{\mathbb{P}(\exists j\in I: \psi_j=1| \mathcal{F}_{i})}{\mathbb{P}(\exists j\in I: \psi_j=1)}$} (TM.20);
    \path (TM.160) edge[bend left=30] node[left, text width=5.5cm, align=left, xshift=0.25cm] {Ville's inequality: \\ $\psi_i=\inf_{\mathbb{P}\in H}\mathrm{sup}_{j\leq i} \mathbbm{1}\left\{M_j^{\mathbb{P}}\geq 1/\alpha\right\}$} (AVT.200);

\end{tikzpicture}
\caption{Illustration of the relation between online true discovery procedures, increasing families of online intersection tests, anytime-valid tests and test martingales. The left path from bottom to top provides a general approach for the construction of coherent online procedures with true discovery guarantee based on test martingales. The right path from top to bottom shows that coherent online procedure implicitly define test martingales. Taken together, the loop shows, in principle, how to improve any given coherent online procedure: we take the right path down, and then take the left path up. If the original procedure is admissible, following the loop  leaves the procedure unchanged.}
\label{fig:flow_theorems}
\end{figure}
\else 
\begin{figure}[tb]
\begin{tikzpicture}[node distance=2cm, every node/.style={text width=5.5cm, align=center, rounded corners}, every path/.style={-latex,thick,->,>=stealth}]
    % Nodes
    \node[draw] (COP) {Coherent online procedure $\bd$};
    \node[draw] (IFOIT) [below of=COP, yshift=-0.6cm] {Increasing family of online  intersection tests $\bphi$};
    \node[draw] (AVT) [below of =IFOIT, yshift=-0.6cm] {Anytime-valid test $(\psi_i)_{i\in I}$};
    \node[draw] (TM) [below of=AVT, yshift=-0.6cm] {Test martingale $(M_i^{\mathbb{P}})_{i\in I}$};
    
   \path (COP.340) edge[bend left=30] node[right, xshift=0.1cm, text width=6.5cm,align=left] {Theorem~\ref{prop:coherent_admissible}:\\
   $\phi_S=\mathbbm{1}\{\boldsymbol{d}(S)>0\} $ \\ $\phi_J=\sup\{\phi_S: S\subseteq J, S\in 2^{\mathbb{N}_f} \} $} (IFOIT.30);
    \path (IFOIT.150) edge[bend left=30] node[left, xshift=0.5cm, text width=7cm,align=left] {Online closure principle \\ \citep{fischer2024online}:\\
    $\boldsymbol{d}^{\boldsymbol{\phi}}(S)= \inf\{|S\setminus I| : I\subseteq \mathbb{N}, \phi_I=0 \}$} (COP.200);
    
    \path (IFOIT.330) edge[bend left=30] node[right,yshift=0cm, xshift=0.1cm,align=left] {Theorem~\ref{theo:avTest}:\\
    $\psi^I_i=\phi_{I\cap \{1,\ldots,i\}}$} (AVT.20);
    \path (AVT.160) edge[bend left=30] node[left,yshift=0.1cm, text width=8.3cm, xshift=0.1cm,align=left] {Theorem~\ref{theo:av_test_family}:\\ $\phi_I=\inf\{\sup_{i\in I} \psi_i^J:J\cap \{1,\ldots,\sup(I)\}=I\}$ } (IFOIT.210);
    
    \path (AVT.340) edge[bend left=30] node[right, align=left, xshift=0.1cm,yshift=-0.05cm] {Fact~\ref{fact}:\\
    $M_i^{\mathbb{P}}\coloneqq \dfrac{\mathbb{P}(\exists j\in I: \psi_j=1| \mathcal{F}_{i})}{\mathbb{P}(\exists j\in I: \psi_j=1)}$} (TM.20);
    \path (TM.160) edge[bend left=30] node[left, text width=6.5cm, align=left, xshift=0.34cm] {Ville's inequality: \\ $\psi_i=\inf_{\mathbb{P}\in H}\mathrm{sup}_{j\leq i} \mathbbm{1}\left\{M_j^{\mathbb{P}}\geq 1/\alpha\right\}$} (AVT.200);

\end{tikzpicture}
\caption{Illustration of the relation between online true discovery procedures, increasing families of online intersection tests, anytime-valid tests and test martingales. The left path from bottom to top provides a general approach for the construction of coherent online procedures with true discovery guarantee based on test martingales. The right path from top to bottom shows that coherent online procedure implicitly define test martingales. Taken together, the loop shows, in principle, how to improve any given coherent online procedure: we take the right path down, and then take the left path up. If the original procedure is admissible, following the loop  leaves the procedure unchanged.}
\label{fig:flow_theorems}
\end{figure}
\fi

% \begin{center}
% \begin{tikzpicture}[node distance=3cm, every node/.style={draw, text width=6cm, align=center, rounded corners}, every path/.style={draw, -latex}]
%     % Nodes
%     \node (H) {Scientist proposes $H_t$};
%     \node (X) [right of=H, xshift=6cm] {Scientist collects / observes data $X_t$};
%     \node (S) [below of=H] {Scientist chooses (possibly multiple) query sets $S\subseteq \{1,\ldots,t\}$};
%     \node (d) [right of=S, xshift=6cm] {Statistician outputs $\boldsymbol{d}(S)$};
    
%     % Arrows
%     \path (H) -- (X);
%     \path (X) -- (S);
%     \path (S) -- (d);
% \end{tikzpicture}
% \end{center}

\subsection{Coherent online true discovery procedures\label{sec:coherence}}
An important property of multiple testing procedures is \textit{coherence} \citep{G, sonnemann1982allgemeine}. A true discovery procedure $\boldsymbol{d}$ is called \textit{coherent} \citep{goeman2021only}, if for all disjoint $S,U\in 2^{\mathbb{N}_f}$,
\begin{align}
\boldsymbol{d}(S)+\boldsymbol{d}(U)\leq \boldsymbol{d}(S \cup U) \leq \boldsymbol{d}(S) + |U|. \label{eq: coherence}
\end{align}
Coherence ensures consistent decisions or bounds of the multiple testing procedure and is therefore a desirable property. A procedure $\boldsymbol{d}$ is admissible if there is no other procedure  $\boldsymbol{\tilde{d}}$ that uniformly improves $\boldsymbol{d}$, 
where $\boldsymbol{\tilde{d}}$ is said to uniformly improve $\boldsymbol{d}$, if $\boldsymbol{\tilde{d}}\geq \boldsymbol{d}$ and $\mathbb{P}(\boldsymbol{\tilde{d}}(S)> \boldsymbol{d}(S))>0$ for at least one $\mathbb{P} \in \mathcal{P}$ and at least one $S\in 2^{\mathbb{N}_f}$. Equivalently, $\boldsymbol{d}$ is admissible if $\boldsymbol{\tilde{d}}\geq \boldsymbol{d}$ implies $\boldsymbol{\tilde{d}}= \boldsymbol{d}$.
% An incoherent procedure could, for example, make a discovery at a leaf node but none at its parent node. 

\citet{goeman2021only} showed that in the offline setting, all admissible true discovery procedures must be coherent. 
However, it turns out that this result is not true in the online case. 
\begin{example}
    Consider a setting with only two hypotheses $H_1$ and $H_2$ with independent p-values $P_1$ and $P_2$ that are uniformly distributed under the null hypothesis. Let $\bd(\{1\})=1$, if $P_1\leq \alpha/2$ (and $0$ otherwise), $\bd(\{2\})=1$, if $P_2\leq \alpha/2$ (and $0$ otherwise) and 
    $$
    \bd(\{1,2\})=\begin{cases}
            2, & (P_1\leq \alpha/2 \land P_2 \leq \alpha) \lor (P_2\leq \alpha/2 \land P_1 \leq \alpha) \\
            1, & (P_1\leq \alpha/2 \land P_2 > \alpha) \lor (P_2\leq \alpha/2 \land P_1 > \alpha) \\
            0, & \text{otherwise}.
        \end{cases}
    $$
Then $\bd$ is an incoherent online procedure, since $\bd(\{1,2\})=2$  and $\bd(\{1\})=0$ if $P_2\leq\alpha/2$ and $\alpha/2<P_1\leq\alpha$. However, it is not possible to improve $\bd$ by a coherent online procedure $\tilde{\bd}\geq \bd$. To see this, note that in order to be coherent, $\tilde{\bd}(\{1\})$ needs to equal $1$, if $P_2\leq \alpha/2 \land P_1 \leq \alpha$. Since  $\tilde{\bd}$ is an online procedure, $\tilde{\bd}(\{1\})$ must not use information about $P_2$, so we must have $\tilde{\bd}(\{1\})=1$, if $P_1 \leq \alpha$. However, this implies that for all $\mathbb{P} \in H_1\cap H_2$ such that true discovery guarantee is not provided, $\mathbb{P}(\{\tilde{\bd}(\{1\}) = 0\} \cap \{\tilde{\bd}(\{2\}) =0 \}) \leq \mathbb{P}(\{P_1> \alpha\} \cap \{P_2>\alpha/2 \})= (1-\alpha)(1-\alpha/2)< 1-\alpha.$
\end{example}
Still, it is sensible to focus on coherent procedures as incoherent results are difficult to interpret.

\subsection{Online closed testing\label{sec:online_closed_test}}

The closure principle was originally proposed for FWER control \citep{marcus1976closed}. \ifdiff \added{\citet{genovese2004stochastic,genovese2006exceedance} proposed a similar approach for simultaneous true discovery guarantee, which was later formulated in terms of closed testing by \citet{goeman2011multiple, goeman2019simultaneous}.} \else \citet{genovese2004stochastic,genovese2006exceedance} proposed a similar approach for simultaneous true discovery guarantee, which was later formulated in terms of closed testing by \citet{goeman2011multiple, goeman2019simultaneous}. \fi
% However, \citet{goeman2011multiple} noted that it can also be used for the more general task of providing a simultaneous (offline) true discovery guarantee. 
For each intersection hypothesis $H_I=\bigcap_{i\in I} H_i$, let $\phi_I$ be an intersection test and $\boldsymbol{\phi}=(\phi_I)_{I\subseteq \mathbb{N}}$ denote the family of intersection tests. Henceforth, it is understood that all tests are $\alpha$-level tests, meaning $\mathbb{P}(\phi_I=1)\leq \alpha$ for all $\mathbb{P}\in H_I$. \citet{genovese2004stochastic,genovese2006exceedance} showed that 
\begin{align}\boldsymbol{d}^{\boldsymbol{\phi}}(S)\coloneqq \inf\{|S\setminus I| : I\subseteq \mathbb{N}, \phi_I=0 \} \qquad (S\in 2^{\mathbb{N}_f})\label{eq:closed_procedure}\end{align} 
provides simultaneous true discovery guarantee over all $S\in 2^{\mathbb{N}_f}$. \ifdiff \added{This follows  from the fact that $\boldsymbol{d}^{\boldsymbol{\phi}}(S)>|S\cap I_1^{\mathbb{P}}|$ for any $S\in \nf$ implies $\phi_{I_0^{\mathbb{P}}}=1$, which happens with probability at most $\alpha$.} \else 
This follows  from the fact that $\boldsymbol{d}^{\boldsymbol{\phi}}(S)>|S\cap I_1^{\mathbb{P}}|$ for any $S\in \nf$ implies $\phi_{I_0^{\mathbb{P}}}=1$, which happens with probability at most $\alpha$. \fi

\ifdiff \added{Since we adopt the closed testing framework by \citet{goeman2011multiple, goeman2019simultaneous}, we also refer to $\boldsymbol{d}^{\boldsymbol{\phi}}$ as closed procedure.} \else Since we adopt the closed testing framework by \citet{goeman2011multiple, goeman2019simultaneous}, we also refer to $\boldsymbol{d}^{\boldsymbol{\phi}}$ as closed procedure. \fi
Technically, the aforementioned works only considered finitely many hypotheses, but the method and its guarantees extend to a countable number of hypotheses \citep{fischer2024online}, and so we present that version for easier connection to the online setting. 
% It should also be noted that \citet{genovese2004stochastic,genovese2006exceedance} introduced an equivalent procedure that was not derived by the closure principle \citep{marcus1976closed}.

\citet{goeman2021only} proved an important result: every (offline) coherent true discovery procedure is equivalent to or uniformly improved by a closed procedure of the form \eqref{eq:closed_procedure}. Therefore, the closure principle allows to construct and analyze all admissible true discovery procedures based on single tests for the intersection hypotheses which are usually much easier to handle.
\citet{fischer2024online} showed that the closed procedure $\boldsymbol{d}^{\boldsymbol{\phi}}$ is an online procedure, if the following assumptions are fulfilled:
\begin{enumerate}
    \item Every $\phi_I$, $I\subseteq \mathbb{N}$ is an \textit{online intersection test}, meaning $\phi_I$ is measurable with respect to $\mathcal{F}_{\sup (I)}$. \label{bull:online}
    \item The family of intersection tests $\boldsymbol{\phi}=(\phi_I)_{I\subseteq \mathbb{N}}$ is \textit{increasing}\footnote{ \citet{fischer2024online} used the term \textit{predictable} for~\eqref{eq:predictability}, but we use \textit{increasing} in order to avoid confusion with the measure-theoretic definition of predictability.}: for all $i\in \mathbb{N}$ and $I\subseteq \{1,\ldots, i\}$,
   \begin{align}\phi_I \leq \phi_{I\cup K} \text{ for all } K\subseteq \{k\in \mathbb{N}: k>i\}.\label{eq:predictability}\end{align}\label{bull:predictability}
\end{enumerate}

\ifdiff \added{This follows from the fact that the closed procedure $\bd^{\bphi}$ for increasing $\bphi$ can be written as}
\begin{align}
    \boldsymbol{d}^{\boldsymbol{\phi}}(S)\coloneqq \inf\{|S\setminus I| : I\subseteq \{1,\ldots, \max(S)\}, \phi_I=0 \} \qquad (S\in 2^{\mathbb{N}_f}). \label{eq:closed_procedure_increasing}
\end{align}
 \else 
This follows from the fact that the closed procedure $\bd^{\bphi}$ for increasing $\bphi$ can be written as
\begin{align}
    \boldsymbol{d}^{\boldsymbol{\phi}}(S)\coloneqq \inf\{|S\setminus I| : I\subseteq \{1,\ldots, \max(S)\}, \phi_I=0 \} \qquad (S\in 2^{\mathbb{N}_f}). \label{eq:closed_procedure_increasing}
\end{align}
\fi

A closed procedure $\bd^{\boldsymbol{\phi}}$, where $\boldsymbol{\phi}$ satisfies the conditions~\ref{bull:online} and~\ref{bull:predictability}, is called an \textit{online closed procedure}. Every online closed procedure is a coherent online procedure, which follows immediately from the same result in the offline case \citep{goeman2021only}.
 \citet{fischer2024online} proved that all online procedures with FWER control can be written as a closed procedure where the intersection tests satisfy these two conditions. Hence we know that the closure principle is admissible for offline true discovery control \citep{goeman2021only} and online FWER control \citep{fischer2024online}, but not yet for coherent online true discovery control. We close this gap by showing that any coherent online procedure can be recovered or improved by an online closed procedure.
 
 % The following proposition generalizes these two results to online true discovery guarantee.
\begin{theorem}\label{prop:coherent_admissible}
   Let $\bd$ be a coherent online true discovery procedure. Define \begin{align}\phi_S=\mathbbm{1}\{\boldsymbol{d}(S)>0\} \ \forall   S\in 2^{\mathbb{N}_f} \quad \text{and} \quad  \phi_J=\sup\{\phi_S: S\subseteq J, S\in 2^{\mathbb{N}_f} \} \ \forall  J\in2^{\mathbb{N}_{-f}}.\label{eq:closed_general}\end{align}
   Then $\bphi=(\phi_I)_{I\subseteq \mathbb{N}}$ is an increasing family of online intersection tests and $\bd^{\bphi}\geq \bd$.
\end{theorem}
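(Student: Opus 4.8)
The plan is to check the three defining properties of an increasing family of online intersection tests---level-$\alpha$ validity, measurability with respect to $\mathcal{F}_{\sup(I)}$, and the increasing condition \eqref{eq:predictability}---and then separately establish the domination $\bd^{\bphi}\geq \bd$. Throughout I would treat finite index sets and infinite index sets separately, since \eqref{eq:closed_general} defines $\phi$ differently in the two cases, the infinite one being a countable supremum of finite tests.

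For validity, fix $\mathbb{P}\in H_I$, so every $i\in I$ is a true null and hence $S\cap I_1^{\mathbb{P}}=\emptyset$ for every finite $S\subseteq I$. When $I=S$ is finite, the single-set consequence of the true discovery guarantee gives $\mathbb{P}(\bd(S)\leq |S\cap I_1^{\mathbb{P}}|)=\mathbb{P}(\bd(S)=0)\geq 1-\alpha$, i.e.\ $\mathbb{P}(\phi_S=1)\leq \alpha$. For infinite $J$ the key point is that I must invoke the \emph{simultaneity} of the guarantee rather than a per-set bound: on the event of probability at least $1-\alpha$ on which $\bd(S)\leq |S\cap I_1^{\mathbb{P}}|=0$ holds for \emph{all} finite $S\subseteq J$ at once, every $\phi_S$ vanishes and hence $\phi_J=0$, giving $\mathbb{P}(\phi_J=1)\leq \alpha$. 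Measurability is then immediate: since $\bd$ is online, $\phi_S=\mathbbm{1}\{\bd(S)>0\}$ is $\mathcal{F}_{\max(S)}=\mathcal{F}_{\sup(S)}$-measurable, and for infinite $J$ the family of finite subsets of $J$ is countable, so $\phi_J$ is a countable supremum of measurable functions while $\sup(J)=\infty$ makes the requirement vacuous.

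For the increasing condition, fix $i$, a necessarily finite $I\subseteq\{1,\ldots,i\}$, and $K\subseteq\{k>i\}$; note $I$ and $K$ are disjoint. The claim is trivial if $\phi_I=0$, so suppose $\phi_I=1$, i.e.\ $\bd(I)>0$. When $K$ is finite I would use the lower coherence bound $\bd(I)\leq \bd(I)+\bd(K)\leq \bd(I\cup K)$ to get $\bd(I\cup K)>0$, hence $\phi_{I\cup K}=1$. When $K$ is infinite, $I$ is itself a finite subset of $I\cup K$, so $\phi_I$ appears as a term in the supremum defining $\phi_{I\cup K}$, whence $\phi_{I\cup K}\geq \phi_I=1$.

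Finally, for $\bd^{\bphi}\geq \bd$, I would show pathwise that every $I$ feasible in the infimum \eqref{eq:closed_procedure}, i.e.\ every $I$ with $\phi_I=0$, satisfies $|S\setminus I|\geq \bd(S)$. The disjoint decomposition $S=(S\cap I)\cup(S\setminus I)$ and the upper coherence bound give $\bd(S)\leq \bd(S\cap I)+|S\setminus I|$, so it suffices to prove $\bd(S\cap I)=0$. This is exactly where $\phi_I=0$ enters: if $I$ is finite then $\bd(I)=0$ and the lower coherence bound forces $\bd(S\cap I)\leq \bd(I)=0$; if $I$ is infinite then $\phi_I=0$ means $\bd(T)=0$ for every finite $T\subseteq I$, and $S\cap I$ is such a $T$. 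Taking the infimum over all feasible $I$ yields $\bd^{\bphi}(S)\geq \bd(S)$. I expect this last step---recognizing that $\phi_I=0$ collapses $\bd$ on \emph{every} finite subset of $I$, and pairing this with the upper coherence inequality---to be the main conceptual obstacle, whereas the infinite-set bookkeeping in the validity and increasing arguments is the main place where care, rather than cleverness, is required.
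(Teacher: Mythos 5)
Your proof is correct and takes essentially the same route as the paper's: the validity, measurability, and increasing properties follow from the guarantee, onlineness, and coherence of $\bd$ exactly as in the paper, and your domination step is just the contrapositive of the paper's argument --- the paper shows via the two coherence inequalities that $\phi_I=1$ whenever $|S\cap I|>|S|-\bd(S)$, while you show that $\phi_I=0$ forces $|S\setminus I|\geq \bd(S)$ using the same decomposition $\bd(S)\leq \bd(S\cap I)+|S\setminus I|$ and $\bd(S\cap I)\leq \bd(I)$. Your write-up simply fills in the finite/infinite-set bookkeeping that the paper leaves implicit.
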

\begin{proof}%[Proof of Theorem~\ref{prop:coherent_admissible}]
      The simultaneous true discovery guarantee of $\bd$ implies that the $\phi_I$ define intersection tests. \ifdiff \added{Furthermore, since $\bd$ is coherent, it follows that $\phi_S \leq \phi_I$ if $S\subseteq I$, which particularly implies that $\bphi$ is increasing. In addition, } \else Furthermore, since $\bd$ is coherent, it follows that $\phi_S \leq \phi_I$ if $S\subseteq I$, which particularly implies that $\bphi$ is increasing. In addition, \fi $\phi_S$ is an online intersection test since $\bd$ is an online procedure. Therefore, it only remains to show that $\bd(S)\leq \bd^{\bphi}(S)$ for all $S\in 2^{\mathbb{N}_f}$. Suppose $\bd(S)=s$. Due to the coherence of $\bd$, we have $\phi_J=1$ for all $J\subseteq S$ with $|J|> |S|-s$. The coherence further implies that $\phi_I=1$ for all $I\subseteq \mathbb{N}$ with $|S\cap I|> |S|-s$ and hence $\bd^{\bphi}(S)\geq s$. 
\end{proof}

With this result, we can focus on online closed procedures when considering coherent online true discovery guarantees. In the following subsection, we introduce our main result and show that to construct online closed procedures, it is necessary to define anytime-valid tests and therefore one must rely on test martingales. This yields the connected graph in Figure~\ref{fig:flow_theorems} and motivates the construction of e-value based online closed procedures, which we will focus on afterwards. 

Note that Theorem~\ref{prop:coherent_admissible} particularly holds in the offline case $\mathcal{F}_1=\mathcal{F}_2=\ldots$ and therefore immediately yields the aforementioned result by \citet{goeman2021only} as a corollary.

\begin{remark}
    There are cases where the closed procedure based on the intersection tests defined in \eqref{eq:closed_general} dominates the original procedure. For example, consider three hypotheses and the coherent procedure $\bd$ with $\bd(\{1,2,3\})=1$,  $\bd(\{1,2\})=1$, $\bd(\{1,3\})=1$, $\bd(\{2,3\})=1$ and $\bd(\{i\})=0$ for $i\in \{1,2,3\}$. The corresponding closed procedure would give the same bounds except for further concluding that $\bd(\{1,2,3\})=2$. This makes sense, since if there is at least one true discovery in $\{1,2\}$, one in $\{1,3\}$ and one in $\{2,3\}$, there should be at least two true discoveries in $\{1,2,3\}$. 
    % This was not noted by \citet{goeman2021only} as they argued that coherent and closed procedures are equivalent in this case (see Corollary 1 in \citep{goeman2021only}).
\end{remark}

\subsection{Admissible coherent online procedures for true discovery guarantees must rely on test martingales\label{sec:test_martingale}}

\ifdiff \added{Anytime-valid tests are used to sequentially test a single null hypothesis with increasing data. A famous example is the sequential probability ratio test (SPRT) by \citet{wald1945sequential} for simple hypotheses, which rejects the null hypothesis as soon as the likelihood ratio of the alternative distribution against the null distribution is above some predefined threshold. In recent years, anytime-valid tests have received renewed attention with a particular focus on general composite hypotheses \citep{ramdas2023game}. For example, \citet{ramdas2020admissible} have shown that every anytime-valid test has a similar form to the aforementioned SPRT, in which the likelihood ratio is replaced by the more general concept of e-processes, which in turn are generalizations of test martingales.} 
\else Anytime-valid tests are used to sequentially test a single null hypothesis with increasing data. A famous example is the sequential probability ratio test (SPRT) by \citet{wald1945sequential} for simple hypotheses, which rejects the null hypothesis as soon as the likelihood ratio of the alternative distribution against the null distribution is above some predefined threshold. In recent years, anytime-valid tests have received renewed attention with a particular focus on general composite hypotheses \citep{ramdas2023game}. For example, \citet{ramdas2020admissible} have shown that every anytime-valid test has a similar form to the aforementioned SPRT, in which the likelihood ratio is replaced by the more general concept of e-processes, which in turn are based on test martingales. \fi 
In this section, we make connections between increasing families of online intersection tests and anytime-valid tests, which then reveal close relations of online procedures with true discovery guarantee and test martingales.

Suppose we have an anytime-valid test $(\psi^I_i)_{i\in I}$ for each intersection hypothesis $H_I$. Following~\cite{ramdas2023game},  $(\psi^I_i)_{i\in I}$ is an anytime-valid test for $H_I$, if $\psi^I_{i}$ is measurable with respect to $\mathcal{F}_i$ and we have $\mathbb{P}(\exists i\in I: \psi^I_i=1)\leq \alpha$ for all $\mathbb{P}\in H_I$.  The following theorem shows how we can use such anytime-valid tests to construct an increasing family of online  intersection tests.

\begin{theorem}\label{theo:av_test_family}
    For all $I\subseteq \mathbb{N}$ let $(\psi_i^I)_{i\in I}$ be an anytime-valid test for $H_I$. Then $\bphi=(\phi_I)_{I\subseteq \mathbb{N}}$, where
    \begin{align}
    \phi_I=\inf\left\{\sup_{i\in I} \psi_i^J:J\cap \{1,\ldots,\sup(I)\}=I\right\}, \label{eq:av_test_family}
    \end{align}
    is an increasing family of online intersection tests. In particular, for infinite $I\subseteq \mathbb{N}$, $\phi_I=\sup_{i\in I} \psi_i^I$.
\end{theorem}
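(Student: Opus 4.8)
The plan is to verify, in turn, the three defining requirements of an increasing family of online intersection tests: that each $\phi_I$ is a valid $\alpha$-level test for $H_I$, that each $\phi_I$ is measurable with respect to $\mathcal{F}_{\sup(I)}$, and that the family satisfies the monotonicity \eqref{eq:predictability}. Throughout I would exploit that every quantity in sight is $\{0,1\}$-valued, so that $\sup_{i\in I}\psi_i^J=\mathbbm{1}\{\exists i\in I:\psi_i^J=1\}$ and an infimum over any nonempty index set equals $0$ exactly when one of its terms is $0$. I also note at the outset that when $I$ is infinite the constraint $J\cap\{1,\dots,\sup(I)\}=I$ forces $J=I$, so that $\phi_I=\sup_{i\in I}\psi_i^I$ and all three claims are then immediate from the anytime-validity of $(\psi_i^I)_{i\in I}$. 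The real work is therefore confined to the finite case, where I write $t=\max(I)=\sup(I)$.

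For the level I would observe that $J=I$ is always admissible in the infimum defining $\phi_I$, so pointwise $\phi_I\le\sup_{i\in I}\psi_i^I$. Hence $\{\phi_I=1\}\subseteq\{\exists i\in I:\psi_i^I=1\}$, and since $(\psi_i^I)_{i\in I}$ is anytime-valid we get $\mathbb{P}(\phi_I=1)\le\mathbb{P}(\exists i\in I:\psi_i^I=1)\le\alpha$ for every $\mathbb{P}\in H_I$. This is the routine part.

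The monotonicity is the structural heart of the definition — the infimum over extensions $J$ is present precisely to force it, since the naive choice $\phi_I=\sup_{i\in I}\psi_i^I$ need not be increasing. Fix $i\in\mathbb{N}$, $I\subseteq\{1,\dots,i\}$ and $K\subseteq\{k:k>i\}$, set $I'=I\cup K$, and show $\phi_I\le\phi_{I'}$ pointwise. Suppose $\phi_{I'}(\omega)=0$; then some admissible $J'$ (that is, with $J'\cap\{1,\dots,\sup(I')\}=I'$) satisfies $\psi_j^{J'}(\omega)=0$ for all $j\in I'$. The key observation is that this same $J'$ is admissible for $\phi_I$: because $\sup(I)=\max(I)\le i\le\sup(I')$ and $K$ contributes no index $\le i$, intersecting $J'\cap\{1,\dots,\sup(I')\}=I'$ down to $\{1,\dots,\sup(I)\}$ yields $J'\cap\{1,\dots,\sup(I)\}=I'\cap\{1,\dots,\sup(I)\}=I$. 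Since $I\subseteq I'$, also $\psi_i^{J'}(\omega)=0$ for all $i\in I$, whence $\phi_I(\omega)\le\sup_{i\in I}\psi_i^{J'}(\omega)=0$. Thus $\phi_{I'}=0$ implies $\phi_I=0$, which is exactly $\phi_I\le\phi_{I'}$.

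The step I expect to be the main obstacle is the online measurability of $\phi_I$ for finite $I$. Each term $\sup_{i\in I}\psi_i^J=\max_{i\in I}\psi_i^J$ is $\mathcal{F}_t$-measurable, being a finite maximum of the $\mathcal{F}_i$-measurable ($i\le t$) indicators $\psi_i^J$; the difficulty is that $\phi_I$ is an infimum over the uncountable collection of admissible extensions $J$, equivalently $\{\phi_I=0\}=\bigcup_J\bigcap_{i\in I}\{\psi_i^J=0\}$ is an uncountable union of $\mathcal{F}_t$-measurable sets, which need not be $\mathcal{F}_t$-measurable in general. I would address this either by assuming (or arranging) that the infimum is cofinal with a countable subfamily of extensions, in which case the union is countable and measurability is immediate; or, more generally, by viewing the index set of admissible tails as the Polish space $\{0,1\}^{\{t+1,t+2,\dots\}}$, checking joint measurability of $(\omega,J)\mapsto\max_{i\in I}\psi_i^J(\omega)$, and invoking the projection theorem so that $\{\phi_I=0\}$ is analytic and hence universally measurable once the filtration is completed. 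Resolving this cleanly, or restricting the class of admissible tests to rule it out, is where the genuine care is needed; the remaining assertions follow from the elementary arguments above.
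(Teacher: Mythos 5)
Your proposal is correct and follows essentially the same route as the paper's proof: validity comes from noting that $J=I$ is admissible in the infimum together with anytime-validity of $(\psi_i^I)_{i\in I}$, and your contrapositive monotonicity argument is exactly the paper's key observation --- that $J\cap \{1,\ldots,\sup(I\cup K)\}=I\cup K$ implies $J\cap \{1,\ldots,\sup(I)\}=I$, so the constraint set shrinks while the index set of the supremum grows --- merely rephrased in $\{0,1\}$-valued language (your remark that an infimum of indicator values equals $0$ iff some term is $0$ is what makes the rephrasing legitimate). The one place you go beyond the paper is measurability: the paper simply asserts that ``it immediately follows that $\phi_I$ is an online intersection test,'' silently passing over the fact that for finite $I$ the infimum ranges over uncountably many admissible extensions $J$, so that $\{\phi_I=0\}$ is a priori an uncountable union of $\mathcal{F}_{\sup(I)}$-measurable sets. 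Your concern is legitimate in full generality, and your proposed remedies (a countable cofinal subfamily of extensions, or joint measurability of $(\omega,J)\mapsto \max_{i\in I}\psi_i^J(\omega)$ over the Polish space of tails plus universal measurability of the resulting analytic set) are the standard fixes. It is worth adding that in the regime the paper actually intends --- using the same anytime-valid test $(\psi_i^J)_i$ for all $J$ with $J\cap\{1,\ldots,\sup(I)\}=I$, as suggested at the start of Section~\ref{sec:seq_e-value} --- the infimum is over a single function and the issue vanishes, which is presumably why the paper treats it as immediate; your version is the more careful statement of what is needed in general.
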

\begin{proof}%[Proof of Theorem~\ref{theo:av_test_family}]
    Since $(\psi_i^I)_{i\in I}$, $I\subseteq \mathbb{N}$, is an anytime-valid test for $H_I$, it immediately follows that $\phi_I$ is an online intersection test. Furthermore, for $i\in \mathbb{N}$, $I\subseteq \{1,\ldots,i\}$ and $K\subseteq \{k\in \mathbb{N}:k>i\}$ it holds that $J\cap \{1,\ldots,\sup(I\cup K)\}=I\cup K$ implies that $J\cap \{1,\ldots,\sup(I)\}=I$. Therefore,
    \begin{align*}
        \phi_I&=\inf\left\{\sup_{i\in I} \psi_i^J:J\cap \{1,\ldots,\sup(I)\}=I\right\} \\
        &\leq \inf\left\{\sup_{i\in I} \psi_i^J:J\cap \{1,\ldots,\sup(I\cup K)\}=I\cup K\right\} \\
        &\leq \inf\left\{\sup_{i\in I\cup K} \psi_i^J:J\cap \{1,\ldots,\sup(I\cup K)\}=I\cup K\right\} = \phi_{I\cup K},
    \end{align*}
    showing that $\bphi$ is increasing. 
\end{proof}

In the following theorem, we show a converse relationship, meaning that increasing families of online intersection tests implicitly define anytime-valid tests for the intersection hypotheses. In addition, we prove that every increasing family of online  intersection tests can be constructed by anytime-valid tests using \eqref{eq:av_test_family}.  

\begin{theorem}
    \label{theo:avTest}
    Let $\boldsymbol{\phi}=(\phi_I)_{I\subseteq \mathbb{N}}$ be an increasing family of online intersection tests.  Define \begin{align}\psi^I_i=\phi_{I\cap \{1,\ldots,i\}}.\label{eq:anytime_from_fam}\end{align} Then $(\psi^I_i)_{i\in I}$ is an anytime-valid test for $H_I$ for all $I\subseteq \mathbb{N}$. Furthermore, let $\tilde{\bphi}=(\tilde{\phi}_I)_{I\subseteq \mathbb{N}}$ be defined by $(\psi^I_i)_{i\in I}$, $I\subseteq \mathbb{N}$, using \eqref{eq:av_test_family}. Then $\tilde{\phi}_S=\phi_S$ for all $S\in \nf$ and $\bd^{\tilde{\bphi}}=\bd^{\bphi}$. 
    
\end{theorem}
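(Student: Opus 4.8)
The plan is to prove the three claims in Theorem~\ref{theo:avTest} in order: first that $(\psi^I_i)_{i\in I}$ defined by $\psi^I_i=\phi_{I\cap\{1,\ldots,i\}}$ is an anytime-valid test for $H_I$; then that the family $\tilde{\bphi}$ reconstructed from these anytime-valid tests via \eqref{eq:av_test_family} agrees with $\bphi$ on all finite sets; and finally that the two closed procedures coincide.

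\textbf{Anytime-validity.} First I would check measurability: since $\phi_J$ is an online intersection test, $\phi_J$ is measurable with respect to $\mathcal{F}_{\sup(J)}$, so $\psi^I_i=\phi_{I\cap\{1,\ldots,i\}}$ is measurable with respect to $\mathcal{F}_{\sup(I\cap\{1,\ldots,i\})}\subseteq \mathcal{F}_i$, as required. For the level guarantee, the key observation is that because $\bphi$ is increasing, the sequence $i\mapsto \psi^I_i=\phi_{I\cap\{1,\ldots,i\}}$ is nondecreasing in $i$ (adding indices to the truncated set can only increase $\phi$). Hence the event $\{\exists i\in I:\psi^I_i=1\}$ equals $\{\phi_I=1\}$ when $I$ is finite (taking $i=\sup(I)$), and for infinite $I$ it equals $\bigcup_{i\in I}\{\phi_{I\cap\{1,\ldots,i\}}=1\}=\{\sup_{i\in I}\phi_{I\cap\{1,\ldots,i\}}=1\}=\{\phi_I=1\}$, using that $\phi_I=\sup_i \phi_{I\cap\{1,\ldots,i\}}$ for infinite $I$ (this matches the ``in particular'' clause of Theorem~\ref{theo:av_test_family}, and should follow from the increasing property together with the definition of $\phi_I$ on infinite sets). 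Since each $\phi_I$ is an $\alpha$-level test, $\mathbb{P}(\exists i\in I:\psi^I_i=1)=\mathbb{P}(\phi_I=1)\leq\alpha$ for all $\mathbb{P}\in H_I$, establishing anytime-validity.

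\textbf{Reconstruction on finite sets.} Next I would show $\tilde{\phi}_S=\phi_S$ for finite $S$. Plugging the $\psi^J_i=\phi_{J\cap\{1,\ldots,i\}}$ into \eqref{eq:av_test_family} gives
\begin{align*}
\tilde{\phi}_S=\inf\left\{\sup_{i\in S}\phi_{J\cap\{1,\ldots,i\}}:J\cap\{1,\ldots,\sup(S)\}=S\right\}.
\end{align*}
For any admissible $J$ in this infimum, since $S$ is finite the supremum over $i\in S$ is attained at $i=\sup(S)$ (monotonicity again), giving $\phi_{J\cap\{1,\ldots,\sup(S)\}}=\phi_S$. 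Thus every term in the infimum equals $\phi_S$, so $\tilde{\phi}_S=\phi_S$. The candidate $J=S$ itself is always admissible, which guarantees the infimum is nonvacuous.

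\textbf{Equality of closed procedures.} Finally, the closed procedure $\bd^{\bphi}(S)=\inf\{|S\setminus I|:I\subseteq\mathbb{N},\phi_I=0\}$ depends on $\bphi$ only through the values $\phi_I$ for those $I$ with nonempty intersection with a given query set $S\in\nf$; more precisely, $|S\setminus I|$ depends only on $S\cap I$, so one may replace $I$ by any $I'$ with $S\cap I'=S\cap I$. The hard part will be arguing carefully that the infimum defining $\bd^{\bphi}(S)$ is unchanged when $\phi$ is replaced by $\tilde{\phi}$, even though $\tilde{\phi}$ and $\phi$ might differ on \emph{infinite} sets $I$. The resolution is that for each finite $S$ and each infinite $I$, the value $|S\setminus I|=|S\setminus(S\cap I)|$ is controlled by the finite set $S\cap I\in\nf$, on which $\tilde\phi$ and $\phi$ agree by the previous step, together with the increasing property ensuring $\phi_I\geq\phi_{S\cap I}$ and analogously for $\tilde\phi$; so the relevant infima over $I$ reduce to infima over finite subsets $I\cap S\subseteq S$, where the two families coincide. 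This gives $\bd^{\tilde\bphi}(S)=\bd^{\bphi}(S)$ for all $S\in\nf$, completing the proof.
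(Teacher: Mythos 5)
Your first two steps coincide with the paper's own proof: measurability via $\mathcal{F}_{\sup(I\cap\{1,\ldots,i\})}\subseteq\mathcal{F}_i$, monotonicity of $i\mapsto\phi_{I\cap\{1,\ldots,i\}}$ from the increasing property \eqref{eq:predictability}, and the collapse of the infimum in \eqref{eq:av_test_family} because every admissible $J$ satisfies $J\cap\{1,\ldots,i\}=S\cap\{1,\ldots,i\}$ for all $i\leq\sup(S)$. One caveat in your validity step: for infinite $I$ you assert $\phi_I=\sup_{i\in I}\phi_{I\cap\{1,\ldots,i\}}$, citing the ``in particular'' clause of Theorem~\ref{theo:av_test_family}; but that clause concerns families \emph{constructed} from anytime-valid tests, whereas here $\bphi$ is an arbitrary increasing family, and \eqref{eq:predictability} only yields the one-sided bound $\sup_{i\in I}\phi_{I\cap\{1,\ldots,i\}}\leq\phi_I$ (a test of an infinite intersection may reject on an event where no finite truncation does). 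Fortunately anytime-validity needs only this inequality, so your first claim survives with the weaker justification.

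The genuine gap is in your last step. The inequality $\phi_I\geq\phi_{S\cap I}$ does \emph{not} follow from the increasing property: \eqref{eq:predictability} only compares $\phi_{I'}$ with $\phi_{I'\cup K}$ when all elements of $K$ lie strictly beyond $I'$, so you may delete indices only from beyond $\max$ of the retained set, not interleaved ones. Concretely, a realization with $\phi_{\{1,2\}}=0$ and $\phi_{\{2\}}=1$ is compatible with an increasing family (the property only forces $\phi_{\{1\}}\leq\phi_{\{1,2\}}$, saying nothing about $\phi_{\{2\}}$ versus $\phi_{\{1,2\}}$); taking $S=\{2\}$ and $I=\{1,2\}$, your proposed reduction to subsets of $S$ would return the value $1$, while the true infimum is $0$, attained at $I=\{1,2\}$. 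The repair is the truncation argument the paper actually uses (compare the proof of Theorem~\ref{theo:general_alg}, which restricts to $I\subseteq\{1,\ldots,t\}$): for any $I$ with $\phi_I=0$, set $I'=I\cap\{1,\ldots,\max(S)\}$; then $I=I'\cup\bigl(I\cap\{k\in\mathbb{N}:k>\max(S)\}\bigr)$ is a future extension of $I'$, so \eqref{eq:predictability} gives $\phi_{I'}\leq\phi_I=0$, while $|S\setminus I'|=|S\setminus I|$. Hence both $\bd^{\bphi}(S)$ and $\bd^{\tilde{\bphi}}(S)$ are infima over \emph{finite} $I'\subseteq\{1,\ldots,\max(S)\}$, and since you proved $\tilde{\phi}_S=\phi_S$ for \emph{all} $S\in\nf$ (not merely subsets of the query set), the equality $\bd^{\tilde{\bphi}}=\bd^{\bphi}$ follows. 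So the gap is local and repairable, but as written the reduction ``to finite subsets $I\cap S\subseteq S$'' is invalid.
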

\begin{proof}%[Proof of Theorem~\ref{theo:avTest}]
    Since  $\phi_{I\cap \{1,\ldots,i\}}$ is an online intersection test, we have that $\psi^I_i$ is measurable with respect to $\mathcal{F}_i$. Furthermore, since $\boldsymbol{\phi}$ is increasing, it holds that
    $\mathbb{P}(\exists i\in I: \psi^I_i =1)= \mathbb{P}(\phi_I =1)\leq \alpha$ for all $\mathbb{P}\in H_I$. Hence, $(\psi^I_i)_{i\in I}$ is an anytime-valid test for $H_I$ with respect to the filtration $(\mathcal{F}_i)_{i\in I}$. Now let $\tilde{\bphi}$ be defined by \eqref{eq:av_test_family}. Then for all $S\in \nf$,
    \begin{align*}
        \tilde{\phi}_S&=\inf\left\{\sup_{i\in S} \psi_i^J:J\cap \{1,\ldots,\sup(S)\}=S\right\} \\
        &=\inf\left\{\sup_{i\in S} \phi_{J\cap \{1,\ldots,i\} }:J\cap \{1,\ldots,\sup(S)\}=S\right\} \\
        &=\sup_{i\in S} \phi_{S\cap \{1,\ldots,i\} }=\phi_S,
    \end{align*}
    where the last equality follows since $\bphi$ is increasing. Note that this implies $\bd^{\tilde{\bphi}}=\bd^{\bphi}$, since $\tilde{\bphi}$ and $\bphi$ are increasing and therefore $\bd^{\tilde{\bphi}}(S)$ and $\bd^{\bphi}(S)$, $S\in \nf$, can be determined solely based on $\tilde{\phi}_I$ and $\phi_I$, respectively, with $I\subseteq S$.
\end{proof}

Together, Theorems~\ref{prop:coherent_admissible} and~\ref{theo:avTest} imply that in order to define admissible coherent online true discovery procedures, we need to construct anytime-valid tests for the intersection hypotheses. In addition, they imply that every coherent online true discovery procedure implicitly defines nontrivial anytime-valid tests by \eqref{eq:closed_general} and \eqref{eq:anytime_from_fam} for each intersection hypothesis. These results are particularly interesting, since \citet{ramdas2020admissible} gave a precise characterization of anytime-valid tests, proving that every anytime-valid test can be reconstructed or uniformly improved using test martingales. To state their result more precisely, we need to introduce some terminology.

% We capture this in the following corollary for easier future reference.

% \begin{corol}\label{corol:disc_anytime_valid}
%     Every coherent online true discovery procedure $\bd$ defines an online closed procedure $\bd^{\bphi}$, $\bphi=(\phi_I)_{I\subseteq \mathbb{N}}$, by \eqref{eq:closed_general} such that $\bd^{\bphi}$ either equals or uniformly improves $\bd$ and $(\phi_{I\cap \{1,\ldots,i\}})_{i\in I}$, $I\subseteq \mathbb{N}$, is an anytime-valid test for $H_I$.
% \end{corol}

% Let $\boldsymbol{d}$ be an admissible online procedure with true discovery control. There exists a family of online intersection tests $\boldsymbol{\phi}=(\phi_I)_{I\subseteq \mathbb{N}}$ such that each $\phi_I=(\phi_I^{i})_{i\in I}$ is an anytime-valid test for $H_I$ with respect to $(\mathcal{F}_i)_{i\in I}$ and  $\boldsymbol{d}^{\boldsymbol{\phi}}(S)=\boldsymbol{d}(S)$  for all $S\in 2^{\mathbb{N}_f}$. 

An anytime-valid test $(\tilde{\psi}_i)_{i\in I}$ uniformly improves $(\psi_i)_{i\in I}$, if $\tilde{\psi}_i\geq \psi_i$ for all $i\in I$ and $\mathbb{P}(\tilde{\psi}_i> \psi_i)>0$ for some $i\in I$ and $\mathbb{P}\in \mathcal{P}$. Furthermore, a nonnegative process $(M_i)_{i\in I \cup \{0\}}$ adapted to the filtration $(\mathcal{F}_i)_{i\in I \cup \{0\}}$, where $\mathcal{F}_{0}=\emptyset$, is a test (super)martingale for $\mathbb{P}$, if $M_0\stackrel{(\leq)}{=}1$ and $\mathbb{E}_{\mathbb{P}}[M_i|\mathcal{F}_{i-}]\stackrel{(\leq)}{=} M_{i-}$ for all $i\in I$, where $i-=\max\{j\in I\cup\{0\}:j<i\}$. We call $(M_i)_{i\in I\cup\{0\}}$ a test (super)martingale for a null hypothesis $H_I$, if the above holds for all $\mathbb{P}\in H_I$. In the following we formally state the result by \citet{ramdas2020admissible} (adapted to our setup) and provide a self-contained proof.

\begin{fact}\label{fact}
    Let $(\psi_i)_{i\in I}$ be an anytime-valid test for $H_I$, $I\subseteq \mathbb{N}$, and define for all $\mathbb{P}\in H_I$:
    $$
        M_i^{\mathbb{P}}\coloneqq \frac{\mathbb{P}(\exists j\in I: \psi_j=1| \mathcal{F}_{i})}{\mathbb{P}(\exists j\in I: \psi_j=1)} \quad (i\in I\cup \{0\})
    $$
    with $M_i^{\mathbb{P}}=1$ if the denominator equals zero.
    Then $(M_i^{\mathbb{P}})_{i\in I\cup \{0\}}$ is a test martingale for $\mathbb{P}$. Furthermore, let 
    $$
    \tilde{\psi}_i=\inf_{\mathbb{P}\in H_I} \sup_{j\leq i} \mathbbm{1}\{M_j^{\mathbb{P}}\geq 1/\alpha\} \quad (i\in I).
    $$
    Then $(\tilde{\psi}_i)_{i\in I}$ is an anytime-valid test for $H_I$ and either equals or uniformly improves $(\psi_i)_{i\in I}$.
\end{fact}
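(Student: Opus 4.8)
The plan is to handle the two assertions of the Fact separately: the martingale claim is a soft consequence of the tower property, while the ``equals or improves'' claim combines Ville's inequality (for validity) with a pointwise identification of the rejection event (for improvement).

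For the martingale claim, I would abbreviate $A := \{\exists j\in I : \psi_j = 1\}$, the event that the anytime-valid test ever rejects, and recognize $M_i^{\mathbb{P}} = \mathbb{P}(A\mid\mathcal{F}_i)/\mathbb{P}(A)$ as nothing but the normalized Doob (Lévy) martingale of the indicator $\mathbbm{1}_A$. The tower property gives $\mathbb{E}_{\mathbb{P}}[\mathbb{P}(A\mid\mathcal{F}_i)\mid\mathcal{F}_{i-}] = \mathbb{P}(A\mid\mathcal{F}_{i-})$ since $\mathcal{F}_{i-}\subseteq\mathcal{F}_i$, and dividing by the constant $\mathbb{P}(A)$ preserves this equality; nonnegativity is immediate, and $M_0^{\mathbb{P}}=1$ because $\mathcal{F}_0$ is trivial, so $\mathbb{P}(A\mid\mathcal{F}_0)=\mathbb{P}(A)$. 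The degenerate case $\mathbb{P}(A)=0$ is absorbed by the stated convention $M^{\mathbb{P}}\equiv 1$. This already shows $(M_i^{\mathbb{P}})_{i\in I\cup\{0\}}$ is a test martingale for $\mathbb{P}$.

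For the second assertion I would first establish anytime-validity of $(\tilde{\psi}_i)$ and then the improvement. Validity is the easy half: for any fixed $\mathbb{Q}\in H_I$ the infimum defining $\tilde{\psi}_i$ is dominated by its $\mathbb{P}=\mathbb{Q}$ term, so $\{\exists i: \tilde{\psi}_i=1\}\subseteq\{\exists j\in I: M_j^{\mathbb{Q}}\geq 1/\alpha\}$, and Ville's inequality applied to the nonnegative martingale $(M_j^{\mathbb{Q}})$ with unit initial value yields $\mathbb{Q}(\exists j: M_j^{\mathbb{Q}}\geq 1/\alpha)\leq\alpha$. The improvement $\tilde{\psi}_i\geq\psi_i$ then rests on one clean observation: the event $\{\psi_i=1\}$ is $\mathcal{F}_i$-measurable and contained in $A$, so $\mathbb{P}(A\mid\mathcal{F}_i)=1$ on $\{\psi_i=1\}$; combined with $\mathbb{P}(A)\leq\alpha$ (anytime-validity of the \emph{original} test at $\mathbb{P}\in H_I$) this forces $M_i^{\mathbb{P}}=1/\mathbb{P}(A)\geq 1/\alpha$ there, and this holds for every $\mathbb{P}\in H_I$ simultaneously, so the infimum $\tilde{\psi}_i$ equals $1$ whenever $\psi_i=1$.

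The main obstacle is not any of these computations but the measure-theoretic bookkeeping forced by the infimum over the (possibly uncountable) composite null $H_I$. Two points require care. First, each $\mathbb{P}(A\mid\mathcal{F}_i)$ is defined only up to a $\mathbb{P}$-null set, so the identity ``$\mathbb{P}(A\mid\mathcal{F}_i)=1$ on $\{\psi_i=1\}$'' must be upgraded from a bare $\mathbb{P}$-a.s.\ statement; I would do this by explicitly selecting the version set to $1$ on the $\mathcal{F}_i$-measurable set $\bigcup_{j\leq i}\{\psi_j=1\}\subseteq A$, which alters nothing a.s.\ and hence preserves both adaptedness and the martingale identities. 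Second, degenerate measures with $\mathbb{P}(A)=0$ interact awkwardly with the infimum (under the convention $M^{\mathbb{P}}\equiv 1$ such a term can collapse $\tilde{\psi}$ to the trivial test), so the improvement should be read under a fixed data-generating $\mathbb{P}$, i.e.\ as $\tilde{\psi}_i\geq\psi_i$ holding $\mathbb{P}$-almost surely for each $\mathbb{P}\in\mathcal{P}$ (noting that on $\{\psi_i=1\}$ one automatically has $\mathbb{P}(A)>0$, so the relevant term is nondegenerate), rather than as a surely-everywhere statement. Finally, I would remark that measurability of $\tilde{\psi}_i$ is no worse than usual: it is $\{0,1\}$-valued and equals the indicator of an intersection of $\mathcal{F}_i$-measurable events, the standard setting in which such infima are handled.
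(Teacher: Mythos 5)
Your proof is correct and takes essentially the same route as the paper's: the tower property identifies $(M_i^{\mathbb{P}})$ as a normalized Doob martingale, Ville's inequality gives anytime-validity of $(\tilde{\psi}_i)$, and the key improvement step is identical --- on $\{\psi_i=1\}$ one has $\mathbb{P}(\exists j\in I:\psi_j=1\mid\mathcal{F}_i)=1$, so $M_i^{\mathbb{P}}=1/\mathbb{P}(\exists j\in I:\psi_j=1)\geq 1/\alpha$ by validity of the original test, simultaneously for all $\mathbb{P}\in H_I$. Your extra measure-theoretic care (selecting the version of the conditional probability that equals $1$ on $\bigcup_{j\leq i}\{\psi_j=1\}$, and flagging that the convention $M^{\mathbb{P}}\equiv 1$ when $\mathbb{P}(\exists j\in I:\psi_j=1)=0$ can collapse the infimum) goes beyond the paper's one-line argument, which asserts $M_i^{\mathbb{P}}\geq 1/\alpha$ ``for all $\mathbb{P}\in H_I$'' and silently glosses over exactly these edge cases; only your parenthetical claim that $\psi_i=1$ ``automatically'' forces $\mathbb{P}(A)>0$ is imprecise, since a degenerate $\mathbb{P}'\in H_I$ in the infimum need not charge the rejection event, but this affects the same corner case the paper itself leaves untreated.
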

\begin{proof}%[Proof of Fact~\ref{fact}]
The tower property for conditional expectations implies that $(M_i^{\mathbb{P}})_{i\in I\cup \{0\}}$ is a test martingale for $\mathbb{P}$ and Ville's inequality shows that $(\tilde{\psi}_i)_{i\in I}$ is an anytime-valid test for $H_I$. Furthermore, $\psi_i=1$ implies that $\mathbb{P}(\exists j\in I: \psi_j=1| \mathcal{F}_{i})=1$ and therefore $M_i^{\mathbb{P}}\geq 1/\alpha$ for all $\mathbb{P}\in H_I$.
\end{proof}

Fact~\ref{fact} closes the loop illustrated in Figure~\ref{fig:flow_theorems}. This particularly shows that every coherent online true discovery procedure $\bd$ implicitly constructs test martingales for the intersection hypotheses. Furthermore, taking the left path in Figure~\ref{fig:flow_theorems} with these test martingales always yields a procedure which either equals or uniformly improves the original procedure $\bd$.

\section{Online true discovery guarantee with sequential e-values\label{sec:seq_e-value}}

In the previous section we showed that we need to construct anytime-valid tests, and thus test martingales, for the intersection hypotheses when constructing  coherent online procedures with a true discovery guarantee. This general martingale-based approach is illustrated in the left path from bottom to top in Figure~\ref{fig:flow_theorems}. Since every step involves taking the infimum over a large set, it seems computationally inefficient. However, in practice one can avoid this by using the same test martingale for all $\mathbb{P}\in H$, the same anytime-valid test $(\psi_i^J)_{i\in I}$ for all $J$ with $J\cap \{1,\ldots,\sup(I)\}=I$ and constructing $\bphi$ in a way that permits a short-cut for $\bd^{\bphi}$. In this section, we apply all this to derive a computationally efficient and powerful online true discovery procedure based on sequential e-values. We first introduce our general \texttt{SeqE-Guard} algorithm (Section~\ref{sec:seqE-Guard}), then show that it allows to uniformly improve all existing online true discovery procedures by using specific sequential e-values (Section~\ref{sec:katse}) and finally analyze its behavior when using GRO e-values (Section~\ref{sec:gro}).

\subsection{The \texttt{SeqE-Guard} algorithm\label{sec:seqE-Guard}}

Let $(M_t)_{t\in \mathbb{N}_0}$ be a test supermartingale with respect to $(\mathcal{F}_t)_{t\in \mathbb{N}_0}$ for some hypothesis $H$. We can break down $(M_t)_{t\in \mathbb{N}_0}$ into its individual factors $E_t=\frac{M_t}{M_{t-1}}$, $t>0$, with the convention $0/0=0$. Due to the supermartingale property,  $E_t$ is nonnegative, measurable with respect to $\mathcal{F}_t$ and \begin{align}\mathbb{E}_{\mathbb{P}}[E_t|\mathcal{F}_{t-1}]=\mathbb{E}_{\mathbb{P}}[M_t/M_{t-1}|\mathcal{F}_{t-1}]=\mathbb{E}_{\mathbb{P}}[M_t|\mathcal{F}_{t-1}]/M_{t-1} \leq 1 \quad (\mathbb{P}\in H). \label{eq:testmartingale}\end{align} 
Hence, each of these random variables $E_t$ is an e-value for $H$ conditional on the past, sometimes called a sequential e-value in the literature \citep{vovk2021values}. Thus, every test supermartingale can be written as the product of sequential e-values. Conversely, every product of sequential e-values defines a test supermartingale (just multiply $M_{t-1}$ on both sides of \eqref{eq:testmartingale}). For these reasons, sequential e-values are potentially the perfect tool to define online closed procedures, which is why we will focus on them in the following.

% One recently popularized approach for obtaining anytime-valid tests is based on the multiplication of e-values \citep{shafer2021testing, grunwald2020safe, ramdas2023game}, where an e-value $E$ for a hypothesis $H$ is a nonnegative random variable such that $\mathbb{E}_{\mathbb{P}}[E]\leq 1$ for all $\mathbb{P}\in H$. In this section, we use this to define a new online true discovery procedure based on the online closure principle (Section~\ref{sec:online_closed_test}). 
% For this, we 
Assume that for each hypothesis $H_t$ an e-value $E_t$ is available and the e-values $(E_t)_{t\in \mathbb{N}}$ are sequentially valid with respect to $(\mathcal{F}_t)_{t\in \mathbb{N}_0}$,  meaning $E_t\in \mathcal{F}_t$ and $\Ex_{\mathbb{P}}[E_t|\mathcal{F}_{t-1}]\leq 1$ for all $\mathbb{P}\in H_t$. For each intersection hypothesis $H_I$, we construct a process $(W_I^t)_{t\in I\cup \{0\}}$ with $W_I^0=1$ and 
\begin{align}
W_I^t=\prod_{i\in I\cap \{1,\ldots, t\}} E_i \qquad (t\in I). \label{eq:seq_e}
\end{align}
% where $\lambda_j\in [0,1]$ is measurable with respect to $\mathcal{F}_{j-1}$. Note that $W_I^i$ is only different from $W_I^{i-1}$, if $i\in I$, however, we use this definition as this allows $\lambda_j$ to use all information up to step $j-1$. 
% With this, we can use the same 
% $\lambda_1,\lambda_2,\ldots $ for each intersection hypothesis. 
Following the above argumentation, $(W_I^t)_{t\in I\cup \{0\}}$ is a test supermartingale for $H_I$ with respect to $(\mathcal{F}_t)_{t\in I\cup \{0\}}$. By Ville's inequality~\citep{howard2020time}, it follows that \begin{align}\phi_I=\mathbbm{1}\{\exists t\in I: W_I^t\geq 1/\alpha\}=\mathbbm{1}\left\{\sup_{t\in I} W_I^t \geq 1/\alpha\right\}\label{eq:intersection_test_mart}\end{align} is an intersection test. Furthermore, $\phi_I$ is an online intersection test and $\boldsymbol{\phi}=(\phi_I)_{I\subseteq \mathbb{N}}$ is increasing such that the closed procedure $\bd^{\bphi}$  \eqref{eq:closed_procedure} is indeed an online procedure (see Section~\ref{sec:online_closed_test}). Note that due to the supremum involved in Ville's inequality, the intersection tests $\phi_I$, $I\subseteq \mathbb{N}$, are not symmetric, and thus very different from the ones considered by \citet{tian2023large}.  

Sequential e-values arise naturally in a variety of settings \citep{shafer2021testing, grunwald2020safe, ramdas2023game, vovk2024merging}. For example, note that many online multiple testing tasks are performed in an adaptive manner. That means, the hypotheses to test, the design of a study or the strategy to calculate an e-value, depend on the data observed so far. This is very natural, since the hypotheses are tested over time and therefore the statistician performing these tests automatically learns about the context of the study and the true distribution of the data during the testing process. Indeed, to avoid such data-adaptive designs one would need to ignore all the previous data when making design decisions for the future testing process and therefore could be required to prespecify all hypotheses that are going to be tested, the sample size for these tests, decide which tests to apply and fix many further design parameters at the very beginning of the testing process. This would take away much of the flexibility of online multiple testing procedures. However, if one uses past information for the design of an e-value and calculates the e-value on the same data that design information is based on, one would need to have knowledge about the conditional distribution of the e-value given that design information. This is often not available. Therefore, one usually uses independent and fresh data for each of the e-values. In this way, no matter in which way an e-value depends on the past data, it remains valid conditional on it. Consequently, we obtain sequential e-values well suited to our online true discovery procedure.

\ifarxiv
\paragraph{Computational shortcuts.}
\else 
\textbf{Computational shortcuts.}\hspace{0.3cm }
\fi
One problem with procedures based on the (online) closure principle is that at each step $t\in \mathbb{N}$ up to $2^{t-1}$ additional intersection tests must be considered. However, for specific intersection tests this number can be  reduced drastically. In Algorithm~\ref{alg:general} we introduce a short-cut for the intersection tests in \eqref{eq:intersection_test_mart} that only requires one calculation per individual hypothesis but provides the same bounds as the entire closed procedure would. We call this Algorithm \texttt{SeqE-Guard}  (guarantee for true discoveries with sequential e-values).
% For this, we use that $\tilde{E}_i=(1+\lambda_i(E_i-1))$, $i\in I$, define sequential e-values as well such that we can consider the case $\lambda_i=1$ for all $i\in \mathbb{N}$ without loss of generality.

The \texttt{SeqE-Guard} algorithm will provide a sequence of lower bounds $d_t$ on the number of true discoveries in an adaptively chosen sequence of query sets $S_t \subseteq \{1,\dots,t\}$, simultaneously over all $t$. 
% We call $S_t$ the ``rejection set'' at time $t$ even though there are no explicit \emph{rejections}: the index set $R_t$ is simply the set of hypotheses that appear to be interesting to the statistician up until time $t$, and is based on all the data they have seen until time $t$ (so it could have alternatively been named the ``interest set'' or ``query set''). 
The $S_t$ and $d_t$ sequences are (nonstrictly) increasing with $t$ since by default the statistician only decides at step $t$ whether to include the index $t$ within $S_t$ or not (based on $E_1,\dots,E_t$), but typically does not omit hypotheses that were deemed interesting at an earlier point (see Remark below). So it is understood below that $S_t \supseteq S_{t-1}$ for all $t$. As the multiple testing process continues, the statistician can report/announce one or more $(S_t,d_t)$ pairs that they deem interesting thus far, and our theorem below guarantees that all such announcements will be accurate with high probability.

\ifarxiv
\paragraph{A short description of \texttt{SeqE-Guard}.}
\else 
\textbf{A short description of \texttt{SeqE-Guard}.}\hspace{0.3cm }
\fi
At each step $t\in \mathbb{N}$, the statistician  decides based on the e-values $E_1,\ldots, E_t$ whether the index $t$ should be included in the query set $S_t$. If it is to be included, \texttt{SeqE-Guard} calculates the product of all e-values in $S_t$ (that were not already excluded) and all e-values smaller than $1$ that are not in $S_t$. If that product is larger than or equal to $1/\alpha$, we increase the true discovery bound by $1$ and \emph{exclude} the current largest e-value from the future analysis.  
% In the algorithm we denote by $A$ the index set of currently queried hypotheses that were not excluded from the future analysis and by $U$ the index set of currently non-queried hypotheses with e-values smaller than $1$.
In Supplementary Material~\ref{sec:example} we provide a simple calculation example for the  \texttt{SeqE-Guard} algorithm.

\begin{theorem}\label{theo:general_alg}
    Let $E_1, E_2,\ldots$ be sequential e-values for $H_1,H_2,\ldots$ . The true discovery bounds $d_t$ for $S_t$ defined by \texttt{SeqE-Guard} (Algorithm~\ref{alg:general}) are the same as the ones obtained by the online closure principle with the intersection tests in \eqref{eq:intersection_test_mart}.
    In particular, $S_t$ and $d_t$ satisfy for all $\mathbb{P}\in \mathcal{P}$: $\mathbb{P}(d_t\leq |S_t\cap I_1^{\mathbb{P}}| \text{ for all } t\in \mathbb{N})\geq 1-\alpha$. 
\end{theorem}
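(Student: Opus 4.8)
The plan is to prove the combinatorial identity $d_t = \bd^{\bphi}(S_t)$ for every $t$, where $\bphi=(\phi_I)_{I\subseteq\mathbb{N}}$ is the increasing family of online intersection tests from \eqref{eq:intersection_test_mart}; the probabilistic guarantee then comes for free. Indeed, as established in Section~\ref{sec:online_closed_test}, $\bphi$ is an increasing family of online intersection tests, so the closed procedure $\bd^{\bphi}$ enjoys the simultaneous true discovery guarantee over all of $2^{\mathbb{N}_f}$. On the event of probability at least $1-\alpha$ on which $\bd^{\bphi}(S)\leq |S\cap I_1^{\mathbb{P}}|$ holds for \emph{every} finite $S$, it holds in particular for the realized (data-dependent) sets $S_t$; hence once the identity $d_t=\bd^{\bphi}(S_t)$ is in hand, the ``in particular'' claim is immediate.

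It remains to establish the identity, and the first step is to rewrite the closed bound as a purely combinatorial optimization. Writing $\bd^{\bphi}(S_t)=|S_t|-\max\{|S_t\cap I|:\phi_I=0\}$ and recalling that $\phi_I=0$ iff every prefix product $W_I^s=\prod_{i\in I,\,i\leq s} E_i$ stays below $1/\alpha$, I would reduce the maximization to a choice of query indices only. Adding to $I$ a non-queried index with $E<1$ multiplies all affected prefix products by a factor $\leq 1$, so it never destroys feasibility nor decreases $|S_t\cap I|$; removing a non-queried index with $E\geq 1$ likewise preserves feasibility; and any index exceeding $t=\max S_t$ only imposes extra constraints without touching $|S_t\cap I|$. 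Thus the optimum is attained by some $I=A\cup U_t$, where $U_t$ is exactly the set of small ($<1$) non-queried indices tracked by the algorithm and $A\subseteq S_t$, and the problem becomes: maximize $|A|$ subject to all prefix products of $A\cup U_t$ lying below $1/\alpha$.

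For the easy inequality $\bd^{\bphi}(S_t)\leq d_t$ I would exhibit the feasible set $I^*=A_t\cup U_t$ produced by the algorithm. The key observation is that every excluded e-value is $\geq 1$: an exclusion occurs only when the running product reaches $1/\alpha\geq 1$, which forces its maximal factor to be at least the geometric mean, hence at least $1$. Consequently every excluded index is a query index, and deleting such an index from any prefix can only decrease that prefix product; combined with the algorithm's invariant that the running product is reset below $1/\alpha$ after each exclusion, this shows every prefix product of $I^*$ stays below $1/\alpha$, i.e.\ $\phi_{I^*}=0$, while $|S_t\setminus I^*|$ equals the number of exclusions $d_t$.

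The hard direction, and the main obstacle, is optimality: $\bd^{\bphi}(S_t)\geq d_t$, i.e.\ no feasible $I$ can retain more than $|A_t|$ query indices. I would prove this by induction on $t$ with a strengthened hypothesis stating that $A_t$ has \emph{maximum} cardinality among feasible retained sets \emph{and}, among all feasible sets of that maximum cardinality, $A_t\cup U_t$ attains the \emph{minimum} total product. The min-product clause drives the induction: at a step where adding the queried index $t$ triggers an exclusion, a competitor retaining one more query index would, upon deleting $t$, give a feasible maximum-cardinality set at time $t-1$ whose total product is at least the greedy minimum $\prod_{A_{t-1}\cup U_{t-1}}E_i$; combined with that competitor's feasibility at time $t$ (which bounds its product times $E_t$ below $1/\alpha$), this forces $\prod_{A_{t-1}\cup U_{t-1}}E_i\cdot E_t<1/\alpha$, contradicting the condition that triggered the exclusion. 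Re-establishing the min-product clause after an exclusion—the exchange step showing that dropping the current largest factor leaves the smallest feasible product—is the most delicate part, whereas the non-exclusion and non-queried cases update the two invariants routinely.
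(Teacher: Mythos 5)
Your overall architecture is the same as the paper's: the reduction of competitors to the form $V\cup U_t$ with $V\subseteq S_t$ (non-queried e-values below $1$ only help, those at least $1$ never hurt), the feasibility of the algorithm's own set $A_t\cup U_t$ (excluded e-values are at least $1$ since the product at a trigger is at least $1/\alpha\geq 1$, and the running product stays below $1/\alpha$ — for the latter you should still note the one-line reason: the excluded largest factor is at least $E_t$, so the post-exclusion product is at most the pre-addition product), and a greedy-minimality property of ``drop the largest'' driving the hard direction; the probabilistic conclusion via Ville's inequality and the increasing family $\bphi$ is likewise identical. Your cardinality argument at a trigger time is correct as sketched: a competitor retaining $|A_{t-1}|+1$ query indices with $t\in V$ reduces, upon deleting $t$, to a \emph{maximum-cardinality} feasible set at time $t-1$, which your min-product clause does control, and the case $t\notin V$ contradicts maximality at $t-1$ directly.

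The genuine gap is exactly the step you defer as ``the most delicate part'': re-establishing the min-product clause after an exclusion, and as stated your induction hypothesis is too weak to close it. After the exclusion at time $t$ the new maximum cardinality is $m=|A_{t-1}|$, and you must show every feasible $V\cup U_t$ with $|V|=m$ has product at least the greedy one. If $t\notin V$ this follows from your hypothesis together with $E_{i^*}\geq E_t$; but if $t\in V$, then $V\setminus\{t\}$ is a feasible set at time $t-1$ of cardinality $m-1$, strictly \emph{below} the previous maximum, and your clause (ii) — minimality of the product among feasible sets of \emph{maximum} cardinality — says nothing about such sets (sets with one fewer element can have far smaller product, so the claim cannot simply be cited). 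The paper sidesteps this trap by phrasing the greedy minimality not over product-feasible sets but over the purely combinatorial class of drop patterns constrained only by the cardinality conditions at the trigger times, namely all $J\subseteq S_{t_{\tilde{m}-1}}$ with $|J|=|S_{t_{\tilde{m}-1}}|-(\tilde{m}-1)$ and $|J\cap\{1,\ldots,t_i\}|\leq |S_{t_i}|-i$ for all $i\leq \tilde{m}-1$, and by locating the comparison at the \emph{first} trigger index $\tilde{m}$ at which the competitor over-retains; that class is closed under the exchange operations needed, which is what lets the argument go through without ever invoking product feasibility of the comparison sets. To repair your induction you would need an analogous strengthening — maintain the min-product statement for every admissible number of drops at every prefix (equivalently, over all cardinality-constrained drop patterns), not only at the maximum cardinality.
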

 \ifdiff \added{The main idea for the proof of Theorem~\ref{theo:general_alg} is that the bound $\bd^{\bphi}(S_t)$ is given by $|S_t\setminus I|$, where $I\subseteq \{1,\ldots,t\}$ is the index set that contains the most possible indices of $S_t$ while guaranteeing that $W_I^s<1/\alpha$ for all $s\leq t$ (see \eqref{eq:closed_procedure_increasing}). The \texttt{SeqE-guard} algorithm finds this set $I$ by including all indices $i\in \{1,\ldots,t\}\setminus S_t$ with $E_i<1$ (since those e-values reduce the product) and excluding the index $i\in I\cap S_t$ of the largest e-value in case of $W_I^t\geq 1/\alpha$. Since the full proof is quite long, we have deferred it to Supplementary Material~\ref{sec:proofs}.} \else 
 The main idea for the proof of Theorem~\ref{theo:general_alg} is that the bound $\bd^{\bphi}(S_t)$ is given by $|S_t\setminus I|$, where $I\subseteq \{1,\ldots,t\}$ is the index set that contains the most possible indices of $S_t$ while guaranteeing that $W_I^s<1/\alpha$ for all $s\leq t$ (see \eqref{eq:closed_procedure_increasing}). The \texttt{SeqE-guard} algorithm finds this set $I$ by including all indices $i\in \{1,\ldots,t\}\setminus S_t$ with $E_i<1$ (since those e-values reduce the product) and excluding the index $i\in I\cap S_t$ of the largest e-value in case of $W_I^t\geq 1/\alpha$. Since the full proof is quite long, we have deferred it to Supplementary Material~\ref{sec:proofs}. \fi

Note that closed procedures provide simultaneous true discovery guarantee simultaneously over all $S\in \nf$, while \texttt{SeqE-Guard} only gives a simultaneous lower bound on the number of true discoveries for a path of query sets $(S_t)_{t\in \mathbb{N}}$ with $S_1\subseteq S_2 \subseteq \ldots$ . However, at some time $t\in \mathbb{N}$ one could also obtain a lower bound for the number of true discoveries in any other set $S\subseteq \{1,\ldots,t\}$ that is not on the path. 
\ifdiff  \added{In particular, Theorem~\ref{theo:general_alg} shows that the bound $\bd^{\bphi}(S)$, with $\bphi$ from \eqref{eq:intersection_test_mart}, can be calculated in time linear in $\max(S)$ for any $S$ by setting $S_t=S\cap \{1,\ldots,t\}$.} \else In particular, Theorem~\ref{theo:general_alg} shows that the bound $\bd^{\bphi}(S)$, with $\bphi$ from \eqref{eq:intersection_test_mart}, can be calculated in time linear in $\max(S)$ for any $S$ by setting $S_t=S\cap \{1,\ldots,t\}$. \fi

We formulated \texttt{SeqE-Guard} for single query paths due to computational convenience and as we think this reflects the proceeding in many applications. This was also done in previous works on online true discovery guarantee \citep{katsevich2020simultaneous, iqraa2024false}. \ifdiff \added{In the following corollary, we capture the guarantee of \texttt{SeqE-Guard} for multiple query paths.} \else In the following corollary, we capture the guarantee of \texttt{SeqE-Guard} for multiple query paths. \fi 

\ifdiff
\begin{corol}\label{corol:multiple_paths}
    \added{Let $(d_t^{(1)})_{t\in \mathbb{N}}, (d_t^{(2)})_{t\in \mathbb{N}},\ldots$ be obtained by applying \texttt{SeqE-Guard} to multiple query paths $(S_t^{(1)})_{t\in \mathbb{N}},(S_t^{(2)})_{t\in \mathbb{N}},\ldots$ . Then, it holds for all $\mathbb{P}\in \mathcal{P}$: $\mathbb{P}(d_t^{(m)} \leq |S_t^{(m)}\cap I_1^{\mathbb{P}}| \text{ for all } t,m\in \mathbb{N})\geq 1-\alpha$.}
\end{corol}
\else
\begin{corol}\label{corol:multiple_paths}
    Let $(d_t^{(1)})_{t\in \mathbb{N}}, (d_t^{(2)})_{t\in \mathbb{N}},\ldots$ be obtained by applying \texttt{SeqE-Guard} to multiple query paths $(S_t^{(1)})_{t\in \mathbb{N}},(S_t^{(2)})_{t\in \mathbb{N}},\ldots$ . Then, it holds for all $\mathbb{P}\in \mathcal{P}$: $\mathbb{P}(d_t^{(m)} \leq |S_t^{(m)}\cap I_1^{\mathbb{P}}| \text{ for all } t,m\in \mathbb{N})\geq 1-\alpha$.
\end{corol}
\fi

It should be noted that if we apply \texttt{SeqE-Guard} to multiple query paths, we must use the same e-values for every query path we consider.

% For each $t$ it outputs the rejections set $R_t$ and the true discovery bound $d_t$ such that for all $\mathbb{P}\in \mathcal{P}$: $\mathbb{P}(d_t\leq |R_t\cap I_1^{\mathbb{P}}| \text{ for all } t\in \mathbb{N})\geq 1-\alpha$.

% \citet{vovk2024merging} showed that the only admissible merging functions for sequential e-values are \textit{martingale merging functions}, which are defined by 
% $$
% S_I((E_i)_{i\in I})=\prod_{i\in I} (1+s(
% $$

% Furthermore, note that in this setting each online intersection test $\phi_I$  can be interpreted as merging function, since $\phi_I/\alpha$ is a valid e-value for $H_I$ and $\phi_I$ has to be measurable with respect to $\sigma(E_1,\ldots, E_{\sup(I)}$. Hence, it makes sense to define each intersection test by a martingale merging function. We assume that we use the same martingale merging function for each $I$ (define what that means).   
% Note that if we define each factor as a new e-value this basically reduces to the product of sequential e-values. Therefore, w.l.o.g
% we just restrict to products of e-values, but this includes all martingale based merging functions. It can be shown that this defines an online procedure. In addition, we can show that only one hypothesis test per hypothesis is needed.
\ifarxiv
\begin{algorithm}[htbp]
\caption{\texttt{SeqE-Guard}: Online true discovery guarantee with sequential e-values} \label{alg:general}
 \hspace*{\algorithmicindent} \textbf{Input:} Sequence of sequential e-values $E_1,E_2, \ldots$ .\\
 \hspace*{\algorithmicindent} 
 \textbf{Output:} Query sets $S_1\subseteq S_2 \subseteq \ldots$ and true discovery bounds $d_1\leq d_2\leq \ldots$ .
\begin{algorithmic}[1]
\State $d_0=0$
\State $S_0=\emptyset$
\State $U=\emptyset$
\State $A=\emptyset$
\For{$t=1,2,\ldots$}
\State $S_t=S_{t-1}$
\State $d_{t}=d_{t-1}$
\State Statistician observes $E_t$ and chooses whether index $t$ should be included in $S_t$.  
\If{$t\in S_t$}
\State $A=A\cup \{t\}$
% \State 
% Check if the product of the e-values with indices in $A\cup U$ is at least $1/\alpha$.
\If{$\prod_{i \in A\cup U} E_i\geq 1/\alpha$}
\State $d_{t}=d_{t-1}+1$
\State $A=A\setminus \{\text{index of largest e-value in } A\}$
\EndIf
\ElsIf{$E_t<1$}
\State $U=U\cup \{t\}$
\EndIf
\State \Return $S_t, d_t$
\EndFor
\end{algorithmic}
\end{algorithm}
\else 
\begin{algorithm}
\caption{\texttt{SeqE-Guard}: Online true discovery guarantee with sequential e-values} \label{alg:general}
  \begin{flushleft} \textbf{Input:} Sequence of sequential e-values $E_1,E_2, \ldots$ .\\
 \textbf{Output:} Query sets $S_1\subseteq S_2 \subseteq \ldots$ and true discovery bounds $d_1\leq d_2\leq \ldots$ .
 \end{flushleft}
\begin{algorithmic}[1]
\State $d_0=0$
\State $S_0=\emptyset$
\State $U=\emptyset$
\State $A=\emptyset$
\For{$t=1,2,\ldots$}
\State $S_t=S_{t-1}$
\State $d_{t}=d_{t-1}$
\State Statistician observes $E_t$ and chooses whether index $t$ should be included in $S_t$.  
\If{$t\in S_t$}
\State $A=A\cup \{t\}$
% \State 
% Check if the product of the e-values with indices in $A\cup U$ is at least $1/\alpha$.
\If{$\prod_{i \in A\cup U} E_i\geq 1/\alpha$}
\State $d_{t}=d_{t-1}+1$
\State $A=A\setminus \{\text{index of largest e-value in } A\}$
\EndIf
\ElsIf{$E_t<1$}
\State $U=U\cup \{t\}$
\EndIf
\State \Return $S_t, d_t$
\EndFor
\end{algorithmic}
\end{algorithm}
\fi

\begin{example}[Nontrivial true discovery bounds with weak signals]
    Let $\alpha=0.05$ and suppose we test $1000$ hypotheses using sequential e-values. Further suppose that $300$ of those e-values equal $0.9$ and $700$ of those e-values equal $1.1$. Due to the small e-values, it seems like there is no or at most very small evidence against the null hypotheses. Indeed, standard multiple testing procedures like the e-BH procedure for FDR control \citep{wang2022false} would not make any rejections, \ifdiff \added{ regardless of the significance level $\alpha\in (0,1]$. } \else regardless of the significance level $\alpha\in (0,1]$. \fi  However, plugging these e-values into the \texttt{SeqE-Guard} algorithm yields a bound of $d_{1000}\geq 306$ for the query set $S_{1000}=\{i\leq 1000:E_i>1\}$, since $0.9^{300} 1.1^{395}\geq 20$. Consequently, the \texttt{SeqE-Guard} algorithm can claim with a probability of $0.95$ that there are at least $306$ false hypotheses among the $700$ hypotheses with e-value greater than $1$. This demonstrates the capability of the \texttt{SeqE-Guard} algorithm to detect small but frequent signals in the data.   
\end{example}

\subsection{Improving existing methods\label{sec:katse}}

Existing online procedures with simultaneous true discovery guarantees were proposed by \citet{katsevich2020simultaneous} and \citet{iqraa2024false}. Their setting involved observing one p-value for each hypothesis (as is standard in multiple testing) but in this section, we will show that these methods can be uniformly improved by \texttt{SeqE-Guard} by employing specific choices of the sequential e-values. Furthermore, we show that \texttt{SeqE-Guard} even uniformly improves some existing offline procedures.

\subsubsection{Improving existing online procedures\label{sec:improving_online}}
We start with the \texttt{online-simple} procedure by \citet{katsevich2020simultaneous}. Suppose that p-values $P_1,P_2,\ldots$ for the hypotheses $H_1,H_2,\ldots$ are available such that $P_i$ is measurable with respect to $\mathcal{F}_{i}$ and let $\alpha_1,\alpha_2, \ldots $ be nonnegative thresholds such that $\alpha_i$ is measurable with respect to $\mathcal{F}_{i-1}$.  It is assumed that the null p-values are valid conditional on the past, meaning $\mathbb{P}(P_i\leq x|\mathcal{F}_{i-1})\leq x$ for all $i\in I_0^{\mathbb{P}}$, $x\in [0,1]$. \citet{katsevich2020simultaneous} showed that \begin{align}\bd^{\text{os}}(S_t)=\left\lceil-ca+ \sum_{i=1}^t \mathbbm{1}\{P_i\leq \alpha_i\} - c\alpha_i\right\rceil \quad (t\in \mathbb{N}) \label{eq:online-simple}\end{align}
provides simultaneous true discovery guarantee over all sets $S_t=\{i\leq t: P_i\leq \alpha_i\}$, $t\in \mathbb{N}$, where $a>0$ is some parameter and $c=\frac{\log(1/\alpha)}{a\log(1+\log(1/\alpha)/a)}$. On closer examination of their proof, one can observe that they proved their guarantee by implicitly showing that \begin{align}E_i^{\text{os}}=\exp[\theta_c(\mathbbm{1}\{P_i\leq \alpha_i\} - c\alpha_i)] \quad (i\in \mathbb{N)} \label{eq:e-val_os} \end{align} define sequential e-values, where $\theta_c=\log(1/\alpha)/(ca)$. We now propose to simply plug these sequential e-values into \texttt{SeqE-Guard}; this leads to Algorithm~\ref{alg:online-simple}, which we will refer to as \texttt{closed online-simple} procedure in the following. To see this, note that
$$
\prod_{i\in I} E_i^{\text{os}}\geq 1/\alpha \Leftrightarrow  \sum_{i\in I}  \mathbbm{1}\{P_i\leq \alpha_i\} - c\alpha_i \geq \log(1/\alpha)/\theta_c = ca \quad (I\subseteq \mathbb{N}).
$$
 The superscript of the set $A_t^c$ in Algorithm~\ref{alg:online-simple} is used to reflect the fact that $A_t^c$ contains all indices of queried hypotheses that were excluded from the analysis until step $t$, and therefore can be seen as complement of the set $A$ in Algorithm~\ref{alg:general} with respect to $S_t$.

\ifarxiv
\begin{algorithm}
\caption{\texttt{Closed online-simple}} \label{alg:online-simple}
\hspace*{\algorithmicindent}\textbf{Input:} Sequence of p-values $P_1,P_2,\ldots$ and sequence of (potentially data-dependent) individual significance levels $\alpha_1,\alpha_2,\ldots$ .\\
 \hspace*{\algorithmicindent}\textbf{Output:} Query sets $S_1\subseteq S_2 \subseteq \ldots$ and true discovery bounds $d_1\leq d_2 \leq \ldots$ .
\begin{algorithmic}[1]
\State $d_0=0$
\State $S_0=\emptyset$
\State $A_0^c=\emptyset$
\For{$t=1,2,\ldots$}
\If{$P_t\leq \alpha_t$}
\State $S_t=S_{t-1}\cup \{t\}$
\Else
\State $S_t=S_{t-1}$
\EndIf
\If{$\sum_{i\in \{1,\ldots,t\}\setminus A_{t-1}^c}  \mathbbm{1}\{P_i\leq \alpha_i\} - c\alpha_i \geq  ca$} 
\State $d_t=d_{t-1}+1$
\State $A_{t}^c=A_{t-1}^c\cup \{\text{index of smallest individual significance level in } S_t\setminus A_{t-1}^c\}$
\Else
\State $d_{t}=d_{t-1}$
\State $A_{t}^c=A_{t-1}^c$
\EndIf
\State \Return $S_t, d_t$
\EndFor
\end{algorithmic}
\end{algorithm}
\else 
\begin{algorithm}
\caption{\texttt{Closed online-simple}} \label{alg:online-simple}
 \begin{flushleft}
\textbf{Input:} Sequence of p-values $P_1,P_2,\ldots$ and sequence of (potentially data-dependent) individual significance levels $\alpha_1,\alpha_2,\ldots$ .\\
 \textbf{Output:} Query sets $S_1\subseteq S_2 \subseteq \ldots$ and true discovery bounds $d_1\leq d_2 \leq \ldots$ .
 \end{flushleft}
\begin{algorithmic}[1]
\State $d_0=0$
\State $S_0=\emptyset$
\State $A_0^c=\emptyset$
\For{$t=1,2,\ldots$}
\If{$P_t\leq \alpha_t$}
\State $S_t=S_{t-1}\cup \{t\}$
\Else
\State $S_t=S_{t-1}$
\EndIf
\If{$\sum_{i\in \{1,\ldots,t\}\setminus A_{t-1}^c}  \mathbbm{1}\{P_i\leq \alpha_i\} - c\alpha_i \geq  ca$} 
\State $d_t=d_{t-1}+1$
\State $A_{t}^c=A_{t-1}^c\cup \{\text{index of smallest individual significance level in } S_t\setminus A_{t-1}^c\}$
\Else
\State $d_{t}=d_{t-1}$
\State $A_{t}^c=A_{t-1}^c$
\EndIf
\State \Return $S_t, d_t$
\EndFor
\end{algorithmic}
\end{algorithm}
\fi

Let $A_t^c$ and $d_t$ be defined as in \texttt{closed online-simple}, then 
\begin{align}
    d_t&\geq |A_{t-1}^c|+ \left\lceil-ca+ \sum_{i\in \{1,\ldots,t\}\setminus A_{t-1}^c} \mathbbm{1}\{P_i\leq \alpha_i\} - c\alpha_i  \right\rceil \nonumber \\
    &=  \left\lceil-ca+ \sum_{i=1}^t \mathbbm{1}\{P_i\leq \alpha_i\} - \sum_{i\in \{1,\ldots,t\}\setminus A_{t-1}^c} c\alpha_i  \right\rceil \geq \bd^{\text{os}}(S_t), \label{eq:improvement_os}
\end{align}
which shows that our \texttt{closed online-simple} method uniformly improves the \texttt{online-simple} procedure by \citet{katsevich2020simultaneous}. The improvement can be divided into two parts. First, the \texttt{closed online-simple} procedure is coherent, providing that $d_{t-1}\leq d_t$ for all $t\in \mathbb{N}$. Second, every time the bound $d_t$ is increased by one, the summand $-c\alpha_i$ is excluded from the bound, where $\alpha_i$ is the smallest significance level with index in $S_t\setminus A_{t-1}^c$. This shows that the (online) closure principle and \texttt{SeqE-Guard} automatically adapt to the number of discoveries and thus the proportion of false hypotheses.

Note that $E_i^{\text{os}}$ only takes two values. It takes $\exp[\theta_c(1 - c\alpha_i)]$ if $P_i\leq \alpha_i$ and $\exp[-\theta_cc\alpha_i]$ if $P_i> \alpha_i$, where $\mathbb{P}(P_i\leq \alpha_i|\mathcal{F}_{i-1})\leq \alpha_i$ for all $\mathbb{P}\in H_i$.  
Hence, we have for all $\mathbb{P}\in H_i$,
\begin{align}
\mathbb{E}_{\mathbb{P}}[E_i^{\text{os}}|\mathcal{F}_{i-1}]\leq \alpha_i \exp[\theta_c(1 - c\alpha_i)] + (1-\alpha_i)\exp[-\theta_cc\alpha_i]\eqqcolon u_i. \label{eq:u_i}
\end{align}
For example, $\alpha_i=\alpha=0.1$ and $a=1$ yield $u_i= 0.977$ ($u_i$ is increasing in $a$; for $a=3$ we obtain $u_i= 0.997$), which shows that $E_i^{\text{os}}$ is not admissible and thus can be improved.  A simple improvement can be obtained by plugging the e-value $\tilde{E}_i^{\text{os}}=E_i^{\text{os}}/u_i$ instead of $E_i^{\text{os}}$ into \texttt{SeqE-Guard}. Doing this at every step, we obtain a further improvement; we call this the \texttt{admissible online-simple} method.

\begin{proposition}\label{prop:improvement_os}
    By plugging in  the sequential e-values $(E_i^{\text{os}})_{i\in \mathbb{N}}$ \eqref{eq:e-val_os} into the \texttt{SeqE-Guard} algorithm, we obtain multiple uniform improvements (that can be applied together) over the \texttt{online-simple} method by \citet{katsevich2020simultaneous}:
    \begin{enumerate}
    \item The lower bound $d_t$ is nondecreasing in $t$ (\texttt{coherent online-simple}).
    \item Every time the bound $d_t$ is increased by one, the summand $-c\alpha_i$ is excluded from the bound, where $\alpha_i$ is the smallest threshold with index in $S_t\setminus A_{t-1}^c  $ (\texttt{closed online-simple}).
    \end{enumerate}
    Furthermore, since the expected value of $E_i^{\text{os}}$ is strictly smaller than $1$ under $H_0$, an additional improvement can be obtained:
    \begin{enumerate}
    \item[3.] The sequential e-values $E_i^{\text{os}}$, $i\in \mathbb{N}$, can be replaced by $\tilde{E}_i^{\text{os}}=E_i^{\text{os}}/u_i$, where $u_i<1$ is given by \eqref{eq:u_i} (\texttt{admissible online-simple}).
\end{enumerate}
\end{proposition}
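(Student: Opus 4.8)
The plan is to assemble the proposition from the algorithmic identity already recorded in the body together with one clean analytic estimate. The body establishes that $\prod_{i\in I} E_i^{\text{os}}\geq 1/\alpha$ is equivalent to the threshold crossing $\sum_{i\in I}(\mathbbm{1}\{P_i\leq\alpha_i\}-c\alpha_i)\geq ca$, and the chain~\eqref{eq:improvement_os} already shows $d_t\geq \bd^{\text{os}}(S_t)$; so the remaining work is (i) to confirm that the $E_i^{\text{os}}$ of~\eqref{eq:e-val_os} really are sequential e-values, and (ii) to upgrade the conditional-expectation bound $u_i$ of~\eqref{eq:u_i} to a strict inequality $u_i<1$, which is exactly what powers the third (admissibility) improvement. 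First I would verify sequential validity: since $E_i^{\text{os}}$ is $\mathcal{F}_i$-measurable and takes only the two values $\exp[\theta_c(1-c\alpha_i)]$ and $\exp[-\theta_c c\alpha_i]$ according to whether $P_i\leq\alpha_i$, conditional validity of the null p-values gives $\mathbb{E}_{\mathbb{P}}[E_i^{\text{os}}\mid\mathcal{F}_{i-1}]\leq u_i$ for $\mathbb{P}\in H_i$, so it suffices to show $u_i\leq 1$.

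The heart of the argument is the estimate $u_i<1$. Here I would exploit the specific form of the constants: writing $L=\log(1/\alpha)$, the definition $c=L/(a\log(1+L/a))$ gives $\theta_c=L/(ca)=\log(1+L/a)$, hence the identity $e^{\theta_c}=1+L/a=1+\theta_c c$. Substituting this into~\eqref{eq:u_i} and factoring out $e^{-\theta_c c\alpha_i}$ collapses $u_i$ to $e^{-x}(1+x)$ with $x\coloneqq\theta_c c\alpha_i\geq 0$. Since $1+x\leq e^{x}$ with equality only at $x=0$, this yields $u_i\leq 1$ (so the $E_i^{\text{os}}$ are bona fide sequential e-values) and, because $\alpha_i>0$ forces $x>0$, the strict bound $u_i<1$.

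With validity in hand, improvements 1 and 2 follow from Theorem~\ref{theo:general_alg}: plugging $(E_i^{\text{os}})_{i\in\mathbb{N}}$ into \texttt{SeqE-Guard} reproduces the closed procedure $\bd^{\bphi}$ for the supermartingale intersection tests, which is coherent and hence has nondecreasing $d_t$ (improvement 1), and whose shortcut excludes, at each discovery, the largest e-value in $A$---equivalently the smallest $\alpha_i$ in $S_t\setminus A_{t-1}^c$---so that the corresponding $-c\alpha_i$ term is dropped from all future products (improvement 2); the chain~\eqref{eq:improvement_os} then delivers $d_t\geq\bd^{\text{os}}(S_t)$ and shows the two effects act simultaneously. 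For improvement 3, since $u_i<1$ the rescaled $\tilde{E}_i^{\text{os}}=E_i^{\text{os}}/u_i$ is still a sequential e-value ($\mathbb{E}_{\mathbb{P}}[\tilde{E}_i^{\text{os}}\mid\mathcal{F}_{i-1}]\leq u_i/u_i=1$) but strictly exceeds $E_i^{\text{os}}$ pointwise; every product in \texttt{SeqE-Guard} therefore crosses $1/\alpha$ no later, giving $\tilde{d}_t\geq d_t$, and an explicit configuration of p-values (one in which a boosted product reaches $1/\alpha$ while the unboosted one does not) exhibits strict improvement with positive probability, as required by the definition of uniform improvement.

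The main obstacle is the strict bound $u_i<1$: the two-term expression in~\eqref{eq:u_i} does not visibly factor, and the reduction to $e^{-x}(1+x)$ hinges on recognizing the hidden identity $e^{\theta_c}=1+\theta_c c$ that is built into the definition of $c$. A secondary point needing care is the ``applied together'' clause---checking that coherence, the exclusion shortcut, and the rescaling by $u_i$ do not interfere---which is immediate once one observes that all three are simply statements about the same \texttt{SeqE-Guard} run with a (weakly) larger sequence of e-values, together with the positive-probability strictness witnessing genuine (not merely weak) improvement.
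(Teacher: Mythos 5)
Your proof is correct, and for items 1 and 2 it follows the same route as the paper: the paper does not give a separate proof of this proposition but derives it in the surrounding text, via the equivalence $\prod_{i\in I}E_i^{\text{os}}\geq 1/\alpha \Leftrightarrow \sum_{i\in I}(\mathbbm{1}\{P_i\leq\alpha_i\}-c\alpha_i)\geq ca$, Theorem~\ref{theo:general_alg}, and the chain~\eqref{eq:improvement_os}, exactly as you do (including the observation that the largest $E_i^{\text{os}}$ in $A$ corresponds to the smallest $\alpha_i$ in $S_t\setminus A_{t-1}^c$, since $\exp[\theta_c(1-c\alpha_i)]$ is decreasing in $\alpha_i$).

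Where you genuinely go beyond the paper is item 3. The paper asserts $u_i<1$ and supports it only by numerical evaluation at specific parameters ($u_i=0.977$ for $\alpha=\alpha_i=0.1$, $a=1$; $u_i=0.997$ for $a=3$), relying on \citet{katsevich2020simultaneous} for the fact that the $E_i^{\text{os}}$ are sequential e-values in the first place. Your closed-form argument proves both at once: from $c=L/(a\log(1+L/a))$ with $L=\log(1/\alpha)$ you get $\theta_c=\log(1+L/a)$ and hence the identity $e^{\theta_c}=1+\theta_c c$, under which
\begin{align*}
u_i=e^{-\theta_c c\alpha_i}\bigl[\alpha_i e^{\theta_c}+1-\alpha_i\bigr]=(1+x)e^{-x},\qquad x=\theta_c c\alpha_i=\tfrac{L}{a}\alpha_i,
\end{align*}
so $u_i\leq 1$ always (re-deriving sequential validity without appeal to the implicit argument of \citet{katsevich2020simultaneous}) and $u_i<1$ strictly whenever $\alpha_i>0$, uniformly in $\alpha$ and $a$. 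This also explains, as a byproduct, the paper's parenthetical remark that $u_i$ is increasing in $a$ (since $x$ is decreasing in $a$ and $(1+x)e^{-x}$ is decreasing in $x>0$). Two small caveats: your strictness condition correctly excludes the degenerate case $\alpha_i=0$, where $u_i=1$ and no boost is available (the paper silently assumes positive thresholds); and your positive-probability witness for strict (uniform) improvement in item 3 is stated but not constructed --- that is acceptable at the paper's level of rigor, since the paper does not construct one either, but a one-line example (e.g.\ a product sitting in $[u_t/\alpha,1/\alpha)$ just before step $t$) would close it fully.
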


% \begin{proposition}
%     The \texttt{SeqE-Guard} algorithm applied with the sequential e-values $(E_i^{\text{os}})_{i\in \mathbb{N}}$ \eqref{eq:e-val_os}, called \texttt{closed online-simple} procedure (Algorithm~\ref{alg:online-simple}), uniformly improves the \texttt{online-simple} method by \citet{katsevich2020simultaneous}.
% \end{proposition}

% \begin{proposition}
% The \texttt{SeqE-Guard} algorithm applied with the sequential e-values $(E_i^{\text{os}})_{i\in \mathbb{N}}$ \eqref{eq:e-val_os}, called \texttt{closed online-simple} procedure (Algorithm~\ref{alg:online-simple}), uniformly improves the \texttt{online-simple} method by \citet{katsevich2020simultaneous}.
% \end{proposition}

\citet{katsevich2020simultaneous} introduced one further online procedure with simultaneous true discovery guarantee, the \texttt{online-adaptive} method. A uniform improvement can be obtained in the exact same manner as for the \texttt{online-simple} method above; we present this in Supplementary Material~\ref{appn:online-adaptive}. 
% Since the \texttt{online-adaptive} method already adapts to the proportion of false hypotheses, it cannot be further improved by the closed improvement. However, we still obtain an improvement by inducing coherence and exploiting the inadmissibility of their e-values. 
Inspired by the methods of \citet{katsevich2020simultaneous}, two further online procedures with simultaneous true discovery guarantee were proposed by \citet{iqraa2024false}. The first procedure is obtained by taking a union of  \texttt{online-simple} bounds for different parameters $a$. The second procedure exploits Freedman's inequality \citep{freedman1975tail} and a union bound. In Supplementary Material~\ref{sec:improvements_iqraa}, we show how both these recent methods can be uniformly improved by our e-value based approach. The proposed improvements are technically not instances of the \texttt{SeqE-Guard} algorithm, but can be obtained by the union of \texttt{SeqE-Guard} bounds (or using the average of multiple test martingales). In Table~\ref{tab:improvements} we provide the concrete sequential e-values yielding the mentioned improvements.

\ifarxiv
\begin{table}[h!]
    \centering
    \begin{tabular}{lll}
    \hline
        \textbf{Existing method} & \textbf{Sequential e-values} & \textbf{Algorithm} \\ \hline
           \texttt{online-simple} \citep{katsevich2020simultaneous} 
        & \makecell[l]{$E_i^{\text{os}}=\exp[\theta_c(\mathbbm{1}\{P_i\leq \alpha_i\} - c\alpha_i)]$ \\
        $a>0$ \\
        $c=\frac{\log(1/\alpha)}{a\log(1+\log(1/\alpha)/a)}$\\
        $\theta_c=\log(1/\alpha)/(ca)$} & \makecell[l]{\texttt{SeqE-Guard} }
        \\ \hline
        \texttt{online-adaptive} \citep{katsevich2020simultaneous} 
        & \makecell[l]{$E_i^{\text{ad}}=\exp[\theta_c(\mathbbm{1}\{P_i\leq \alpha_i\} - c\frac{\alpha_i}{1-\lambda_i}\mathbbm{1}\{P_i>\lambda_i\})]$ \\
        $\lambda_i\in [\alpha_i,1)$ \\
        $a>0$ \\
        $c=\frac{\log(1/\alpha)}{a\log(1+(1-\alpha^{B/a})/B)}$\\
        $B=\sup_{i\in \mathbb{N}} \frac{\alpha_i}{1-\lambda_i}$ \\
        $\theta_c=\log(1/\alpha)/(ca)$} & \makecell[l]{\texttt{SeqE-Guard} }
        \\ \hline
        \texttt{u-online-simple} \citep{iqraa2024false} 
        & \makecell[l]{$E_i^{\text{os},a}=\exp[\theta_{c_a}(\mathbbm{1}\{P_i\leq \alpha_i\} - c_a\alpha_i)]$ \\
        $c_a=\frac{\log(1/\alpha(a))}{a\log(1+\log(1/\alpha(a))/a)}$\\
        $\alpha(a)=\frac{6\alpha}{a^2\pi^2}$ \\
        $\theta_{c_a}=\log(1/\alpha(a))/(c_a a)$} & \makecell[l]{\texttt{SeqE-Guard} + \\  union over $a\in \mathbb{N}$ }
        \\ \hline
        \texttt{u-online-Freedman} \citep{iqraa2024false} 
        & \makecell[l]{$E_i^{\text{Freed},a}= \exp[\lambda_a(\mathbbm{1}\{P_i\leq \alpha_i\}-\alpha_i)$ \\ \hphantom{$E_i^{\text{Freed},a}=$}$-\psi(\lambda_a)\alpha_i(1-\alpha_i)]$ \\
        $\lambda_a=\log\left(1+\frac{\kappa_a}{a}\right)$\\
        $\psi(\lambda_a)=\exp(\lambda_a)-\lambda_a-1$ \\
        $\kappa_a=\sqrt{2a\log(1/\alpha(a))} + \frac{\log(1/\alpha(a))}{2}$ \\
        $\alpha(a)=\alpha\left(\frac{6}{\max(2\log_2(a),1)^2 (\pi^2 + 6)}\right)$} & \makecell[l]{\texttt{SeqE-Guard} + \\  union over $a=2^{j/2},$ \\  $j\in \mathbb{N}\cup \{0\}$ }
        \\ \hline
    \end{tabular}
    \caption{List of online procedures that can be uniformly improved by our \texttt{SeqE-Guard} algorithm. The left column shows the name of the existing method proposed in past work. The middle column gives the sequential e-value, which if plugged into the algorithm in the right column, yields a uniform improvement of the existing method in the left column. To the best of our knowledge this list includes all existing nontrivial online true discovery procedures. Thus, we claim that our approach uniformly improves all existing procedures. It should be noted that the shown sequential e-values may be further improved by ensuring that their expected value equals exactly $1$ in case of uniformly distributed p-values (this can be accomplished by dividing the shown e-values by their expected value but we omit it here to keep the expressions short).
    }
    \label{tab:improvements}
\end{table}
\else
\begin{table}[h!]
    \centering
    \resizebox{\textwidth}{!}{
    \begin{tabular}{lll}
    \hline
        \textbf{Existing method} & \textbf{Sequential e-values} & \textbf{Algorithm} \\ \hline
           \makecell[l]{\texttt{online-simple} \\ \citep{katsevich2020simultaneous} }
        & \makecell[l]{$E_i^{\text{os}}=\exp[\theta_c(\mathbbm{1}\{P_i\leq \alpha_i\} - c\alpha_i)]$ \\
        $a>0$ \\
        $c=\frac{\log(1/\alpha)}{a\log(1+\log(1/\alpha)/a)}$\\
        $\theta_c=\log(1/\alpha)/(ca)$} & \makecell[l]{\texttt{SeqE-Guard} }
        \\ \hline
        \makecell[l]{\texttt{online-adaptive} \\ \citep{katsevich2020simultaneous} }
        & \makecell[l]{$E_i^{\text{ad}}=\exp[\theta_c(\mathbbm{1}\{P_i\leq \alpha_i\} - c\frac{\alpha_i}{1-\lambda_i}\mathbbm{1}\{P_i>\lambda_i\})]$ \\
        $\lambda_i\in [\alpha_i,1)$ \\
        $a>0$ \\
        $c=\frac{\log(1/\alpha)}{a\log(1+(1-\alpha^{B/a})/B)}$\\
        $B=\sup_{i\in \mathbb{N}} \frac{\alpha_i}{1-\lambda_i}$ \\
        $\theta_c=\log(1/\alpha)/(ca)$} & \makecell[l]{\texttt{SeqE-Guard} }
        \\ \hline
        \makecell[l]{ \texttt{u-online-simple} \\ \citep{iqraa2024false} }
        & \makecell[l]{$E_i^{\text{os},a}=\exp[\theta_{c_a}(\mathbbm{1}\{P_i\leq \alpha_i\} - c_a\alpha_i)]$ \\
        $c_a=\frac{\log(1/\alpha(a))}{a\log(1+\log(1/\alpha(a))/a)}$\\
        $\alpha(a)=\frac{6\alpha}{a^2\pi^2}$ \\
        $\theta_{c_a}=\log(1/\alpha(a))/(c_a a)$} & \makecell[l]{\texttt{SeqE-Guard} + \\  union over $a\in \mathbb{N}$ }
        \\ \hline
        \makecell[l]{\texttt{u-online-Freedman} \\ \citep{iqraa2024false} }
        & \makecell[l]{$E_i^{\text{Freed},a}= \exp[\lambda_a(\mathbbm{1}\{P_i\leq \alpha_i\}-\alpha_i)$ \\ \hphantom{$E_i^{\text{Freed},a}=$}$-\psi(\lambda_a)\alpha_i(1-\alpha_i)]$ \\
        $\lambda_a=\log\left(1+\frac{\kappa_a}{a}\right)$\\
        $\psi(\lambda_a)=\exp(\lambda_a)-\lambda_a-1$ \\
        $\kappa_a=\sqrt{2a\log(1/\alpha(a))} + \frac{\log(1/\alpha(a))}{2}$ \\
        $\alpha(a)=\alpha\left(\frac{6}{\max(2\log_2(a),1)^2 (\pi^2 + 6)}\right)$} & \makecell[l]{\texttt{SeqE-Guard} + \\  union over $a=2^{j/2},$ \\  $j\in \mathbb{N}\cup \{0\}$ }
        \\ \hline
    \end{tabular}}
    \caption{List of online procedures that can be uniformly improved by our \texttt{SeqE-Guard} algorithm. The left column shows the name of the existing method proposed in past work. The middle column gives the sequential e-value, which if plugged into the algorithm in the right column, yields a uniform improvement of the existing method in the left column. To the best of our knowledge this list includes all existing nontrivial online true discovery procedures. Thus, we claim that our approach uniformly improves all existing procedures. It should be noted that the shown sequential e-values may be further improved by ensuring that their expected value equals exactly $1$ in case of uniformly distributed p-values (this can be accomplished by dividing the shown e-values by their expected value but we omit it here to keep the expressions short).
    }
    \label{tab:improvements}
\end{table}
\fi

\ifdiff 
\subsubsection{\added{Improving existing offline procedures}\label{sec:improving_offline}}
\added{Interestingly, even though the \texttt{SeqE-Guard} algorithm was specifically designed for the \textit{online} multiple testing setting, it also allows to improve some \textit{offline} methods. \citet{katsevich2020simultaneous} introduced true discovery procedures for pre-ordered and interactive paths. That means, even though all data are available at the beginning, the hypotheses are ordered based on some side information or orthogonal data.  For instance, this includes simultaneous true discovery bounds for the knockoff framework \citep{barber2015controlling, candes2018panning}.} 

\added{These offline procedures are based on the same idea as the \texttt{online-simple} and \texttt{online-adaptive} method (see Lemma~1 in \citet{katsevich2020simultaneous}). Given ordered p-values $P_{(1)},P_{(2)},\ldots$ and nonnegative thresholds $\alpha_1,\alpha_2,\ldots$, they proposed the general bounds} 
\begin{align*}
    \bd^{\text{KR}}(S_t)=-c_{\alpha} a+ \sum_{i\leq t} \mathbbm{1}\{P_{(i)}\leq \alpha_i\} - c_\alpha h_i(P_{(i)}) \quad (t\in \mathbb{N})
\end{align*}
\added{for the query path $S_t=\{i\leq t: P_{(i)}\leq \alpha_i\}$, $t\in \mathbb{N}$, where $(h_i)_{i \in \mathbb{N}}$ are functions on $[0,1]$, $a>1$ some regularization parameter and $c$ a constant. They implicitly proved that $\bd^{\text{KR}}$ provides simultaneous true discovery guarantee over all $S_t$, $t\in \mathbb{N}$, if $E_{(1)}^{\text{KR}},E_{(2)}^{\text{KR}},\ldots$ define sequential e-values, where}
\begin{align}
E_{(i)}^{\text{KR}}=\exp\left[ \frac{\log(1/\alpha)}{ac} \left(\mathbbm{1}\{P_{(i)}\leq \alpha_i \} -c h_i(P_{(i)}) \right)\right] \quad (i\in \mathbb{N}), \label{eq:e-values_KR_gen}
\end{align}
\added{and gave specific choices of $(h_i)_{i \in \mathbb{N}}$, $(\alpha_i)_{i \in \mathbb{N}}$ and $c$ that satisfy this requirement. It is apparent that this general approach recovers the \texttt{online-simple} method for $(i)=i$ and $h_i(P_{(i)})=\alpha_i$. Furthermore, it can easily be concluded that we can improve these general bounds $\bd^{\text{KR}}$ by plugging the e-values $E_{(1)}^{\text{KR}}, E_{(2)}^{\text{KR}},\ldots $ into \texttt{SeqE-Guard}, as we illustrated with the \texttt{online-simple} method in Section~\ref{sec:improving_online}. Since the main focus of this paper is on the online setting, we will not discuss each of the improvements individually.}

\else

\subsubsection{Improving existing offline procedures\label{sec:improving_offline}}
Interestingly, even though the \texttt{SeqE-Guard} algorithm was specifically designed for the \textit{online} multiple testing setting, it also allows to improve some \textit{offline} methods. \citet{katsevich2020simultaneous} introduced true discovery procedures for pre-ordered and interactive paths. That means, even though all data are available at the beginning, the hypotheses are ordered based on some side information or orthogonal data.  For instance, this includes simultaneous true discovery bounds for the knockoff framework \citep{barber2015controlling, candes2018panning}. 

These offline procedures are based on the same idea as the \texttt{online-simple} and \texttt{online-adaptive} method (see Lemma~1 in \citet{katsevich2020simultaneous}). Given ordered p-values $P_{(1)},P_{(2)},\ldots$ and nonnegative thresholds $\alpha_1,\alpha_2,\ldots$, they proposed the general bounds 
\begin{align*}
    \bd^{\text{KR}}(S_t)=-c_{\alpha} a+ \sum_{i\leq t} \mathbbm{1}\{P_{(i)}\leq \alpha_i\} - c_\alpha h_i(P_{(i)}) \quad (t\in \mathbb{N})
\end{align*}
for the query path $S_t=\{i\leq t: P_{(i)}\leq \alpha_i\}$, $t\in \mathbb{N}$, where $(h_i)_{i \in \mathbb{N}}$ are functions on $[0,1]$, $a>1$ some regularization parameter and $c$ a constant. They implicitly proved that $\bd^{\text{KR}}$ provides simultaneous true discovery guarantee over all $S_t$, $t\in \mathbb{N}$, if $E_{(1)}^{\text{KR}},E_{(2)}^{\text{KR}},\ldots$ define sequential e-values, where
\begin{align}
E_{(i)}^{\text{KR}}=\exp\left[ \frac{\log(1/\alpha)}{ac} \left(\mathbbm{1}\{P_{(i)}\leq \alpha_i \} -c h_i(P_{(i)}) \right)\right] \quad (i\in \mathbb{N}), \label{eq:e-values_KR_gen}
\end{align}
and gave specific choices of $(h_i)_{i \in \mathbb{N}}$, $(\alpha_i)_{i \in \mathbb{N}}$ and $c$ that satisfy this requirement. It is apparent that this general approach recovers the \texttt{online-simple} method for $(i)=i$ and $h_i(P_{(i)})=\alpha_i$. Furthermore, it can easily be concluded that we can improve these general bounds $\bd^{\text{KR}}$ by plugging the e-values $E_{(1)}^{\text{KR}}, E_{(2)}^{\text{KR}},\ldots $ into \texttt{SeqE-Guard}, as we illustrated with the \texttt{online-simple} method in Section~\ref{sec:improving_online}. Since the main focus of this paper is on the online setting, we will not discuss each of the improvements individually.
\fi 

% These derivations not only show that \texttt{SeqE-Guard} leads to simple improvements of the state-of-the-art methods, but also show its generality, since there are many ways to define sequential e-values. In the following sections, we derive new online true discovery procedures based on other sequential e-values.

% \citet{katsevich2020simultaneous} also introduced sequential offline methods for simultaneous true discovery control, where the hypotheses are ordered according to the p-values or some prior information. These methods (Theorem 2 \& Theorem 3 in \citep{katsevich2020simultaneous}) are based on the exact same proof technique of defining sequential e-values with respect to some filtration (Lemma 1 in \citep{katsevich2020simultaneous}). Again, plugging these sequential e-values into Algorithm~\ref{alg:general} lead to uniform improvements of the original methods that are of the same kind as Algorithm~\ref{alg:online-simple} for the online-simple procedure. However, we do not write them down in this paper as we are particularly interested in online control. These derivations do not only show that Algorithm~\ref{alg:general} leads to simple improvements of state-of-the-art methods, but also show its generality, since there many ways to define sequential e-values. In the following sections, we derive new online true discovery procedures based on other sequential e-values.

\subsection{Adaptively hedged GRO e-values\label{sec:gro}}

The most common strategy to calculate e-values in practice is based on variants of the GRO-criterion \citep{shafer2021testing, grunwald2020safe, waudby2023estimating}, which dates back to the Kelly criterion \citep{kelly1956new, breiman1961optimal}. Here, we assume that each null hypothesis $H_i$ comes with an alternative $H_i^A$. Suppose $H_i^A$ contains a single distribution $\mathbb{Q}_i$. Then the growth rate optimal (GRO) e-value $E_i^{\mathrm{GRO}}$ is defined as the e-value $E_i$ that maximizes the growth rate under $\mathbb{Q}_i$, given by  $\mathbb{E}_{\mathbb{Q}_i}[\log(E_i)]$, over all e-values for $H_i$. 
If $H_i^A$ is composite, a prior over its distributions can be used, defining the GRO e-value via the corresponding mixture distribution. If the null hypothesis is simple, the GRO e-value is given by the likelihood ratio (LR) of the alternative over the null distribution \citep{shafer2021testing}. If $H_i$ is composite, the GRO e-value takes the form of a LR of the alternative against a specific (sub-) distribution \citep{grunwald2020safe, larsson2024numeraire}. The GRO e-value is particularly powerful when many e-values are combined by multiplication \citep{shafer2021testing} and therefore seems to be a reasonable choice for our \texttt{SeqE-Guard} algorithm. Furthermore, since the growth rate is the standard measure of performance for e-values \citep{ramdas2023game}, there might be applications where we do not have access to the data to calculate our own e-value, but just get a GRO e-value for each individual hypothesis. For example, this could be the case in meta-analyses, where each study just reported the GRO e-value. In the following, we will discuss how the GRO concept transfers to online true discovery guarantee based on sequential e-values.

A naive approach would be to just plugin the GRO e-values into \texttt{SeqE-Guard}. However, the GRO e-values only maximize the growth rate under their alternatives. This makes sense when testing a single hypothesis. In our setting, the product of GRO e-values would only maximize the growth rate of $W_t=\prod_{i=1}^t E_i$ if all hypotheses are false --- a very unlikely scenario. Indeed, GRO e-values can be small if the null hypothesis is true, so directly using GRO e-values in \texttt{SeqE-Guard} can lead to low power. Hence, when multiplying e-values for different hypotheses one needs to incorporate the possibility that a hypothesis is true. One approach is to hedge the GRO e-values by defining $$\tilde{E}_i^{\mathrm{GRO}}=1-\lambda_i+\lambda_iE_i^{\mathrm{GRO}},$$ where $\lambda_i\in [0,1]$ is measurable with respect to $\mathcal{F}_{i-1}$. To see that $\tilde{E}_1^{\mathrm{GRO}}, \tilde{E}_2^{\mathrm{GRO}}, \ldots$ indeed define sequential e-values if the GRO e-values are sequential, just note that $\mathbb{E}_{P}[\tilde{E}_i^{\mathrm{GRO}}|\mathcal{F}_{i-1}]=1-\lambda_i+\lambda_i\mathbb{E}_{\mathbb{P}}[E_i^{\mathrm{GRO}}|\mathcal{F}_{i-1}]\leq 1$ for all $\mathbb{P} \in H_i$. 

In order to derive a reasonable choice for $\lambda_i$, we consider a specific Bayesian two-groups model. Suppose $H_i=\{\mathbb{P}_i\}$ and $H_i^A=\{\mathbb{Q}_i\}$ are both simple hypotheses and which of these hypotheses is true is random, where $\tau_i$ gives the probability that the alternative hypothesis $H_i^A$ is true. A reasonable approach would be to choose the e-value $E_i$ for $H_i$ that maximizes the growth rate under the true distribution $\mathbb{E}_{(1-\tau_i)\mathbb{P}_i+\tau_i\mathbb{Q}_i}[\log(E_i)]$. 
\begin{proposition}\label{prop:gro}
    Suppose $\mathbb{Q}_i$ is absolutely continuous with respect to $\mathbb{P}_i$. Then the e-value $E_i$ that maximizes the growth rate under the true distribution $(1-\tau_i)\mathbb{P}_i+\tau_i\mathbb{Q}_i$ is given by $1-\tau_i+\tau_iE_i^{\mathrm{GRO}}$, where $E_i^{\mathrm{GRO}}$ maximizes the growth rate under the alternative $\mathbb{Q}_i$. 
\end{proposition}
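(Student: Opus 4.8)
The plan is to recognize the proposed optimizer $1-\tau_i+\tau_i E_i^{\mathrm{GRO}}$ as nothing other than the likelihood ratio of the true mixture distribution against the null, after which the claim reduces to the simple-null GRO characterization already recalled in this section. Throughout I drop the index $i$ and write $\mathbb{P}$, $\mathbb{Q}$, $\tau$, and $R\coloneqq(1-\tau)\mathbb{P}+\tau\mathbb{Q}$ for the true data-generating distribution. An e-value for the simple null $H=\{\mathbb{P}\}$ is a nonnegative random variable $E$ with $\mathbb{E}_{\mathbb{P}}[E]\leq 1$, and the task is to maximize the growth rate $\mathbb{E}_R[\log E]$ over all such $E$; viewed this way, we are simply asking for the GRO e-value for null $\mathbb{P}$ against the ``alternative'' $R$.

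First I would record the key identity. Since $\mathbb{Q}\ll\mathbb{P}$, also $R\ll\mathbb{P}$, and its density is
\[
\frac{dR}{d\mathbb{P}}=(1-\tau)+\tau\frac{d\mathbb{Q}}{d\mathbb{P}}=1-\tau+\tau E^{\mathrm{GRO}},
\]
where I used that, for a simple null against a simple alternative, the GRO e-value $E^{\mathrm{GRO}}$ is the likelihood ratio $d\mathbb{Q}/d\mathbb{P}$ (as recalled above, \citep{shafer2021testing}). Denote this candidate by $E^\ast\coloneqq dR/d\mathbb{P}$. Feasibility is immediate: $\mathbb{E}_{\mathbb{P}}[E^\ast]=(1-\tau)+\tau\,\mathbb{E}_{\mathbb{P}}[d\mathbb{Q}/d\mathbb{P}]=1$, so $E^\ast$ is a valid e-value attaining the constraint with equality.

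For optimality I would argue directly by a change-of-measure plus Jensen step rather than through Lagrange multipliers. Fix any e-value $E$; we may assume $E>0$ holds $R$-almost surely, since otherwise $\mathbb{E}_R[\log E]=-\infty$ and $E$ is trivially dominated. Because $E^\ast\geq 1-\tau>0$ (for $\tau<1$), the ratio $E/E^\ast$ is well defined, and
\[
\mathbb{E}_R[\log E]-\mathbb{E}_R[\log E^\ast]=\mathbb{E}_R\!\left[\log\frac{E}{E^\ast}\right]\leq \log \mathbb{E}_R\!\left[\frac{E}{E^\ast}\right],
\]
using concavity of the logarithm. The change of measure $dR=E^\ast\,d\mathbb{P}$ then gives $\mathbb{E}_R[E/E^\ast]=\mathbb{E}_{\mathbb{P}}[E]\leq 1$, so the right-hand side is $\leq\log 1=0$. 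Hence $\mathbb{E}_R[\log E]\leq \mathbb{E}_R[\log E^\ast]$ for every e-value $E$, which is exactly the asserted optimality of $E^\ast=1-\tau+\tau E^{\mathrm{GRO}}$.

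The only genuinely delicate points are measure-theoretic bookkeeping: checking that $R\ll\mathbb{P}$ makes $E$ well defined $R$-a.s., justifying the change of measure on $\{E^\ast>0\}$, and covering the degenerate endpoints $\tau\in\{0,1\}$ (the case $\tau=1$ reduces to the pure GRO statement with $E^\ast=d\mathbb{Q}/d\mathbb{P}$, which may vanish, and $\tau=0$ forces $E^\ast\equiv 1$). I expect the main obstacle to be purely expository, namely handling these edge cases cleanly, since the conceptual content is entirely captured by the identity $1-\tau+\tau E^{\mathrm{GRO}}=dR/d\mathbb{P}$, after which the optimization is the familiar one solved by the likelihood ratio.
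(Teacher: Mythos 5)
Your proof is correct and takes essentially the same route as the paper: both rest on identifying $1-\tau_i+\tau_i E_i^{\mathrm{GRO}}$ as the likelihood ratio $d[(1-\tau_i)\mathbb{P}_i+\tau_i\mathbb{Q}_i]/d\mathbb{P}_i$ of the true mixture against the null, after which the claim is the simple-null GRO characterization. The only difference is that the paper cites \citet{shafer2021testing} for the optimality of the likelihood ratio, whereas you re-derive it inline via the Jensen plus change-of-measure argument --- a self-contained bonus, not a different approach.
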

\begin{proof}%[Proof of Proposition~\ref{prop:gro}]
    Due to \citet{shafer2021testing}, the e-value maximizing the growth rate under $(1-\tau_i)\mathbb{P}_i+\tau_i\mathbb{Q}_i$ is given by the likelihood ratio $$\frac{d[(1-\tau_i)\mathbb{P}_i+\tau_i\mathbb{Q}_i]}{d\mathbb{P}_i}=1-\tau_i+\tau_i\frac{ d \mathbb{Q}_i}{d\mathbb{P}_i}=1-\tau_i+\tau_iE_i^{\mathrm{GRO}},$$
    where $\frac{ d \mathbb{Q}_i}{d\mathbb{P}_i}$ denotes the Radon-Nikodym derivative.
\end{proof}

Proposition~\ref{prop:gro} shows that in the case of simple null and alternative hypotheses the optimal e-value under the true distribution is the same as if we maximize the growth rate under the alternative and then hedge the resulting e-value according to the probability that the alternative is true. We think it is possible that something analogous holds for composite null hypotheses as well. However, even if this is not the case, it seems to be a reasonable strategy in general.

Therefore, our approach is to specify an estimate $\hat{\tau}_i$ for the probability $\tau_i$ that $H_i$ is false, where $\hat{\tau}_i$ can either depend on prior information or the past data, and then set $\lambda_i=\hat{\tau}_i$. We propose to set  \begin{align}
\hat{\tau}_i=\frac{1/2+\sum_{j=1}^{i-1} \mathbbm{1}\{E_j^{\mathrm{GRO}}>1\}}{i} \qquad (i\in \mathbb{N}),\label{eq:weight_gro}
\end{align}
if no prior information is available. The idea is that  $E_j^{\mathrm{GRO}}>1$ indicates data favoring the alternative over the null distribution, since the GRO e-value can be interpreted as a (generalized) LR.

We illustrate the behavior of $\hat{\tau}_i$ defined in \eqref{eq:weight_gro} for different scenarios in Figure \ref{fig:sim_tau}. We consider the simulation setup described in Section \ref{sec:sim}, but with $n=100$ and proportion of false hypotheses $\pi_A\in \{0.2,0.5,0.8\}$. If the signal is weak, $\hat{\tau}_i$ lies between the true proportion and $0.5$ and if the signal is strong, the estimate $\hat{\tau}_i$ is close to the true proportion. We think that this behavior is desirable, since a tendency towards $0.5$ is not that hurtful if the alternative and null distribution are close as the GRO e-values have a small variance in this case.    

\begin{figure}[tb]
\centering
\includegraphics[width=0.8\textwidth]{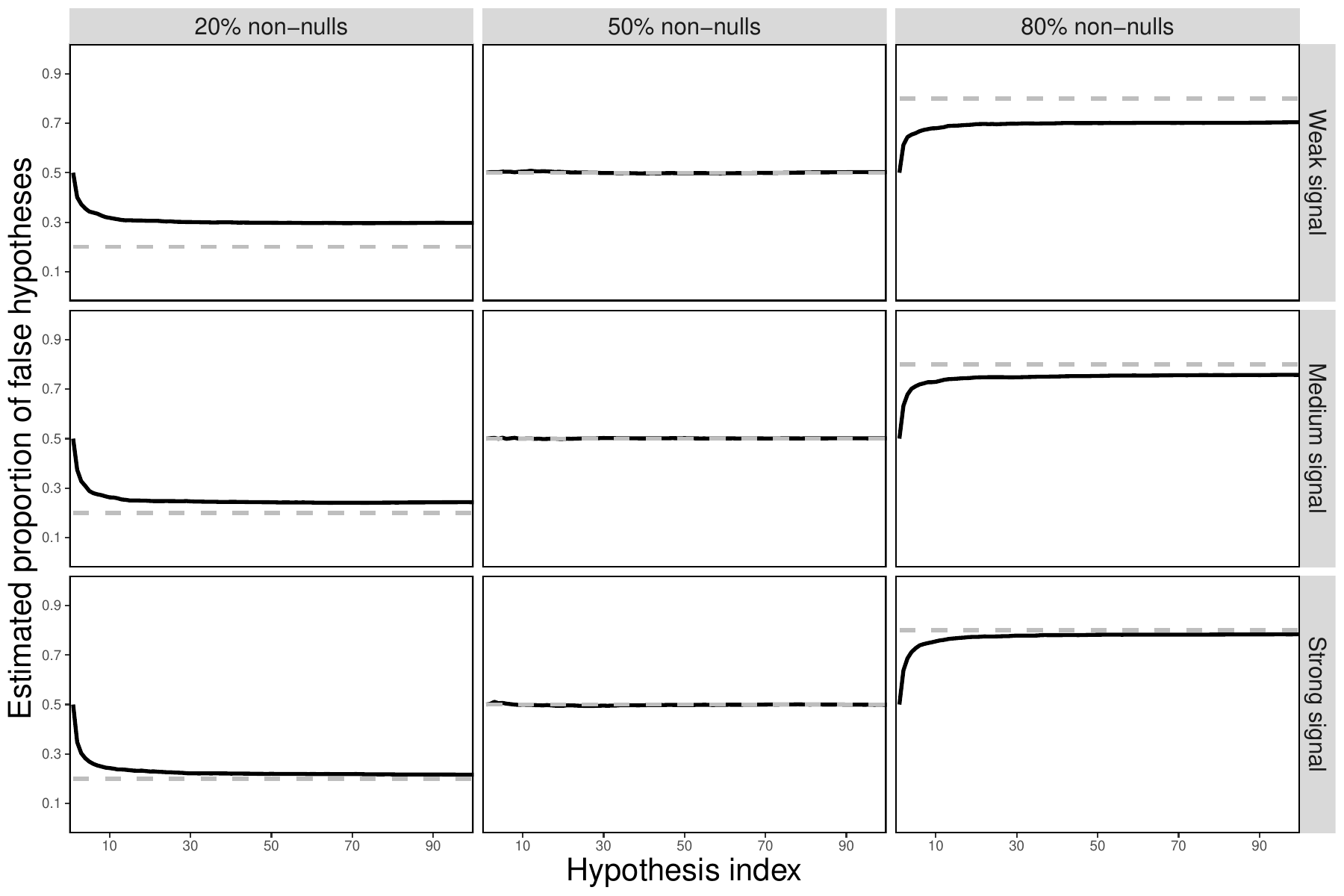}
\caption{Comparing $\hat{\tau}_i$ (solid line) defined in \eqref{eq:weight_gro} to the true $\tau_i$ (dashed line) for different scenarios. The simulation design is described in Section \ref{sec:sim}. The estimate $\hat{\tau}_i$ has a tendency to bias towards $50\%$ but it estimates the proportion of false hypotheses very well when the signal is strong.  \label{fig:sim_tau} }\end{figure}

    Hedging e-values before multiplying them does not only apply to GRO e-values. Such strategies for merging sequential e-values have been used by many preceding authors like \citet{waudby2023estimating, vovk2024merging}. However, the above argumentation provides a reasonable choice for the parameter $\lambda_i$ in our setting. We analyze this approach experimentally in Section \ref{sec:sim_gro}.

\section{Simulations\label{sec:sim}}

In this section we numerically calculate the true discovery proportion (TDP) bound, which is defined as the true discovery bound for $S_t$ divided by the size of $S_t$. We compare TDP bounds obtained by applying \texttt{SeqE-Guard} to the different sequential e-values proposed in the previous sections. In Subsection~\ref{sec:sim_KR}, we compare the \texttt{online-simple} method by \citet{katsevich2020simultaneous} with its uniform improvement. In Subsection~\ref{sec:sim_gro}, we demonstrate how hedging GRO e-values improve the true discovery bound. In Subsection~\ref{sec:sim_all}, we compare the proposed e-values to decide which is best suited for practice. \ifdiff \added{An application to real data can be found in the Supplementary Material~\ref{sec:IMPC}.} \else An application to real data can be found in  Supplementary Material~\ref{sec:IMPC}. \fi

We consider the same simulation setup in all subsections. We sequentially  test $n=1000$ null hypotheses $H_i$, $i\in \{1,\ldots,n\}$, of the from $H_i:X_i \sim \mathcal{N}(0,1)$ against the alternative  $H_i^A:X_i \sim \mathcal{N}(\mu_A,1)$ for some $\mu_A>0$, where $X_1,\ldots, X_{n}$ are independent data points or test statistics. The probability that the alternative hypothesis is true is set by a parameter $\pi_A\in (0,1)$ and the desired guarantee is set to $\alpha=0.1$. For all comparisons we consider a grid of simulation parameters $\mu_A\in \{2, 3, 4\}$ and $\pi_A\in \{0.1, 0.3, 0.5\}$, where we refer to $\mu_A=2$ as weak signal, $\mu_A=3$ as medium signal and $\mu_A=4$ as strong signal. The p-values are calculated by $\Phi(-X_i)$, where $\Phi$ is the CDF of a normal distribution. The raw GRO e-values are given by the likelihood ratio $E_i^{\mathrm{GRO}}=p_{\mu_A}(X_i)/p_0(X_i)$, where $p_{\mu_A}$ and  $p_{0}$ are the densities of a normal distribution with variance $1$ and mean $\mu_A$ and $0$, respectively. The query sets $S_t$, $t\in \{1,\ldots,t\}$, are defined as $S_t=\{i\in \{1,\ldots, t\}: P_i\leq \alpha\}$. All of the results in the following are obtained by averaging over $1000$ independent trials.

\subsection{Comparing the \texttt{online-simple} method \citep{katsevich2020simultaneous} with its improvements\label{sec:sim_KR}}

In Section~\ref{sec:katse} we showed that the \texttt{online-simple method} by \citet{katsevich2020simultaneous} can be uniformly improved by the \texttt{closed online-simple} procedure (Algorithm~\ref{alg:online-simple}). We also showed that this closed procedure can be further uniformly improved by the \texttt{admissible online-simple} procedure, which ensures that the expected value of each sequential e-value is exactly one. In this section, we aim to quantify the gain in power for making true discoveries by using these improvements instead of the \texttt{online-simple} method. Although \citet{katsevich2020simultaneous} proposed $a=1$ as default parameter, we found $a=3$ to perform better which is why we use it here. 

% We look at two different settings. In the sparse setting only $10 \%$ of the hypotheses are false ($\pi_A=0.1$), however, the strength of the alternatives is strong $(\mu_A=4)$. In the dense setting $50 \%$ of the hypotheses are false ($\pi_A=0.5$), but the strength of the alternatives is small ($\mu_A=2$).

The results are illustrated in Figure~\ref{fig:sim_KR}. It can be seen that the \texttt{closed online-simple} procedure leads to a substantial improvement of the \texttt{online-simple} procedure in all cases. Of the queried hypotheses, the former approximately identifies $10\%-20\%$ more as false. 
The additional improvement of the admissible online-simple method is quite small in this case, however, it is potentially larger for smaller parameters $a$ (see Section~\ref{sec:katse}).

In Section~\ref{sec:improvements_iqraa} we also provide simulations for our improvements over the methods proposed by \citet{iqraa2024false}. The behavior of both, the initially proposed methods and our improvements, is quite similar as in the \texttt{online-simple} case.

% Also note that when only $10\%$ of the hypotheses are false, the true discovery bound of the original \texttt{online-simple} method is trivial.

\begin{figure}[tb]
\centering
\includegraphics[width=0.8\textwidth]{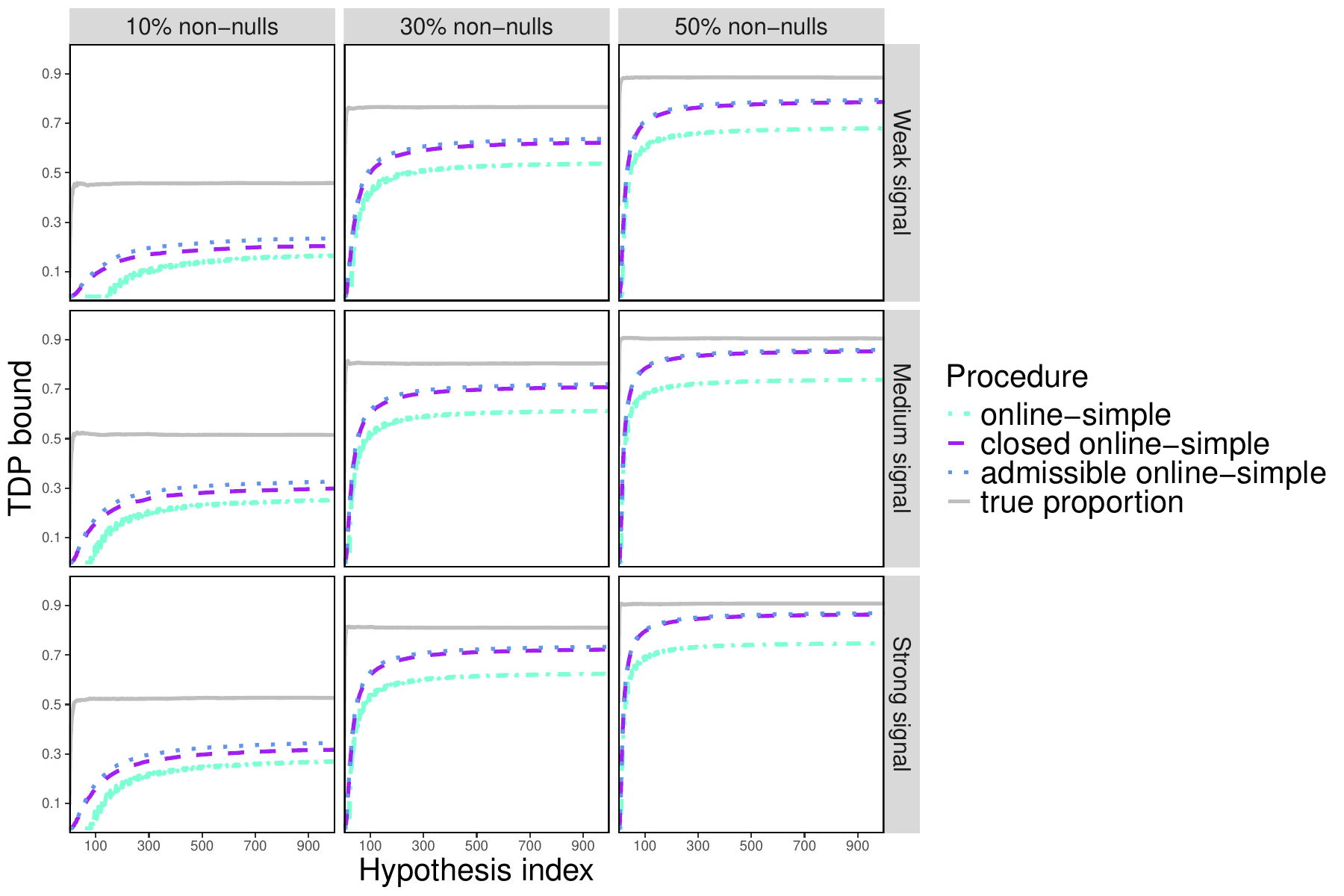}
\caption{True discovery proportion bounds obtained by the \texttt{online-simple} method \citep{katsevich2020simultaneous}, the \texttt{closed online-simple} method and the \texttt{admissible online-simple} method. 
The \texttt{closed online-simple} method leads to substantially larger bounds than the original \texttt{online-simple} method. 
An additional improvement can be obtained by the \texttt{admissible online-simple} method, which is closest to the true proportion (top line) in all figures. \label{fig:sim_KR} }\end{figure}

\subsection{GRO e-values\label{sec:sim_gro}}

In Section~\ref{sec:gro} we argued that the raw GRO e-values should be hedged to account for the probability that a null hypothesis is true. Now we compare the true discovery bounds obtained by applying \texttt{SeqE-Guard} to raw and hedged GRO e-values.
% In Section~\ref{sec:boosting} we showed how the (hedged) GRO e-values can be boosted by truncating them to avoid an overshoot. This leads to a uniform improvement compared to using the raw (hedged) e-values. In this section, we compare the true discovery bounds obtained by applying \texttt{SeqE-Guard} to raw, hedged and boosted GRO e-values. Note that the boosted e-values were obtained by applying the boosting technique to the \textit{hedged} GRO e-values. 
For the hedged GRO e-values we chose the predictable parameter proposed in \eqref{eq:weight_gro}. The results are illustrated in Figure~\ref{fig:sim_GRO}, showing that GRO e-values lead to very low bounds which can be increased substantially by hedging. 

% The bounds obtained by hedged GRO e-values can further be improved by boosting.

\begin{figure}[tb]
\centering
\includegraphics[width=0.8\textwidth]{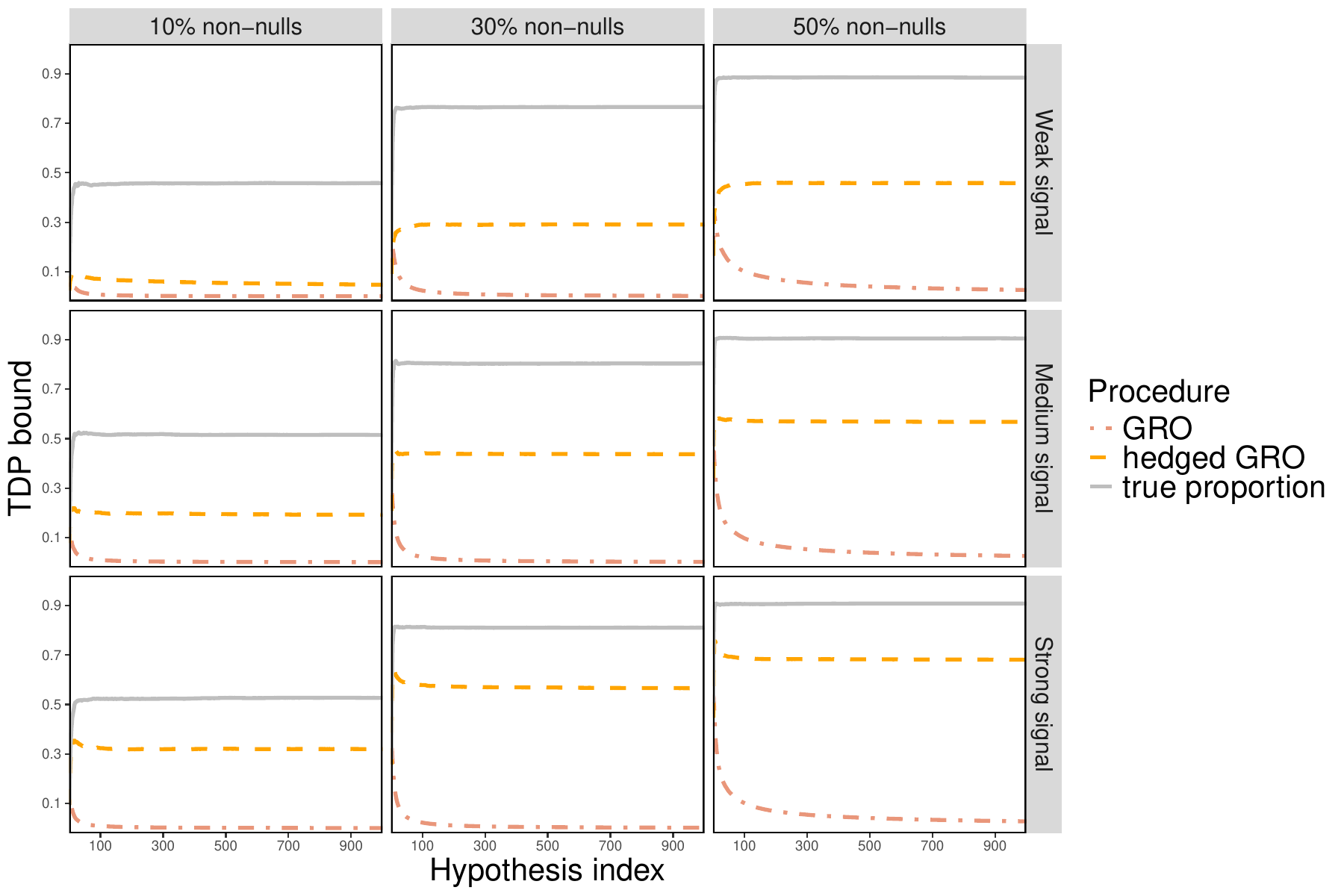}
\caption{True discovery proportion bounds obtained by applying \texttt{SeqE-Guard} to GRO e-values and hedged GRO e-values. The hedged GRO e-values improve the GRO e-values substantially. \label{fig:sim_GRO} }\end{figure}

\subsection{Which sequential e-values should we choose?\label{sec:sim_all}}

In this section, we compare the \texttt{SeqE-Guard} procedure when applied with the best versions of the proposed sequential e-values to derive recommendations for practice.  More precisely, we compare the \texttt{admissible online-simple} method with the hedged GRO e-value. 

% For the calibrated method we chose the parameter $x=0.1$ and boosted the e-values as described in Example~\ref{example:calibrator}. 

\begin{figure}[tb]
\centering
\includegraphics[width=0.8\textwidth]{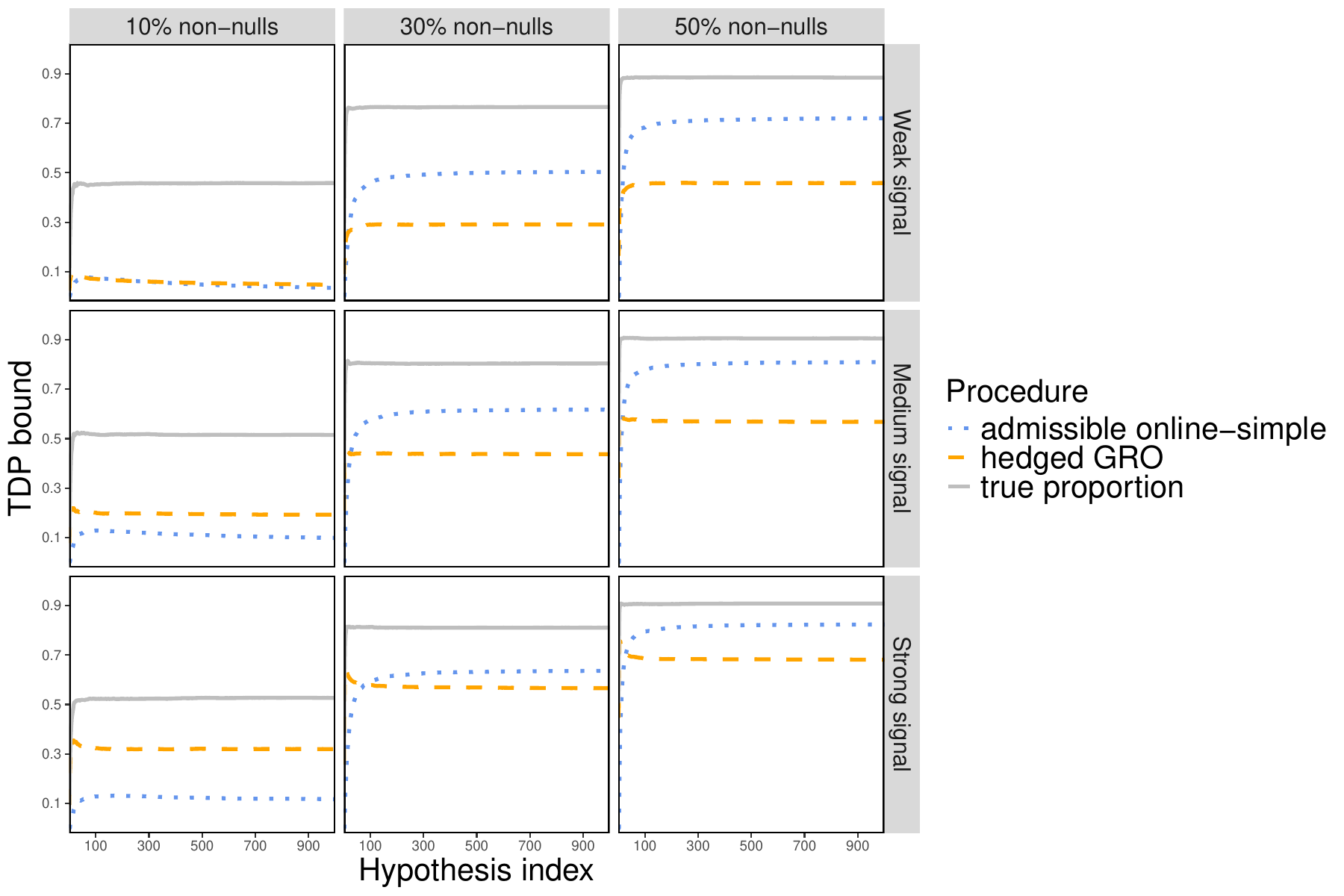}
\caption{True discovery proportion bounds obtained by the \texttt{admissible online-simple} method and applying \texttt{SeqE-Guard} to hedged GRO e-values. If the proportion of false hypotheses is small, the hedged GRO e-values perform well, particularly if the signal is strong. If the proportion of false hypotheses is large, the \texttt{admissible online-simple} method should be preferred. However, the   \texttt{admissible online-simple} method adapts to the specific query path, while the hedged GRO e-values are better suited for exploring multiple query paths. \label{fig:sim_best} }\end{figure}

The results are depicted in Figure~\ref{fig:sim_best}. 
The procedures perform quite differently in the various settings. When the proportion of false hypotheses is small, the \texttt{SeqE-Guard} algorithm performs best with hedged GRO e-values, particularly,  if the signal is strong. However, if the proportion of false hypotheses is large, the \texttt{admissible online-simple} method clearly outperforms the \texttt{SeqE-Guard} with hedged GRO e-values. Hence, if we expect sparse but strong signal, applying \texttt{SeqE-Guard} with hedged GRO e-values is the best choice. In contrast, for dense but weak signals, the \texttt{admissible online-simple} method should be preferred. 

% Another noticeable aspect is that the hedged GRO e-values always perform best at the beginning of the sequence and should therefore be used if (approximately) fewer than $100$ hypotheses are tested. 

Furthermore, it should be noted that the only reasonable query path for the \texttt{online-simple} method and its improvements is given by $S_t=\{i\leq t: P_i\leq \alpha_i\}$ (which we use in the simulations for $\alpha_i=\alpha$). To see this,  note that all e-values $E_i^{\text{os}}$, $i\leq t$, with $i\notin S_t$ are smaller than $1$ and therefore their inclusion in $S_t$ would not increase the lower bound for the number of true discoveries. Furthermore, excluding e-values $E_i^{\text{os}}$, $i\leq t$, with $P_i\leq \alpha_i$ would be difficult to communicate, as those e-values reached there maximum possible value. Hence, hedged GRO e-values are better suited  for an exploratory analysis where the scientist might be interested in several different query paths (see Corollary~\ref{corol:multiple_paths}). For example, \texttt{SeqE-Guard} with hedged GRO e-values can provide a (nontrivial) query path with online FWER control (by including $t\in S_t$ iff this implies $d_t=d_{t-1}+1$), while simultaneously providing (nontrivial) real-time lower bounds for the number of false hypotheses among all hypotheses with e-values greater than, say, $2$. For such an exploratory proceeding, we would recommend the hedged GRO e-values, since they showed good performance (Figure~\ref{fig:sim_GRO}) without adapting to the query path at all. In Supplementary Material~\ref{sec:boosting}, we show how GRO e-values could be adapted to the chosen query path using a boosting approach \citep{wang2022false} that uniformly improves the bounds for that specific query path (but may reduces the bounds for other query paths). 
In Supplementary Material~\ref{sec:calib} we show how one could investigate multiple query paths when only p-values are available, using other p-to-e calibrators than the simple binary calibrator applied for the \texttt{online-simple} method.

\section{Discussion}
In this paper, we proposed a new closed testing based online true discovery procedure for sequential e-values and derived a general short-cut that only requires one calculation per hypothesis. Although the \texttt{SeqE-Guard} algorithm is restricted to sequential e-values, it is a general procedure for the task of online true discovery guarantee, since there are many different ways to construct sequential e-values. In particular, it yields uniform improvements of the state-of-the-art methods by \citet{katsevich2020simultaneous} and \citet{iqraa2024false}, although they were not explicitly constructed using e-values.
% In fact, we believe that most computationally efficient online procedures with true discovery guarantee can be obtained or uniformly improved by SeqE-Guard.
% In fact, we even wonder whether there is any true discovery procedure for the online setting where we observe an independent data set for each hypothesis that is not uniformly improved by \texttt{SeqE-Guard}. This conjecture is based on our theoretical result that test martingales are required for online true discovery guarantee and the close connection between e-values and test martingales.
% In fact, we even wonder whether there is any other procedure for the online setting where we observe an independent data set for each hypothesis that is not uniformly improved by SeqE-Guard. This conjecture is based on our theoretical result that test martingales are required for online true discovery guarantee and the close connection between e-values and test martingales.
% Furthermore, the only existing procedures for this task introduced by \citet{katsevich2020simultaneous} were easily improved by \texttt{SeqE-Guard}, although they were not explicitly constructed using e-values.

From a theoretical point of view this paper gives new insights about the role of e-values in multiple testing by showing that every admissible coherent online true discovery procedure must be based on sequential e-values. From a practical point of view, we constructed a powerful and flexible multiple testing procedure, which allows to observe hypotheses one-by-one over time and make fully data-adaptive decisions about the hypotheses and stopping time while bounding the number of true discoveries or equivalently, the false discovery proportion. On the way, we introduced a new idea for hedging of GRO e-values in multiple testing. 

\ifdiff \added{Interestingly, even though we constructed the \texttt{SeqE-guard} algorithm primarily for the online setting, it also allows to uniformly improve some existing offline procedures. Future work could examine the implications of our results for the offline setting in more detail.} \else
Interestingly, even though we constructed the \texttt{SeqE-guard} algorithm primarily for the online setting, it also allows to uniformly improve some existing offline procedures. Future work could examine the implications of our results for the offline setting in more detail. \fi

\ifarxiv
\subsection*{Acknowledgments}
 The authors thank Etienne Roquain, \ifdiff \added{two anonymous reviewers, an associate editor and a joint editor for their helpful comments.} \else two anonymous reviewers, an associate editor and a joint editor for their helpful comments. \fi LF acknowledges funding by the Deutsche Forschungsgemeinschaft (DFG, German Research Foundation) – Project number 281474342/GRK2224/2. AR was funded by NSF grant DMS-2310718.
\else

\section{Acknowledgments}
 The authors thank Etienne Roquain, \ifdiff \added{two anonymous reviewers, an associate editor and a joint editor for their helpful comments.} \else two anonymous reviewers, an associate editor and a joint editor for their helpful comments. \fi LF acknowledges funding by the Deutsche Forschungsgemeinschaft (DFG, German Research Foundation) – Project number 281474342/GRK2224/2. AR was funded by NSF grant DMS-2310718.  

 \section{Competing interests}
No competing interest is declared.

\section{Supplementary material}
Supplementary material is available online at \textit{Journal of the Royal Statistical Society: Series B}.

\fi

\putbib
\end{bibunit}

% \bibliography{main}
% \bibliographystyle{plainnat}

\clearpage

\begin{center}
    {\LARGE \bfseries Supplementary material for ``Admissible online closed testing must employ e-values''} \\[1ex] % 
\end{center}

\ifarxiv
\renewcommand{\thesection}{S.\arabic{section}}
\renewcommand{\theequation}{S.\arabic{equation}}
\renewcommand{\thetheorem}{S.\arabic{theorem}}
\renewcommand{\theassumption}{S.\arabic{assumption}}
\renewcommand{\theproperty}{S.\arabic{property}}
\renewcommand{\thefact}{S.\arabic{fact}}
\renewcommand{\theproposition}{S.\arabic{proposition}}
\renewcommand{\thecorol}{S.\arabic{corol}}
\renewcommand{\thelemma}{S.\arabic{lemma}}
\renewcommand{\theremark}{S.\arabic{remark}}
\renewcommand{\theexample}{S.\arabic{example}}
\renewcommand{\thefigure}{S.\arabic{figure}}
\renewcommand{\thetable}{S.\arabic{table}}
\setcounter{section}{0}
\setcounter{equation}{0}
\setcounter{theorem}{0}
\setcounter{assumption}{0}
\setcounter{property}{0}
\setcounter{fact}{0}
\setcounter{remark}{0}
\setcounter{example}{0}
\setcounter{figure}{0}
\setcounter{table}{0}
\else
\renewcommand{\thesection}{S.\arabic{section}}
\renewcommand{\theequation}{S.\arabic{equation}}
\renewcommand{\thetheorem}{S.\arabic{theorem}}
\renewcommand{\thefact}{S.\arabic{fact}}
\renewcommand{\theproposition}{S.\arabic{proposition}}
\renewcommand{\theremark}{S.\arabic{remark}}
\renewcommand{\theexample}{S.\arabic{example}}
\renewcommand{\thefigure}{S.\arabic{figure}}
\renewcommand{\thetable}{S.\arabic{table}}
\setcounter{section}{0}
\setcounter{equation}{0}
\setcounter{theorem}{0}
\setcounter{fact}{0}
\setcounter{remark}{0}
\setcounter{example}{0}
\setcounter{figure}{0}
\setcounter{table}{0}
\fi

\begin{bibunit}

\section{Related literature\label{sec:related_lit}}

Our work mixes ingredients from different subfields of sequential and multiple hypothesis testing. 

The e-value has recently emerged as a fundamental concept in composite hypothesis testing and underlying a universal approach to anytime-valid inference \citep{wasserman2020universal,shafer2021testing,vovk2021values, grunwald2020safe,ramdas2020admissible}, but the roots can be traced back to the works of \citet{ville1939etude, wald1945sequential} and Robbins~\citep{darling1967confidence}. A recent overview of the e-value literature is given by \citet{ramdas2023game} and \citet[Chapter 1]{ramdas2024hypothesis}. 

Interest in e-values has grown rapidly in recent years, including in particular multiple testing with e-values. \citet{wang2022false} introduced and analyzed the e-BH procedure, an e-value variant of the popular Benjamini-Hochberg (BH) procedure \citep{benjamini1995controlling} for FDR control. \citet{vovk2021values} explored the possibility of combining several e-values by averaging and multiplication. They also used this to derive multiple testing procedures with familywise error rate (FWER) control by applying the closure principle \citep{marcus1976closed} with these combination rules. The FWER is a strict error criterion defined as the probability of rejecting any true null hypothesis. \citet{vovk2023confidence, vovk2024true} extended these ideas to obtain procedures with a true discovery guarantee. All the aforementioned approaches consider classical \emph{offline} multiple testing.

 \emph{Online} multiple testing   initially focused on procedures for p-values \citep{foster2008alpha, aharoni2014generalized, javanmard2018online, ramdas2017online}. An overview of this literature was recently provided by \citet{robertson2023online}. \citet{xu2024online} is the sole paper to consider online multiple testing with e-values, focusing on FDR control for dependent e-values.

A related line of work investigates simultaneous true discovery guarantees by closed testing, mostly focusing on offline settings with p-values. The closure principle was initially proposed and analyzed for FWER control \citep{marcus1976closed, sonnemann1982allgemeine, romano2011consonance}. However, \citet{goeman2011multiple} noted that the same principle can be applied to obtain simultaneous true discovery bounds, a more general and less conservative task than FWER control, although a similar approach was proposed earlier by \citet{genovese2004stochastic,genovese2006exceedance}. Many works have since built on these
\citep{goeman2019simultaneous,vesely2023permutation,hemerik2018false,li2024simultaneous}. 
% and constructed concrete procedures based on Simes' global null test \citep{goeman2019simultaneous}, permutation tests \citep{vesely2023permutation, hemerik2018false} and knockoffs \citep{li2024simultaneous}. 
Importantly, \citet{goeman2021only} proved that all admissible procedures for bounding the true discovery proportion must employ closed testing. 

The  recent work of \citet{fischer2024online} showed how the closure principle can  also be used for \emph{online} multiple testing. However, their investigation of admissibility and construction of concrete procedures is restricted to FWER control. The current work extends their ideas to lower bounds on the true discovery proportion.

Another related work to ours is by \citet{katsevich2020simultaneous}. They proposed various p-value based true discovery procedures for structured, knockoff and also online settings exploiting martingale techniques. \citet{iqraa2024false} modified and improved some of their methods with a focus on m-consistency, a property that relates true discovery procedures to FDR. Our work uniformly improves the methods by \citet{katsevich2020simultaneous} and \citet{iqraa2024false} for the online setting.

In Table~\ref{tab:messages}, we give a brief comparison of our theoretical results to the related works. Our paper is the first connecting online multiple testing to anytime-valid testing of a single hypothesis. Furthermore, our setting generalizes the ones considered by \citet{goeman2021only} (since the online setting generalizes the offline setting) and \citet{fischer2024online} (since true discovery guarantee generalizes FWER control). Consequently, our results immediately apply in their settings, but their results do not imply ours. 

\begin{table}[h!]
    \centering
    \begin{tabular}{ll}
    \hline
        \textbf{Paper} & \textbf{Key theoretical messages} \\ \hline
           \citet{goeman2021only}
        & \makecell[l]{\tabitem Admissible closed testing is sufficient for admissibility of \\ 
        \hphantom{\tabitem}monotone  stacks of true discovery procedures  \\ \tabitem All admissible true discovery procedures are closed procedures} 
        \\ \hline
        \citet{fischer2024online}
        & \makecell[l]{\tabitem Increasing families of online intersection tests are sufficient \\ 
        \hphantom{\tabitem}to obtain online closed procedures \\ 
           \tabitem All admissible FWER controlling online procedures are \\ 
        \hphantom{\tabitem}online closed procedures} 
        \\ \hline
         \citet{ramdas2020admissible}
        & \makecell[l]{\tabitem Sufficient conditions for admissibility of anytime-valid tests \\ 
        \hphantom{\tabitem}based on test martingales \\ 
           \tabitem All admissible anytime-valid tests must \\ 
        \hphantom{\tabitem}employ test martingales} 
        \\ \hline
        This paper
        & \makecell[l]{\tabitem Online closed procedures must employ anytime-valid tests \\ 
        \hphantom{\tabitem}for the intersection hypotheses \\ 
           \tabitem All admissible coherent online true discovery procedures \\ \hphantom{\tabitem}are online closed procedures.} 
        \\ \hline
    \end{tabular}
    \caption{Comparison of the key messages of closely related theoretical works. Our main claim, the first shown above, connects the work by \citet{ramdas2020admissible} with the online multiple testing literature \citep{fischer2024online, katsevich2020simultaneous, iqraa2024false} and thus allows to apply the powerful anytime-valid testing concept of test martingales to online multiple testing. 
    Furthermore, our second claim implies and unifies the second claims of \citet{goeman2021only} and \citet{fischer2024online}, since the online setting generalizes the offline setting and true discovery guarantee generalizes FWER control (and all admissible procedures are coherent in these settings).}
    \label{tab:messages}
\end{table}

\section{Simple example: Lower bounds for the number of false hypotheses\label{sec:example}}

A simple but interesting special case of the general online discovery process (see Figure~\ref{fig:flowchart_online}) occurs if we choose $S=\{1,\ldots,t\}$ at each time $t\in \mathbb{N}$. That means we observe a stream of hypotheses $H_1,H_2,\ldots$ with sequential e-values $E_1,E_2,\ldots$  and want to provide a real-time lower bound $\bd(\{1,\ldots,t\})=d_t$ for the number of false hypotheses among $H_1,\ldots,H_t$ which holds true with high probability simultaneously over all times $t$:
\begin{align}
\mathbb{P}(d_t\leq |\{i\leq t: H_i\text{ false}\}| \text{ for all }t\in \mathbb{N})\geq 1-\alpha \quad \text{ for some } \alpha\in [0,1]. \label{eq:bound_intro}
\end{align}

% Suppose that we have access to a stream of sequential e-values $E_1,E_2,\ldots$ . 
% such that the expected value of $E_t$ conditional on $E_{t-1},\ldots,E_1$ is bounded by one if $H_t$ is true\footnote{We will later develop algorithms for inputs that are p-values, but we will show that without loss of generality, one \emph{must} actually first convert these to e-values in order to obtain admissible procedures for goals like~\eqref{eq:bound_intro}.}. 
In this case, the \texttt{SeqE-Guard} algorithm consists at each time $t$ of two simple steps: (1) multiply the e-values up to step $t$; (2) if the product is greater or equal than $1/\alpha$, increase the lower bound by $1$ and exclude the largest e-value from the future analysis. 
More precisely, set $d_0=0$ and $A=\emptyset$, and then do for $t=1,2,\ldots:$
\begin{enumerate}
    \item Set $A=A\cup \{t\}$ and calculate $\Pi=\prod_{i\in A} E_i$. \label{bull:intro_1}
    \item If $\Pi\geq 1/\alpha$, then update $d_t=d_{t-1}+1$ and $A=A\setminus \{\text{index of largest e-value in }A\}$; otherwise, set $d_t=d_{t-1}$.
\end{enumerate}

For example, suppose $\alpha=0.05$ and the first five e-values are
$$
E_1=5,\quad E_2=4,\quad E_3=0.8, \quad E_4=0.5, \quad E_5=14.
$$
At time $t=2$ the product (of $E_1$ and $E_2$) equals $20$ and therefore we can set $d_2=1$ and then exclude $E_1$ from the future analysis. Then, at time $t=5$, the product (of $E_2, E_3,E_4$ and $E_5$) is again greater than $20$ and therefore we can increase the lower bound and set $d_5=2$. Hence, in this case we can confidently claim (with probability $0.95$) that there is at least one false hypothesis among $H_1$ and $H_2$ and at least two false hypotheses among $H_1,\ldots,H_5$. This claim remains valid regardless of how many hypotheses are tested in the future and what the e-values for these hypotheses look like. 

% Note that these claims are possible although none of the individual e-values is greater than $1/\alpha = 20$, which would be the level that an e-value is compared to when only a single hypothesis is tested. Further, a procedure to control familywise error rate (for example the e-Bonferroni procedure) for these five hypotheses would compare each e-value to $100$ in order to identify it as a non-null. Indeed, the e-Benjamini-Hochberg procedure would also make zero discoveries on these five e-values. Thus, it is impossible to confidently identify which e-values correspond to non-nulls, but our algorithm than still confidently certify that there are at least two non-nulls in the first five hypotheses. Why this is possible at all, and why this particular algorithm achieves the goal, is not meant to be obvious by any means. We hope it is intriguing to the reader, and that the rest of the paper will clarify how one can build such procedures, and indeed improve existing procedures in the literature.

Of course the claims above are too imprecise for many applications, as they only state that two of the five hypotheses are false, but not which ones. However, this is only a simple example. In general, the users of our \texttt{SeqE-Guard} algorithm can specify (based on the data) any subset of hypotheses in which they are interested and the \texttt{SeqE-Guard} algorithm will provide a lower bound for the number of false hypotheses in the subset that is valid simultaneously over all times and possible subsets. For example, a user might only be interested in the number of false hypotheses among $H_1$, $H_2$ and $H_5$, since due to their small e-values $H_3$ and $H_4$ are unlikely to be false anyway. The \texttt{SeqE-Guard} algorithm would then still provide a lower bound of $2$ for this subset, since excluding e-values smaller than $1$ from the query set won't decrease the bound (see Algorithm~\ref{alg:general}). This bound would be much more informative than the same lower bound for all five hypotheses, although we cannot make claims about the truth of individual hypotheses either. However, note that this is the same for FDR procedures, as they just state that the expected proportion of true hypotheses among the rejected ones is below $\alpha$.

\section{Uniform improvements of existing methods}
\subsection{Uniform improvement of the \texttt{online-adaptive} method by \citet{katsevich2020simultaneous}\label{appn:online-adaptive}}

Let p-values $P_1,P_2,\ldots$ and significance levels $\alpha_1,\alpha_2, \ldots $ be defined as for the \texttt{online-simple} algorithm (see Section~\ref{sec:katse}), and the null p-values be valid conditional on the past. Furthermore, let $(\lambda_i)_{i\in \mathbb{N}}$ be additional parameters such that $\lambda_i\in [\alpha_i,1)$ is measurable with respect to $\mathcal{F}_{i-1}$ and $B:=\sup_{i\in \mathbb{N}} \frac{\alpha_i}{1-\lambda_i}<\infty$. The \texttt{online-adaptive} bound by  \citet{katsevich2020simultaneous}
$$d^{\text{ad}}(S_t)=\left\lceil-ca+ \sum_{i=1}^t \mathbbm{1}\{P_i\leq \alpha_i\} - c \frac{\alpha_i}{1-\lambda_i} \mathbbm{1}\{P_i>\lambda_i\} \right\rceil$$
provides simultaneous true discovery guarantee over all sets $S_t=\{i\leq t: P_i\leq \alpha_i\}$, $t\in \mathbb{N}$, where $a>0$ is some regularization parameter and $c=\frac{\log(1/\alpha)}{a\log(1+(1-\alpha^{B/a})/B)}$. Note that $c$ has a different value than for the \texttt{online-simple} algorithm. Similar as demonstrated in Section~\ref{sec:katse},  \citet{katsevich2020simultaneous} proved the error guarantee by showing that $E_i^{\text{ad}}=\exp[\theta_c(\mathbbm{1}\{P_i\leq \alpha_i\} - c\frac{\alpha_i}{1-\lambda_i}\mathbbm{1}\{P_i>\lambda_i\})]$, $i\in \mathbb{N}$, are sequential e-values, where $\theta_c=\log(1/\alpha)/(ca)$. Note that
$$
\prod_{i\in I} E_i^{\text{ad}}\geq 1/\alpha \Leftrightarrow  \sum_{i\in I}  \mathbbm{1}\{P_i\leq \alpha_i\} - c\frac{\alpha_i}{1-\lambda_i}\mathbbm{1}\{P_i>\lambda_i\} \geq \log(1/\alpha)/\theta_c = ca \quad (I\in 2^{\mathbb{N}_f}).
$$
With this, one can define a uniform improvement of the \texttt{online-adaptive} algorithm in the exact same manner as for the \texttt{online-simple} algorithm. Note that the \texttt{online-adaptive} method already adapts to the proportion of null hypotheses using the parameter $\lambda_i$ and therefore cannot be further improved by the (online) closure principle in that direction. However, it still leads to a real uniform improvement by transforming it into a coherent procedure. 

\begin{proposition}\label{prop:improvement_ad}
    The \texttt{SeqE-Guard} algorithm applied with the sequential e-values $(E_i^{\text{ad}})_{i\in \mathbb{N}}$ uniformly improves the \texttt{online-adaptive} method by \citet{katsevich2020simultaneous}.
\end{proposition}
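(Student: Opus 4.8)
The plan is to reproduce, essentially verbatim, the reasoning that established the uniform improvement of \texttt{online-simple} in \eqref{eq:improvement_os}, adapting it to the adaptive penalty term. First I would invoke that $(E_i^{\text{ad}})_{i\in\mathbb{N}}$ are sequential e-values (shown in the setup above via the proof of \citet{katsevich2020simultaneous}), so that Theorem~\ref{theo:general_alg} applies: \texttt{SeqE-Guard} fed with these e-values reproduces the online closed procedure $\bd^{\bphi}$ associated with the intersection tests \eqref{eq:intersection_test_mart}, and in particular delivers simultaneous true discovery guarantee over the query path $S_t=\{i\leq t:P_i\leq\alpha_i\}$ used by \citet{katsevich2020simultaneous}.

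Next I would exploit the stated equivalence $\prod_{i\in I}E_i^{\text{ad}}\geq 1/\alpha\Leftrightarrow\sum_{i\in I}(\mathbbm{1}\{P_i\leq\alpha_i\}-c\frac{\alpha_i}{1-\lambda_i}\mathbbm{1}\{P_i>\lambda_i\})\geq ca$ to rewrite every threshold check of \texttt{SeqE-Guard} as a check on the running adaptive sum. Tracking the set $A_{t-1}^c$ of already-excluded queried indices exactly as in Algorithm~\ref{alg:online-simple}, the derivation of \eqref{eq:improvement_os} transfers line for line to give
\begin{align*}
d_t\geq |A_{t-1}^c|+\left\lceil-ca+\sum_{i\in\{1,\ldots,t\}\setminus A_{t-1}^c}\Big(\mathbbm{1}\{P_i\leq\alpha_i\}-c\tfrac{\alpha_i}{1-\lambda_i}\mathbbm{1}\{P_i>\lambda_i\}\Big)\right\rceil.
\end{align*}
The key observation, and the single place where the adaptive case departs from \texttt{online-simple}, is that every excluded index is a queried index, hence satisfies $P_i\leq\alpha_i\leq\lambda_i$; therefore $\mathbbm{1}\{P_i>\lambda_i\}=0$ and its summand equals exactly $1$. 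Consequently, deleting the indices in $A_{t-1}^c$ from the sum subtracts precisely $|A_{t-1}^c|$, which cancels the leading $|A_{t-1}^c|$ standing outside the ceiling. This yields $d_t\geq d^{\text{ad}}(S_t)$ with no leftover term — in contrast to \texttt{online-simple}, where a strictly positive term $c\sum_{i\in A_{t-1}^c}\alpha_i$ survived and produced the ``closed'' improvement. This is exactly why the adaptive method admits no further closed-testing gain in the adaptivity direction, as claimed in the surrounding discussion.

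The remaining (and genuine) source of improvement is coherence: \texttt{SeqE-Guard}'s bound $d_t$ is nondecreasing in $t$, whereas $d^{\text{ad}}(S_t)$ can decrease whenever a new penalty $-c\frac{\alpha_i}{1-\lambda_i}$ from an index with $P_i>\lambda_i$ outweighs recent indicator gains. To complete the uniform-improvement claim I would therefore exhibit an event of positive probability on which $d^{\text{ad}}(S_t)$ first rises and then dips below its earlier peak while $d_t$ remains at that peak, giving $d_t>d^{\text{ad}}(S_t)$ for some $t$ and some $\mathbb{P}\in\mathcal{P}$. I expect this last step — making the strict inequality non-vacuous rather than merely $d_t\geq d^{\text{ad}}(S_t)$ — to be the only real obstacle, since the pointwise inequality follows mechanically from the cancellation above; a short two- or three-hypothesis construction with one large null p-value (with $P_i>\lambda_i$) occurring after an early discovery should suffice.
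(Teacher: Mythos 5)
Your proposal is correct and follows essentially the same route as the paper, which proves Proposition~\ref{prop:improvement_ad} only by asserting that the derivation of \eqref{eq:improvement_os} transfers ``in the exact same manner'' and that the gain comes from coherence alone: your explicit cancellation --- every excluded index is queried, hence $P_i\leq\alpha_i\leq\lambda_i$, so its summand equals exactly $1$ and the leading $|A_{t-1}^c|$ is absorbed into the ceiling with no leftover penalty term --- is precisely the computation the paper leaves implicit, and it correctly explains why no additional closed-testing gain arises here. Your closing remark on exhibiting a positive-probability event where $d^{\text{ad}}(S_t)$ dips below an earlier peak while the coherent bound $d_t$ does not is, if anything, more careful than the paper, which simply asserts the coherence improvement is ``real'' without constructing such an event.
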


Furthermore, the e-values $E_i^{\text{ad}}$ are inadmissible if $\alpha_i/(1-\lambda_i)$ is not constant for all $i\in \mathbb{N}$ and thus can be improved. For this, note that $E_i^{\text{ad}}=\exp(\theta_c) $, if $P_i\leq \alpha_i$,  $E_i^{\text{ad}}=1 $, if $\alpha_i<P_i\leq \lambda_i$ and $E_i^{\text{ad}}=\exp(-\theta_cc \alpha_i/(1-\lambda_i)) $, if $P_i>\lambda_i$. Hence, for all $\mathbb{P}\in H_i$, we can provide a tight upper bound for the expectation of $E_i^{\text{ad}}$ by 
\ifarxiv
\begin{align*}
\mathbb{E}_{\mathbb{P}}[E_i^{\text{ad}}|\mathbb{F}_{i-1}]&=\exp(\theta_c)\mathbb{P}(P_i\leq \alpha_i|\mathcal{F}_{i-1})\\ &+\mathbb{P}(\alpha_i<P_i\leq \lambda_i|\mathcal{F}_{i-1})+\exp\left(-\theta_cc \frac{\alpha_i}{1-\lambda_i}\right)\mathbb{P}(P_i>\lambda_i|\mathcal{F}_{i-1}) \\
&= (\exp(\theta_c)-1)\mathbb{P}(P_i\leq \alpha_i|\mathcal{F}_{i-1})+\mathbb{P}(P_i\leq \lambda_i|\mathcal{F}_{i-1})\left[1-\exp\left(-\theta_cc \frac{\alpha_i}{1-\lambda_i}\right)\right]\\ &+\exp\left(-\theta_cc \frac{\alpha_i}{1-\lambda_i}\right) \\
&\leq (\exp(\theta_c)-1)\alpha_i+\lambda_i\left[1-\exp\left(-\theta_cc \frac{\alpha_i}{1-\lambda_i}\right)\right]+\exp\left(-\theta_cc \frac{\alpha_i}{1-\lambda_i}\right)\\ &=:u(\alpha,\alpha_i,\lambda_i,B,a),
\end{align*}
\else
\begin{align*}
&\mathbb{E}_{\mathbb{P}}[E_i^{\text{ad}}|\mathcal{F}_{i-1}]\\&=\exp(\theta_c)\mathbb{P}(P_i\leq \alpha_i|\mathcal{F}_{i-1})+\mathbb{P}(\alpha_i<P_i\leq \lambda_i|\mathcal{F}_{i-1})+\exp\left(-\theta_cc \frac{\alpha_i}{1-\lambda_i}\right)\mathbb{P}(P_i>\lambda_i|\mathcal{F}_{i-1}) \\
&= (\exp(\theta_c)-1)\mathbb{P}(P_i\leq \alpha_i|\mathcal{F}_{i-1})+\mathbb{P}(P_i\leq \lambda_i|\mathcal{F}_{i-1})\left[1-\exp\left(-\theta_cc \frac{\alpha_i}{1-\lambda_i}\right)\right]+\exp\left(-\theta_cc \frac{\alpha_i}{1-\lambda_i}\right) \\
&\leq (\exp(\theta_c)-1)\alpha_i+\lambda_i\left[1-\exp\left(-\theta_cc \frac{\alpha_i}{1-\lambda_i}\right)\right]+\exp\left(-\theta_cc \frac{\alpha_i}{1-\lambda_i}\right)\\ &=:u(\alpha,\alpha_i,\lambda_i,B,a),
\end{align*}
\fi
which can easily be calculated for given $\alpha$, $ \alpha_i$, $\lambda_i$, $B$ and $a$. If $B=\alpha_i/(1-\lambda_i)$, then $u(\alpha,\alpha_i,\lambda_i,B,a)=1$. However, in practice $\alpha_i/(1-\lambda_i)$ may vary over time such that there are indices $j\in \mathbb{N}$ with $\alpha_j/(1-\lambda_j)<B$. In this case, the e-value $E_j^{\text{ad}}$ becomes conservative. For example, if $\alpha=0.1$, $\alpha_i=0.1$, $\lambda_i=0.5$, $B=0.4$ and $a=1$, then $u(\alpha,\alpha_i,\lambda_i,B,a)=0.966$.

\subsection{Uniform improvements of the methods by \citet{iqraa2024false}\label{sec:improvements_iqraa}}

We derive uniform improvements of the online procedures \texttt{u-online-simple} and \texttt{u-online-Freedman} of \citet{iqraa2024false}. It should be noted that the uniform improvements presented here are not instances of our \texttt{SeqE-Guard} algorithm, however, they are obtained by taking the union of many \texttt{SeqE-Guard} bounds or by taking the mean of many test martingales (instead of a single one) for each intersection test, and hence are closely related to \texttt{SeqE-Guard}. 

\subsubsection{Uniform improvement of the \texttt{u-online-simple} method\label{sec:u-online-simple}}

\citet{iqraa2024false} have introduced a modified version of the \texttt{online-simple} method by \citet{katsevich2020simultaneous}. In the following, we show how this method can be uniformly improved by e-value based online closed testing.

\sloppy Let p-values $P_1,P_2,\ldots$ and significance levels $\alpha_1,\alpha_2, \ldots $ be defined as for the \texttt{online-simple} algorithm (see Section~\ref{sec:katse}), and the null p-values be valid conditional on the past. Instead of fixing the parameter $a>0$ in advance, \citet{iqraa2024false} combined \texttt{online-simple} bounds for different parameters using a union bound. More precisely, for each $a\in \mathbb{N}$ let $\alpha(a)=\frac{6\alpha}{a^2\pi^2}$ and $c_a=\frac{\log(1/\alpha(a))}{a\log(1+\log(1/\alpha(a))/a)}$. By \eqref{eq:online-simple}, we have that
for all $\mathbb{P}\in \mathcal{P}$:
\begin{align}
    & \mathbb{P}(\bd_a^{\text{os}}(S_t)\leq |S_t\cap I_1^{\mathbb{P}}| \text{ for all } t\in \mathbb{N})\geq 1-\alpha(a), \label{eq:os-guarantee} \\ 
    & \text{ where } \bd_a^{\text{os}}(S_t)=\left\lceil-c_aa+ \sum_{i=1}^t \mathbbm{1}\{P_i\leq \alpha_i\} - c_a\alpha_i\right\rceil \quad (t\in \mathbb{N}), \label{eq:os-a} 
\end{align}
with $\alpha_1,\alpha_2,\ldots$ being nonnegative thresholds and $S_t=\{i\leq t: P_i\leq \alpha_i\}$. \citet{iqraa2024false} proposed  
\begin{align}
    \bd^{\text{u-os}}(S_t)=\max_{a\in \mathbb{N}} \bd_a^{\text{os}}(S_t)\quad t\in \mathbb{N},
\end{align}
whose true discovery guarantee follows by applying a union bound to \eqref{eq:os-guarantee}. We refer to the procedure $\bd^{\text{u-os}}$ as \texttt{u-online-simple} method in this paper.

In Section~\ref{sec:katse} we have shown that the bound in \eqref{eq:os-a} can be uniformly improved by Algorithm~\ref{alg:online-simple} for all $a>0$ and $\alpha(a)\in (0,1)$. Hence, a simple uniform improvement of the \texttt{u-online-simple} method can be obtained by taking the maximum of these improved bounds. 

More precisely, let $E_i^{\text{os},a}=\exp[\theta_{c_a}(\mathbbm{1}\{P_i\leq \alpha_i\}-c_a\alpha_i)]$
% , $\tilde{E}_i^{\text{os},a}=\frac{E_i^{\text{os},a}}{\mathbb{E}[E_i^{\text{os},a}|\mathcal{F}_{i-1}]}$,
and define 
$
W_I^{t,a}\coloneqq \prod_{i\in I\cap \{1,\ldots,t\}} E_i^{\text{os},a}.
$
Then the \texttt{closed u-online-simple} method is given by $\bd^{\text{cu-os}}(S_t)=\max_{a\in \mathbb{N}} \bd^{\bphi^a}(S_t)$, where $\bd^{\bphi^a}$ is the online closed procedure obtained by the intersection tests 
$$
\phi_I^a\coloneqq \mathbbm{1}\left\{\exists t\in I: W_I^{t,a}\geq 1/\alpha(a)\right\} = \mathbbm{1}\left\{\exists t\in I: \frac{6}{a^2\pi^2} W_I^{t,a}\geq 1/\alpha\right\}, \quad a\in \mathbb{N}, I\subseteq \mathbb{N}.
$$
Since $\bd^{\bphi^a}(S_t)\geq \bd_a^{\text{os}}(S_t)$ for all $a\in \mathbb{N}$ (see Section~\ref{sec:katse}), we also have $\bd^{\text{cu-os}}(S_t)\geq \bd^{\text{u-os}}(S_t)$. 

However, we can improve $\bd^{\text{cu-os}}$ even further. For this, note that $\phi_I^a$ can be uniformly improved by
$$
\phi_I^{\text{m-os}}\coloneqq \mathbbm{1}\left\{\exists t\in I: \sum_{a\in \mathbb{N}} \frac{6}{a^2\pi^2} W_I^{t,a}\geq 1/\alpha\right\}, 
$$
simultaneously for all $a$. Since (weighted) means of test supermartingales are test supermartingales again, Ville's inequality implies that $\phi_I^{\text{m-os}}$ is an  intersection test and the online closed procedure $\bd^{\text{m-os}}\coloneqq \bd^{\bphi^{\text{m-os}}}$ provides simultaneous true discovery guarantee. We refer to $\bd^{\text{m-os}}$ as \texttt{m-online-simple} procedure in the following. Since $\phi_I^{\text{m-os}}\geq \phi_I^a$ for all $a\in \mathbb{N}$, we have
$$
\bd^{\text{m-os}}(S_t)\geq \bd^{\text{cu-os}}(S_t) \geq \bd^{\text{u-os}}(S_t) \text{ for all } t\in \mathbb{N}.
$$

Note that $E_i^{\text{os},a'}\geq E_j^{\text{os},a'}$ for some $a'\in \mathbb{N}$ and $i\neq j$ implies that $E_i^{\text{os},a}\geq E_j^{\text{os},a}$ for all $a\in \mathbb{N}$ and $E_i^{\text{os},a'}\geq 1$ for some $a'\in \mathbb{N}$ and $i\in \mathbb{N}$ implies that $E_i^{\text{os},a}\geq 1$ for all $a\in \mathbb{N}$. With this, we can derive a short-cut (see Algorithm~\ref{alg:mean-online-simple}) for $\bd^{\text{m-os}}(S_t)$, $t\in \mathbb{N}$, in the same manner as we did in Algorithms~\ref{alg:general}~and~\ref{alg:online-simple}. 

\begin{proposition}\label{prop:improvement_uos}
    The \texttt{m-online-simple} procedure (Algorithm~\ref{alg:mean-online-simple}) and the \texttt{cu-online-simple} procedure uniformly improve the \texttt{u-online-simple} method by \citet{iqraa2024false}.
\end{proposition}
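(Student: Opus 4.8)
The plan is to assemble the ingredients already prepared before the statement into three steps: validity of both procedures, the pointwise domination $\bd^{\text{m-os}} \geq \bd^{\text{cu-os}} \geq \bd^{\text{u-os}}$, and correctness of the short-cut in Algorithm~\ref{alg:mean-online-simple}. The first step is almost immediate. For each fixed $a \in \NN$ the variables $E_i^{\text{os},a}$ are sequential e-values, so by Section~\ref{sec:katse} (Theorem~\ref{theo:general_alg} applied with threshold $1/\alpha(a)$) the closed procedure $\bd^{\bphi^a}$ is a valid online true discovery procedure at level $\alpha(a)$ that uniformly improves $\bd_a^{\text{os}}$. A failure of $\bd^{\text{cu-os}} = \max_a \bd^{\bphi^a}$ at any time forces a failure of some single $\bd^{\bphi^a}$, so a union bound over $a$ together with $\sum_{a\in\NN} \alpha(a) = \alpha \sum_{a\in\NN} \frac{6}{a^2\pi^2} = \alpha$ yields the level-$\alpha$ guarantee; taking the maximum preserves the per-$a$ domination, giving $\bd^{\text{cu-os}} \geq \bd^{\text{u-os}}$.

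Next I would verify that \texttt{m-online-simple} is itself a valid online procedure. Writing $w_a = \frac{6}{a^2\pi^2}$, the process $M_I^t := \sum_{a\in\NN} w_a W_I^{t,a}$ is a nonnegative convex combination (the weights sum to one) of the test supermartingales $(W_I^{\cdot,a})_a$ for $H_I$, hence is itself a test supermartingale for $H_I$ with $M_I^0 = 1$; Ville's inequality then shows $\phi_I^{\text{m-os}} = \mathbbm{1}\{\exists t \in I: M_I^t \geq 1/\alpha\}$ is a level-$\alpha$ intersection test. It is an online intersection test since $M_I^t$ is $\mathcal{F}_t$-measurable, and the family is increasing because for $I \subseteq \{1,\ldots,i\}$ and $K \subseteq \{k > i\}$ one has $M_{I\cup K}^t = M_I^t$ for every $t \in I$, so $\sup_{t\in I\cup K} M_{I\cup K}^t \geq \sup_{t\in I} M_I^t$. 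Thus $\bd^{\text{m-os}} = \bd^{\bphi^{\text{m-os}}}$ delivers a simultaneous true discovery guarantee as a special case of the online closure principle.

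For the domination chain I would use monotonicity of the closed procedure in its intersection tests: if $\phi_I \geq \psi_I$ for all $I$ then $\{I : \phi_I = 0\} \subseteq \{I : \psi_I = 0\}$, so $\bd^{\bphi}(S) = \inf\{|S\setminus I| : \phi_I = 0\} \geq \bd^{\boldsymbol{\psi}}(S)$. Since every term in $M_I^t$ is nonnegative, $M_I^t \geq w_a W_I^{t,a}$, and because $\phi_I^a$ fires exactly when $w_a W_I^{t,a} \geq 1/\alpha$ for some $t$, we get $\phi_I^{\text{m-os}} \geq \phi_I^a$ for every $a$ and $I$. Hence $\bd^{\text{m-os}} \geq \bd^{\bphi^a}$ for all $a$, and therefore $\bd^{\text{m-os}} \geq \max_a \bd^{\bphi^a} = \bd^{\text{cu-os}} \geq \bd^{\text{u-os}}$. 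Strictness (needed for \emph{uniform} improvement) is inherited from the per-$a$ strict improvement of $\bd^{\bphi^a}$ over $\bd_a^{\text{os}}$ established in Section~\ref{sec:katse}, or alternatively from realizations in which the weighted sum crosses $1/\alpha$ although no single product does.

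The main work — and the step I expect to be the genuine obstacle — is the short-cut: showing Algorithm~\ref{alg:mean-online-simple} outputs exactly $\bd^{\text{m-os}}(S_t)$. I would mirror the proof of Theorem~\ref{theo:general_alg}, but now the greedy ``exclude the largest remaining e-value'' step must be shown to optimally minimize $\sup_t \sum_{a} w_a \prod_{i} E_i^{\text{os},a}$ over the admissible exclusion sets, rather than a single product. This is exactly where the two consistency facts noted just before the proposition are essential: because the magnitude ordering of the $E_i^{\text{os},a}$ in $i$ is identical for every $a$ (and ``$E_i^{\text{os},a} \geq 1$'' holds for all $a$ or none), the index maximizing one product maximizes all of them simultaneously, so removing it most reduces every component $W_I^{t,a}$ and hence the whole mean $M_I^t$. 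With this uniform ordering in hand, the inductive bookkeeping of Theorem~\ref{theo:general_alg} (tracking the excluded set $A^c$ and the threshold-crossing times) carries over with products replaced by the weighted mean, which I would write out in full rather than merely assert.
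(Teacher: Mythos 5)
Your proposal is correct and follows essentially the same route as the paper: per-$a$ improvement plus the union bound (using $\sum_a 6/(a^2\pi^2)=1$) gives $\bd^{\text{cu-os}}\geq\bd^{\text{u-os}}$, the weighted mean of test supermartingales with Ville's inequality validates $\phi_I^{\text{m-os}}$, test-wise dominance $\phi_I^{\text{m-os}}\geq\phi_I^a$ gives the chain of bounds, and the cross-$a$ ordering consistency of the $E_i^{\text{os},a}$ justifies transplanting the greedy short-cut of Theorem~\ref{theo:general_alg}. You in fact spell out several steps (the union-bound arithmetic, monotonicity of $\bd^{\bphi}$ in $\bphi$, and the increasing/online properties of $\phi_I^{\text{m-os}}$) that the paper leaves implicit in the surrounding text.
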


\begin{algorithm}
\begin{flushleft} \caption{\texttt{m-online-simple}} \label{alg:mean-online-simple}
 \textbf{Input:} Sequence of p-values $P_1,P_2,\ldots$ and sequence of (potentially data-dependent) individual significance levels $\alpha_1,\alpha_2,\ldots$ .\\ 
 \textbf{Output:} Query sets $S_1\subseteq S_2 \subseteq \ldots$ and true discovery bounds $d_1\leq d_2 \leq \ldots$ .
 \end{flushleft}
\begin{algorithmic}[1]
\State $d_0=0$
\State $S_0=\emptyset$
\State $A_0^c=\emptyset$
\For{$t=1,2,\ldots$}
\If{$P_t\leq \alpha_t$}
\State $S_t=S_{t-1}\cup \{t\}$
\Else
\State $S_t=S_{t-1}$
\EndIf
\If{$\sum_{a\in \mathbb{N}} \frac{6}{a^2\pi^2} \prod_{i\in \{1,\ldots,t\}\setminus A_{t-1}^c} E_i^{\text{os},a}  \geq  1/\alpha$} 
\State $d_t=d_{t-1}+1$
\State $A_{t}^c=A_{t-1}^c\cup \{\text{index of smallest individual significance level in } S_t\setminus A_{t-1}^c\}$
\Else
\State $d_{t}=d_{t-1}$
\State $A_{t}^c=A_{t-1}^c$
\EndIf
\State \Return $S_t, d_t$
\EndFor
\end{algorithmic}
\end{algorithm}

\sloppy In Figure~\ref{fig:sim_m-os}, we compare the \texttt{u-online-simple} method \citep{iqraa2024false} with our \texttt{m-online-simple} procedure (Algorithm~\ref{alg:mean-online-simple}). The simulation setup is the same as described in Section~\ref{sec:sim}. The plots show that the improvement obtained by the \texttt{m-online-simple} procedure is substantial.

\begin{figure}[tb]
\centering
\includegraphics[width=0.8\textwidth]{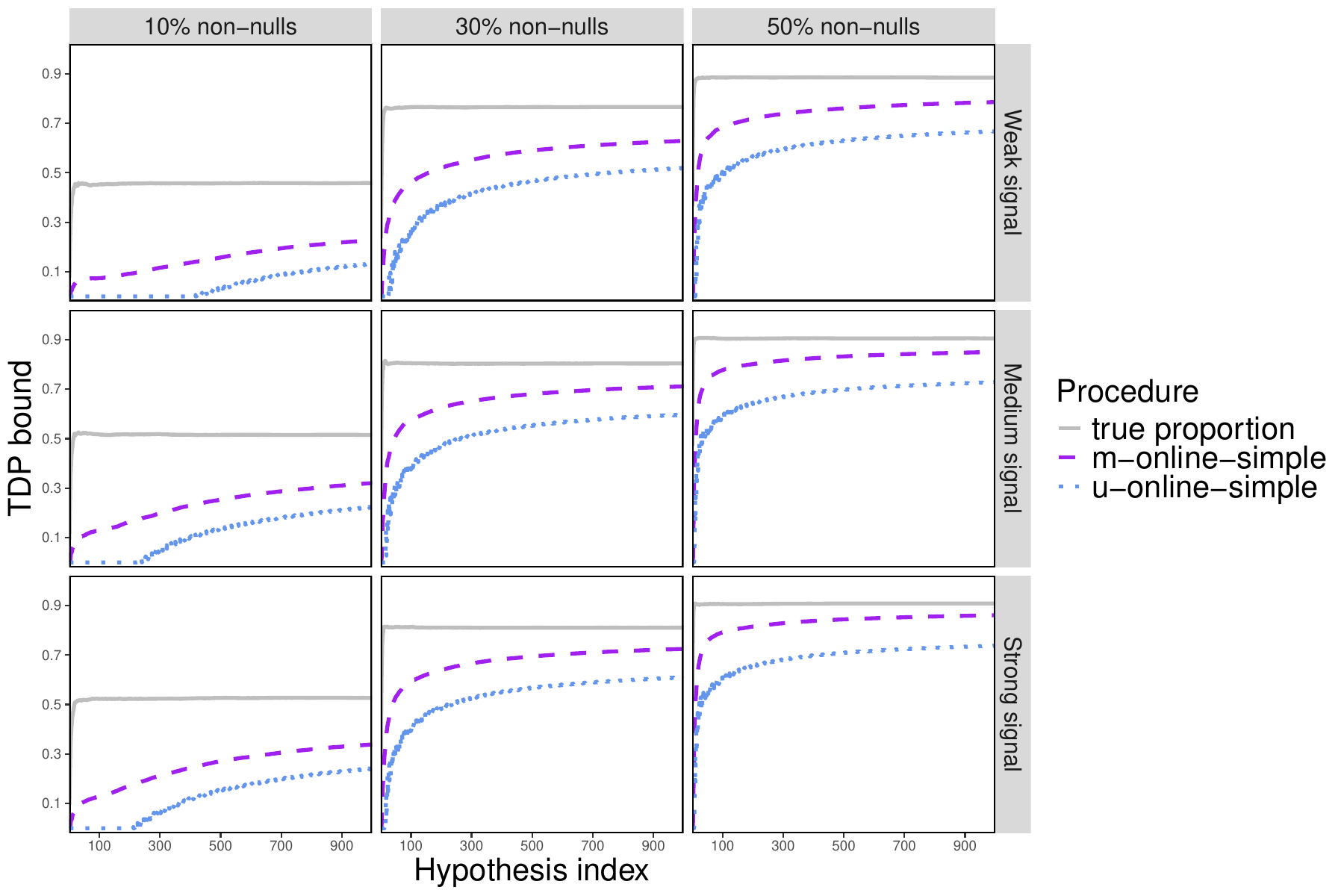}
\caption{True discovery proportion bounds obtained by the \texttt{u-online-simple} method \citep{iqraa2024false} and our \texttt{m-online-simple} method. The \texttt{m-online-simple} procedure improves the \texttt{u-online-simple} method substantially. \label{fig:sim_m-os} }\end{figure}

\subsubsection{Uniform improvement of the \texttt{u-online-Freedman} method\label{sec:online-freedman}}

Similarly as the \texttt{u-online-simple} method (Section~\ref{sec:u-online-simple}), the \texttt{u-online-Freedman} procedure is obtained by taking the (weighted) union of many individual bounds. However, instead of using the \texttt{online-simple} bounds by \citet{katsevich2020simultaneous}, the \texttt{u-online-Freedman} procedure is based on Freedman's inequality \citep{freedman1975tail}. 

\sloppy Let p-values $P_1,P_2,\ldots$ and significance levels $\alpha_1,\alpha_2, \ldots $ be defined as for the \texttt{online-simple} algorithm (see Section~\ref{sec:katse}), and the null p-values be valid conditional on the past. \citet{iqraa2024false} showed (see Corollary 41 in \citep{iqraa2024false}) that 
for all $\mathbb{P}\in \mathcal{P}$:
\begin{align}
    & \mathbb{P}(\bd_a^{\text{Freed}}(S_t)>  |S_t\cap I_1^{\mathbb{P}}| \text{ and } A_t\leq a \text{ for some } t\in \mathbb{N})\leq \alpha(a), \label{eq:freed-guarantee} \\ 
    & \text{ where } \bd_a^{\text{Freed}}(S_t)=1+\left\lfloor -\kappa_a + \sum_{i=1}^t \mathbbm{1}\{P_i\leq \alpha_i\} - \alpha_i \right\rfloor, \label{eq:freed-bound} \\ 
    & \text{ and } B_t=\sum_{i=1}^t \alpha_i(1-\alpha_i),
\end{align}
\sloppy with $S_t=\{i\leq t: P_i\leq \alpha_i\}$, $\kappa_a=\sqrt{2a\log(1/\alpha(a))} + \frac{\log(1/\alpha(a))}{2}$ and $\alpha(a)=\alpha\left(\frac{6}{\max(2\log_2(a),1)^2 (\pi^2 + 6)}\right)$ for some parameter $a=2^{j/2}, j\in \mathbb{N}\cup \{0\}$. The \texttt{u-online-Freedman} procedure is then obtained by applying a union to \eqref{eq:freed-guarantee} over all $j\in \mathbb{N}\cup \{0\}$ (see Corollary 42 in \citep{iqraa2024false}). In the following, we show that the \texttt{SeqE-Guard} algorithm allows to uniformly improve the bound $\bd_a^{\text{Freed}}$ for each $a$. 

 \citet{howard2020time} improved Freedman's inequality by a test martingale approach. Our improvement of the bound $\bd_a^{\text{Freed}}$ is based on the same technique, which additionally uses the \texttt{SeqE-Guard} algorithm. For this, we define the sequential e-values
 \begin{align}
     E_i^{\text{Freed},a}\coloneqq \exp(\lambda_a(\mathbbm{1}\{P_i\leq \alpha_i\}-\alpha_i)-\psi(\lambda_a)\alpha_i(1-\alpha_i)) \quad (i\in \mathbb{N}), \label{eq:freedman_e-val}
 \end{align}
 where $\lambda_a=\log\left(1+\frac{\kappa_a}{a}\right)$ and $\psi(\lambda_a)=\exp(\lambda_a)-\lambda_a-1$. To see that $E_i^{\text{Freed},a}$, $i\in \mathbb{N}$, are sequential e-values, define $X_i\coloneqq \mathbbm{1}\{P_i\leq \alpha_i\}-\alpha_i$ and note that for all $i\in I_0^{\mathbb{P}}$ it holds that $X_i\leq 1$, $\mathbb{E}_{\mathbb{P}}[X_i|\mathcal{F}_{i-1}]\leq 0$ and $\mathbb{E}_{\mathbb{P}}[X_i^2|\mathcal{F}_{i-1}]\leq \alpha_i(1-\alpha_i)$. Therefore, $\mathbb{E_{\mathbb{P}}}[\exp(\lambda_a X_i)|\mathcal{F}_{i-1}]\leq \exp(\psi(\lambda_a) \alpha_i(1-\alpha_i))$ (e.g., Lemma 6.7 in \citep{tropp2012user}). Now let $d_t^{\text{Freed},a}$ be the bound obtained by applying \texttt{SeqE-Guard} with $E_1^{\text{Freed},a}, E_2^{\text{Freed},a},\ldots\ .$ In the same manner as in \eqref{eq:improvement_os}, we get that
 \begin{align*}
 d_t^{\text{Freed},a}&\geq 1+ \left\lfloor -\frac{\log(1/\alpha(a))}{\lambda_a}+\sum_{i=1}^t\mathbbm{1}\{P_i\leq \alpha_i\}-\alpha_i - \frac{\psi(\lambda_a)\alpha_i(1-\alpha_i)}{\lambda_a} \right\rfloor \\
 &= 1 + \left\lfloor -\frac{\log(1/\alpha(a))+\psi(\lambda_a)B_t}{\lambda_a}+\sum_{i=1}^t\mathbbm{1}\{P_i\leq \alpha_i\}-\alpha_i\right\rfloor.
\end{align*}
Further, if $B_t\leq a$, we obtain
\begin{align*}
    &-\frac{\log(1/\alpha(a))+\psi(\lambda_a)B_t}{\lambda_a} \\&\geq  -\frac{\log(1/\alpha(a))+\psi(\lambda_a)a}{\lambda_a}\\
    &=  -\frac{\log(1/\alpha(a))+\kappa_a-\log\left(1+\frac{\kappa_a}{a}\right)(a+\kappa_a)+\log\left(1+\frac{\kappa_a}{a}\right)\kappa_a}{\log\left(1+\frac{\kappa_a}{a}\right)} \\
    &\geq -\frac{\log(1/\alpha(a))-2(\sqrt{a+\kappa_a}-\sqrt{a})^2+\log\left(1+\frac{\kappa_a}{a}\right)\kappa_a}{\log\left(1+\frac{\kappa_a}{a}\right)}, \\
\end{align*}

where the second inequality follows by Lemma~43 of \citet{iqraa2024false}. This can be further simplified by

\begin{align*}
     & -\frac{\log(1/\alpha(a))-2(\sqrt{a+\kappa_a}-\sqrt{a})^2+\log\left(1+\frac{\kappa_a}{a}\right)\kappa_a}{\log\left(1+\frac{\kappa_a}{a}\right)} \\ &= -\frac{\log(1/\alpha(a))-2(\sqrt{a+\kappa_a}-\sqrt{a})^2}{\log\left(1+\frac{\kappa_a}{a}\right)} - \kappa_a 
    \\ &= -\frac{\log(1/\alpha(a))-2\left(\sqrt{a+\sqrt{2a\log(1/\alpha(a))} + \frac{\log(1/\alpha(a))}{2}}-\sqrt{a}\right)^2}{\log\left(1+\frac{\kappa_a}{a}\right)} - \kappa_a \\
    &= -\frac{\log(1/\alpha(a))-2\left(\sqrt{\left(\sqrt{a} + \sqrt{\frac{\log(1/\alpha(a))}{2}}\right)^2}-\sqrt{a}\right)^2}{\log\left(1+\frac{\kappa_a}{a}\right)} - \kappa_a \\
    &= -\frac{\log(1/\alpha(a))-2\left(\sqrt{\frac{\log(1/\alpha(a))}{2}}\right)^2}{\log\left(1+\frac{\kappa_a}{a}\right)} - \kappa_a \\
    &= -\kappa_a,
\end{align*}
which shows that $d_t^{\text{Freed},a} \geq \bd_a^{\text{Freed}}(S_t)$ for all $a\geq B_t$ and we additionally improve \eqref{eq:freed-guarantee} by providing a nontrivial bound for $a<B_t$.  Hence, a uniform improvement of the \texttt{u-online-Freedman} procedure can either be obtained by taking the maximum of the improved bounds $d_t^{\text{Freed},a}$ over all $a=2^j$, $j\in \mathbb{N}\cup \{0\}$, or by the more powerful mean strategy described in Supplementary Material~\ref{sec:u-online-simple}. In line with Supplementary Material~\ref{sec:u-online-simple}, we call these two procedures \texttt{cu-online-Freedman} and \texttt{m-online-Freedman}, respectively.

\begin{proposition}\label{prop:improvement_freed}
    The \texttt{m-online-Freedman} procedure and the \texttt{cu-online-Freedman} procedure uniformly improve the \texttt{u-online-Freedman} method by \citet{iqraa2024false}.
\end{proposition}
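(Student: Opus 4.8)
The plan is to mirror the argument given for the \texttt{u-online-simple} method in Appendix~\ref{sec:u-online-simple}, since the only Freedman-specific ingredient---the inequality $d_t^{\text{Freed},a}\geq \bd_a^{\text{Freed}}(S_t)$ whenever $a\geq B_t$---has already been established by the displayed chain of inequalities preceding the statement, using the definitions of $\kappa_a$, $\lambda_a$, $\psi(\lambda_a)$ and Lemma~43 of \citet{iqraa2024false}. What remains is purely structural: to exhibit the two new procedures as valid online closed procedures and to chain the relevant dominations. Throughout I write $W_I^{\text{Freed},t,a}=\prod_{i\in I\cap\{1,\ldots,t\}}E_i^{\text{Freed},a}$ for the test supermartingale built from the Freedman e-values \eqref{eq:freedman_e-val}, and $\phi_I^{\text{Freed},a}=\mathbbm{1}\{\exists t\in I: W_I^{\text{Freed},t,a}\geq 1/\alpha(a)\}$ for the associated level-$\alpha(a)$ intersection test, whose \texttt{SeqE-Guard} short-cut returns exactly $d_t^{\text{Freed},a}$ by Theorem~\ref{theo:general_alg}.

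For the \texttt{cu-online-Freedman} procedure $\bd^{\text{cu-Freed}}(S_t)=\max_a d_t^{\text{Freed},a}$, I would first note that each $d_t^{\text{Freed},a}$ is, by Theorem~\ref{theo:general_alg}, a valid simultaneous true discovery bound at level $\alpha(a)$, because the $E_i^{\text{Freed},a}$ are genuine sequential e-values (verified above the statement). Crucially, this validity is \emph{unconditional} in $a$, in contrast to the guarantee \eqref{eq:freed-guarantee} for $\bd_a^{\text{Freed}}$, which may only be invoked in the regime $B_t\leq a$. A union bound over $a=2^{j/2}$, $j\in\mathbb{N}\cup\{0\}$, with the \emph{same} error allocation $\alpha(a)$ used by \citet{iqraa2024false}, then shows that $\bd^{\text{cu-Freed}}$ controls the error at level $\alpha$; here one checks $\sum_{j}\alpha(2^{j/2})\leq\alpha$, which holds with equality since the weights $\tfrac{6}{\max(j,1)^2(\pi^2+6)}$ sum to one. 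The domination follows by restricting the maximum to the regime where the original bound is actually usable:
\[
\max_a d_t^{\text{Freed},a}\ \geq\ \max_{a\geq B_t} d_t^{\text{Freed},a}\ \geq\ \max_{a\geq B_t}\bd_a^{\text{Freed}}(S_t)\ =\ \bd^{\text{u-Freed}}(S_t),
\]
the second inequality being the already-proved termwise fact. The improvement is moreover strict in general since our bounds remain nontrivial for $a<B_t$, a range discarded by the original union.

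For the more powerful \texttt{m-online-Freedman} procedure I would replace the family of intersection tests by the single test
\[
\phi_I^{\text{m-Freed}}=\mathbbm{1}\Bigl\{\exists t\in I:\ \textstyle\sum_{j\geq 0} w_{a}\, W_I^{\text{Freed},t,a}\geq 1/\alpha\Bigr\},\qquad w_a=\alpha(a)/\alpha,\ \ a=2^{j/2}.
\]
Since a weighted mean (with weights summing to one) of test supermartingales is again a test supermartingale for $H_I$, Ville's inequality shows $\phi_I^{\text{m-Freed}}$ is a level-$\alpha$ online intersection test, and $\bphi^{\text{m-Freed}}$ is increasing by the same argument as for the product tests in Section~\ref{sec:seqE-Guard}; hence the closed procedure $\bd^{\text{m-Freed}}$ is a valid online true discovery procedure, with a \texttt{SeqE-Guard} short-cut given by the obvious analogue of Algorithm~\ref{alg:mean-online-simple}. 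To see it dominates the previous procedure, observe that $W_I^{\text{Freed},t,a}\geq 1/\alpha(a)$ is equivalent to $w_a W_I^{\text{Freed},t,a}\geq 1/\alpha$, so nonnegativity of the remaining summands gives $\phi_I^{\text{m-Freed}}\geq \phi_I^{\text{Freed},a}$ for every $a$ and every $I$; passing to closed procedures yields $\bd^{\text{m-Freed}}(S_t)\geq \max_a d_t^{\text{Freed},a}=\bd^{\text{cu-Freed}}(S_t)\geq \bd^{\text{u-Freed}}(S_t)$.

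I expect the only genuinely delicate point to be the bookkeeping around the regime restriction $B_t\leq a$: the comparison must be set up so that the original bound is credited only where its own guarantee applies, while our \texttt{SeqE-Guard} bounds are credited everywhere. Once that is phrased correctly, both claims reduce to the termwise inequality already in hand, together with the standard facts that weighted means of test supermartingales are test supermartingales and that a dominating family of intersection tests yields a dominating closed procedure.
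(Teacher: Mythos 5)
Your proposal is correct and follows essentially the same route as the paper: the paper's own argument consists of the termwise inequality $d_t^{\text{Freed},a}\geq \bd_a^{\text{Freed}}(S_t)$ for $a\geq B_t$ (established in the display preceding the proposition via Lemma~43 of \citet{iqraa2024false}) combined with a pointer to the Appendix~\ref{sec:u-online-simple} construction, i.e.\ the maximum-with-union-bound for the \texttt{cu} version and the weighted mean of test supermartingales with Ville's inequality and the dominance $\phi_I^{\text{m}}\geq\phi_I^{a}$ for the \texttt{m} version, exactly as you lay out. You merely make explicit two points the paper leaves implicit --- that the error weights $\alpha(2^{j/2})$ sum to exactly $\alpha$, and that the original bound is credited only on the regime $B_t\leq a$ where its guarantee \eqref{eq:freed-guarantee} applies --- both of which you handle correctly.
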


Note that the sequential e-values $E_i^{\text{Freed},a}$, $i\in \mathbb{N}$, are just binary e-values depending on $\mathbbm{1}\{P_i\leq \alpha_i\}$. Hence, they are very similar to the ones defined for the \texttt{online-simple} method \eqref{eq:e-val_os}, the difference lies just in the weighting of the two cases $P_i\leq \alpha_i$ and $P_i>\alpha_i$. Therefore, the \texttt{SeqE-Guard} algorithm does not only improve these methods, it also facilitates their interpretation. We only need to ensure that the expected value of each sequential e-value is bounded by one under the null hypothesis, which is very easy to check. Indeed, one can check that $E_i^{\text{Freed},a}$ is slightly conservative as well and could be improved, we just chose the representation \eqref{eq:freedman_e-val} as it simplifies the proof of the uniform improvement. Such looseness would be difficult to detect with the procedures by \citet{katsevich2020simultaneous} and \citet{iqraa2024false}, as their proofs of validity are based on far more complicated arguments.

In Figure~\ref{fig:sim_m-Freed}, we compare the \texttt{u-online-Freed} method \citep{iqraa2024false} with our \texttt{m-online-Freed} procedure. The simulation setup is the same as described in Section~\ref{sec:sim}. The behavior of the \texttt{u-online-Freed} and \texttt{m-online-Freed} method is similar as for the \texttt{u-online-simple} and \texttt{m-online-simple} method (see Figure~\ref{fig:sim_m-os}). The \texttt{m-online-Freed} method leads to a significant uniform improvement over the \texttt{u-online-Freed} method.

\begin{figure}[tb]
\centering
\includegraphics[width=0.8\textwidth]{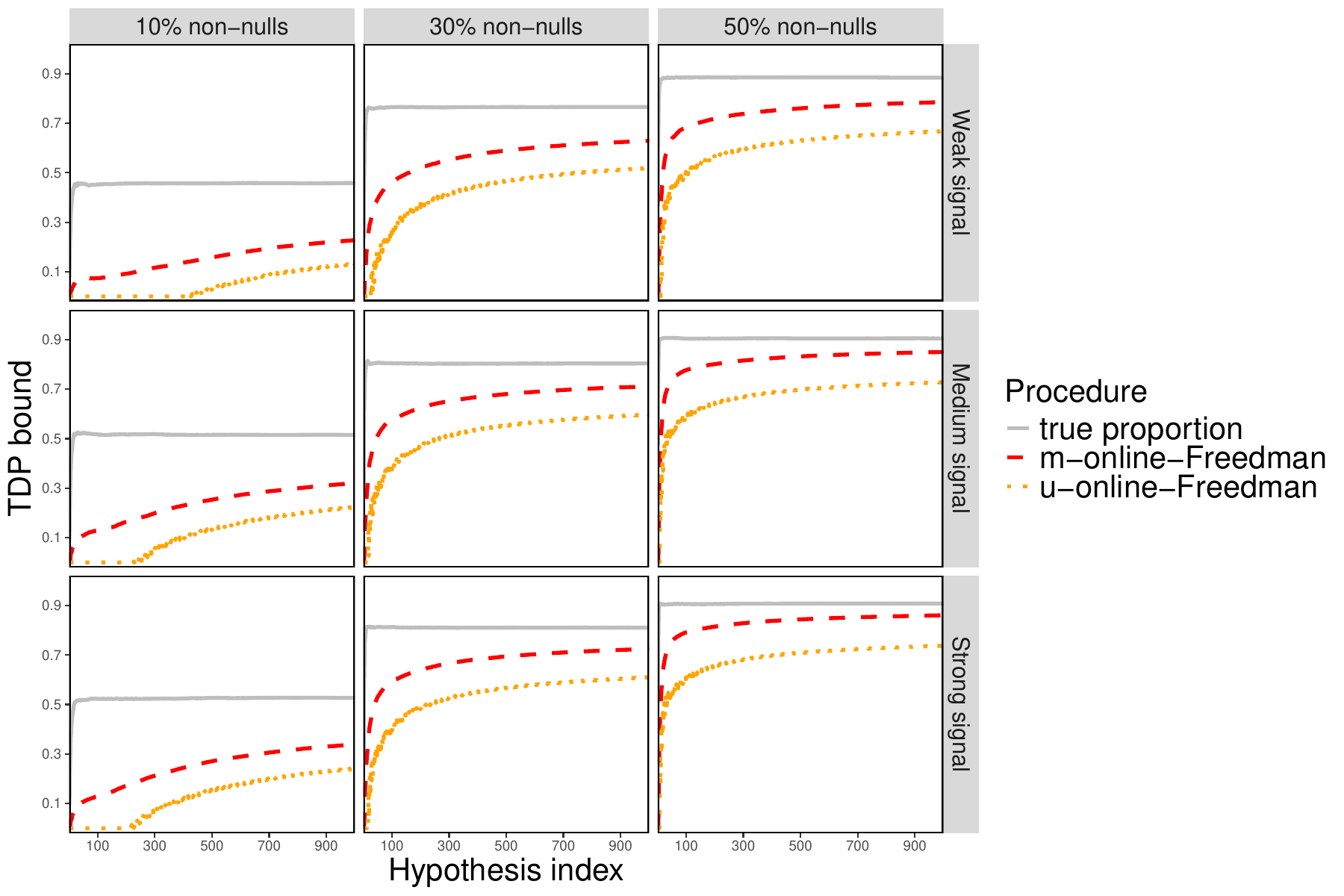}
\caption{True discovery proportion bounds obtained by the \texttt{u-online-Freed} method \citep{iqraa2024false} and our \texttt{m-online-Freed} method. The \texttt{m-online-Freed} procedure improves the \texttt{u-online-Freed} method substantially. \label{fig:sim_m-Freed} }\end{figure}

\section{Calibrating sequential p-values into e-values\label{sec:calib}}

While e-values have been of recent interest, most studies still use p-values as measure of evidence against the null hypothesis. In this case, we can \emph{calibrate} the p-values into e-values and then apply \texttt{SeqE-Guard}. In this context, a calibrator is a function from p-values to e-values, meaning it takes as an input a p-value and yields an e-value as an output. 

For example, the \texttt{online-simple} method of \citet{katsevich2020simultaneous} introduced in Section~\ref{sec:katse} is implicitly based on calibrating each p-value into a simple binary e-value. However, there are infinitely many other calibrators that could be used. Assume that p-values $(P_t)_{t\in \mathbb{N}}$ for the hypotheses $(H_t)_{t\in \mathbb{N}}$ are given. A decreasing function $f:[0,1]\rightarrow [0,\infty]$ is a calibrator if $\int_0^1 f(x)dx\leq 1$, and it is admissible if equality holds~\cite{vovk1993logic,vovk2021values}. Note that if the p-values are sequential p-values, meaning $P_t$ is measurable with respect to $\mathcal{F}_{t}$ and $\mathbb{P}(P_t\leq x|\mathcal{F}_{t-1})\leq x$ for all $x\in [0,1]$, then the calibrated e-values are sequential e-values.

% \citet{vovk2021values} proposed the calibrator
% \begin{align}
% g_{\kappa}(p)=\kappa p^{\kappa-1} \label{eq:calibrator_vovk}
% \end{align}
% for some $\kappa\in (0,1)$. 
% To avoid a specific choice for $\kappa$, one could also integrate over it, which yields the calibrator
% \begin{align}
% G(p)=\frac{1-p+p\log(p)}{p(-\log(p))^2}. \label{eq:calibrator_vovk_int}
% \end{align}

An example calibrator is 
\begin{align}
h_x(p)=\exp\left(x \Phi^{-1}(1-p)-x^2/2\right), \label{eq:calibrator_duan}
\end{align}
where $\Phi$ denotes the CDF of a standard normal distribution. To see that this is a valid calibrator, note that $\Phi^{-1}(U)$, where $U\sim [0,1]$, follows a standard normal distribution and the moment generating function of a standard normal distribution for the real parameter $x> 0$ is given by $\exp(x^2/2)$. \citet{duan2020interactive} used this calibrator to derive a martingale Stouffer \citep{stouffer1949american} global test. 
% \citet{duan2020interactive} proposed the choice $x=\sqrt{8\log(1/\alpha)/m}$ where $m$ is the time at which we expect to have observed most false hypotheses. 
This global test is based on a confidence sequence of \citet{howard2020time}. 

 We compare the application of \texttt{SeqE-Guard} with calibrated e-values and hedged GRO e-values experimentally in the next section.

\section{Boosting of sequential e-values\label{sec:boosting}}

\citet{wang2022false} introduced a way to \emph{boost} e-values before plugging them into their e-BH procedure without violating the desired FDR control. In this section, we propose a similar (but simpler) approach for \texttt{SeqE-Guard} that will improve its power for a specific query path. 

We begin by noting that whenever the bound $d_t$ is increased by one, the largest e-value with index in $A$ will not be considered in the following analysis. Hence, extremely large e-values will be excluded from the analysis anyway, which makes it possible to truncate the e-values at a specific threshold without changing its outcome. This makes the resulting truncated e-values conservative under the null, and one can improve the procedure by  multiplying the e-value by a suitable constant larger than 1 to remove its conservativeness. This truncation+multiplication operation is what is referred to as \emph{boosting} the e-value.

To this end, recall that $A_{t-1}\subseteq S_{t-1}$, $t\in \mathbb{N}$, is the index set of all previous e-values in the query set that were not already excluded by \texttt{SeqE-Guard} and $U_{t-1}\subseteq \{1,\ldots, t-1\}\setminus S_{t-1}$ the index set of all previous e-values that are not contained in the query set and that are smaller than $1$. Now, define \begin{align}
m_t:=\max\left\{\max_{i\in A_{t-1}} E_i, \frac{1}{\alpha \prod_{i\in A_{t-1}\cup U_{t-1}} E_i}\right\},
\label{eq:m_t}
\end{align}
and note that $m_t$ is predictable (measurable with respect to $\mathcal{F}_{t-1}$). Furthermore, if $E_t\geq m_t$ and $t\in S_t$, then $d_t=d_{t-1}+1$ and $E_t$ will be excluded in the further analysis. If $E_t\geq m_t$ and $t\notin S_t$, then $E_t\geq 1$, since $m_t\geq 1$ by definition, and $E_t$ won't be considered in the analysis anyway.
Hence, we define the truncation function $T_t:[0,\infty]\rightarrow [0,m_t]$ as
\begin{align}
T_t(x):=x\mathbbm{1}\{x\leq m_t\} + m_t\mathbbm{1}\{x>m_t\} \label{eq:truncation}
\end{align}
and then choose a boosting factor $b_t\geq 1$ as large as possible such that
\begin{align}
\mathbb{E}_{\mathbb{P}}[T_t(b_tE_t)|\mathcal{F}_{t-1}]\leq 1 \text{ for all } \mathbb{P}\in H_t. \label{eq:expected_boosted}
\end{align}
Note that $b_t= 1$ always satisfies \eqref{eq:expected_boosted}; so a boosting factor always exists and is always at least one. Condition \eqref{eq:expected_boosted} immediately implies that $T_t(b_tE_t)$ is a sequential e-value. Furthermore, using $b_t E_t$ in \texttt{SeqE-Guard} yields exactly the same results as using  $T_t(b_tE_t)$. Therefore, applying \texttt{SeqE-Guard} to the boosted e-values $(b_tE_t)_{t\in \mathbb{N}}$ provides simultaneous true discovery guarantee and is uniformly more powerful than with non-boosted e-values, since $b_t\geq 1$. Note that in this case $m_t$ should also be calculated based on the boosted e-values $b_1E_1,\ldots, b_{t-1}E_{t-1}$.  As also mentioned by \citet{wang2022false}, one could use different functions than $x\mapsto bx$ for some $b\geq 1$ to boost the e-values. In general, it is only required that each boosted e-value $E_t^{\mathrm{boost}}$, $t\in \mathbb{N}$, satisfies $\mathbb{E}_{\mathbb{P}}[T_t(E_t^{\mathrm{boost}})|\mathcal{F}_{t-1}]\leq 1$ for all $\mathbb{P}\in H_t$. We summarize this result in the following theorem.

\begin{proposition}\label{prop:boosting}
 Let $E_1^{\mathrm{boost}}, E_2^{\mathrm{boost}}, \ldots$ be a sequence of nonnegative random variables such that $\mathbb{E}_{\mathbb{P}}[T_t(E_t^{\mathrm{boost}})|\mathcal{F}_{t-1}]\leq 1$ for all  $\mathbb{P}\in H_t$, where $T_t$ is given by \eqref{eq:truncation} and $m_t$ is calculated as in \eqref{eq:m_t} based on $E_1^{\mathrm{boost}},\ldots, E_{t-1}^{\mathrm{boost}}$. Then, applying \texttt{SeqE-Guard} to the boosted sequential e-values $E_1^{\mathrm{boost}}, E_2^{\mathrm{boost}},\ldots$ provides simultaneous true discovery guarantee.
 \end{proposition}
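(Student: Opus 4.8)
The plan is to reduce the claim to Theorem~\ref{theo:general_alg} via two observations: first, that the truncated variables $T_t(E_t^{\mathrm{boost}})$ are genuine sequential e-values; and second, that feeding the untruncated boosted variables $E_t^{\mathrm{boost}}$ into \texttt{SeqE-Guard} produces exactly the same output sequences $(S_t,d_t)_{t\in\mathbb{N}}$ as feeding in the truncated variables $T_t(E_t^{\mathrm{boost}})$. Granting both, Theorem~\ref{theo:general_alg} applied to the sequential e-values $T_t(E_t^{\mathrm{boost}})$ yields $\mathbb{P}(d_t\le |S_t\cap I_1^{\mathbb{P}}|\text{ for all }t)\ge 1-\alpha$, and since the boosted run reports the identical $(S_t,d_t)$, the same guarantee transfers to it.

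The first observation is immediate: by hypothesis $\mathbb{E}_{\mathbb{P}}[T_t(E_t^{\mathrm{boost}})\mid\mathcal{F}_{t-1}]\le 1$ for all $\mathbb{P}\in H_t$, the variable $T_t(E_t^{\mathrm{boost}})$ is nonnegative, and it is measurable with respect to $\mathcal{F}_t$ because $E_t^{\mathrm{boost}}\in\mathcal{F}_t$ and the truncation level $m_t$ in \eqref{eq:m_t} is predictable (measurable with respect to $\mathcal{F}_{t-1}$). Hence $(T_t(E_t^{\mathrm{boost}}))_t$ is precisely the kind of input required by Theorem~\ref{theo:general_alg}. The bulk of the work is the second observation, which I would establish by induction on $t$, maintaining the hypothesis that after step $t-1$ the two runs share the same index sets $A_{t-1},U_{t-1}$ and the same bound $d_{t-1}$, and that every e-value still in play (index in $A_{t-1}\cup U_{t-1}$) was left unchanged by truncation, so the two runs also share the same running products.

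At step $t$ I would distinguish three cases. If $E_t^{\mathrm{boost}}\le m_t$, then $T_t(E_t^{\mathrm{boost}})=E_t^{\mathrm{boost}}$ by \eqref{eq:truncation} and the two runs execute identically. If $E_t^{\mathrm{boost}}>m_t$ and $t\in S_t$, then the defining term $m_t\ge 1/(\alpha\prod_{i\in A_{t-1}\cup U_{t-1}}E_i^{\mathrm{boost}})$ forces the product over $A_{t-1}\cup\{t\}\cup U_{t-1}$ to reach $1/\alpha$ in \emph{both} runs (the boosted run uses $E_t^{\mathrm{boost}}>m_t$, the truncated run uses $T_t(E_t^{\mathrm{boost}})=m_t$, and $m_t$ alone already suffices), so both set $d_t=d_{t-1}+1$; moreover the other defining term $m_t\ge\max_{i\in A_{t-1}}E_i^{\mathrm{boost}}$ guarantees that the newly added index $t$ carries the (weakly) largest value in $A_{t-1}\cup\{t\}$ in both runs, so index $t$ is the one excluded and the resulting $A_t,U_t,d_t$ coincide. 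Finally, if $E_t^{\mathrm{boost}}>m_t$ and $t\notin S_t$, then $E_t^{\mathrm{boost}}>m_t\ge 1$ and $T_t(E_t^{\mathrm{boost}})=m_t\ge 1$, so neither value enters $U$ (which only collects e-values below $1$) and both runs leave the state unchanged. This closes the induction and shows the outputs agree.

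The main obstacle is the middle case $E_t^{\mathrm{boost}}>m_t$ with $t\in S_t$: it is exactly here that the two-part structure of $m_t$ in \eqref{eq:m_t} does all the work, and one must check simultaneously that (i) the product crosses $1/\alpha$ even after truncating the triggering e-value down to $m_t$, and (ii) the truncated triggering e-value is still a maximal element, so that \texttt{SeqE-Guard} excludes index $t$ rather than some other index. Care is needed with ties in ``index of largest e-value,'' which I would resolve by fixing a consistent tie-breaking rule used identically in both runs. I would also note the mild point that the query-path decisions are permitted to be adaptive, so comparing the truncated run on the \emph{same} path chosen by the boosted run is legitimate under Theorem~\ref{theo:general_alg}.
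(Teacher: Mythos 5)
Your proposal is correct and takes essentially the same route as the paper: the paper's justification (given in the text surrounding \eqref{eq:m_t}--\eqref{eq:expected_boosted}) likewise observes that the truncated variables $T_t(E_t^{\mathrm{boost}})$ are sequential e-values by hypothesis, that the two-part definition of $m_t$ forces any e-value exceeding $m_t$ either to trigger an increment of $d_t$ with immediate exclusion of index $t$ (when $t\in S_t$) or to be irrelevant since it is at least $1$ (when $t\notin S_t$), so that running \texttt{SeqE-Guard} on $E_t^{\mathrm{boost}}$ and on $T_t(E_t^{\mathrm{boost}})$ yields identical outputs, and then invokes Theorem~\ref{theo:general_alg}. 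Your write-up merely makes this argument more explicit (the induction on the shared state $A,U,d$, the tie-breaking convention, and the legitimacy of comparing both runs along the same adaptively chosen query path), all of which is consistent with the paper's reasoning.
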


% \begin{proposition}
% Let sequential base e-values $E_1^{\mathrm{base}}, E_2^{\mathrm{base}}, \ldots$ with $\mathbb{E}_{\mathbb{P}}[E_t^{\mathrm{base}}|\mathcal{F}_{t-1}]\leq 1$ for all $\mathbb{P}\in H_t$ be given. At each step $t\in \mathbb{N}$, calculate a boosting factor $b_t\geq 1$ such that $\mathbb{E}_{\mathbb{P}}[T_t(b_tE_t^{\mathrm{base}})|\mathcal{F}_{t-1}]\leq 1$ for all  $\mathbb{P}\in H_t$ and set $E_t^{\mathrm{boost}}=b_tE_t^{\mathrm{base}}$, where $T_t$ is given by \eqref{eq:truncation} and $m_t$ is calculated as in \eqref{eq:m_t} based on $E_1^{\mathrm{boost}},\ldots, E_{t-1}^{\mathrm{boost}}$. Then, applying \texttt{SeqE-Guard} to the boosted e-values $E_1^{\mathrm{boost}}, E_2^{\mathrm{boost}},\ldots$ provides simultaneous true discovery guarantee.

% \end{proposition}

% \begin{remark}
%     As also mentioned by \citet{wang2022false}, one could use different functions than $x\mapsto bx$ for some $b\geq 1$ to boost the e-values. In general, it is only required that each boosted e-value $E_t^{\mathrm{boost}}$, $t\in \mathbb{N}$, satisfies $\mathbb{E}_{\mathbb{P}}[T_t(E_t^{\mathrm{boost}})|\mathcal{F}_{t-1}]\leq 1$ for all $\mathbb{P}\in H_t$.
% \end{remark}
It should be noted that boosting adapts the sequential e-values to the query path 
$(S_t)_{t\geq 1}$ and Proposition~\ref{prop:boosting} only holds for the query path for which the boosted e-values $E_1^{\mathrm{boost}}, E_2^{\mathrm{boost}}, \ldots$ were constructed. If multiple query paths are investigated, one must use the sequential e-values $T_1(E_1^{\mathrm{boost}}), T_2(E_2^{\mathrm{boost}}), \ldots$ for all other query paths. Therefore, it may happen that boosting weakens the performance of \texttt{SeqE-Guard} for the other query paths.

 In the following we provide several examples that illustrate how the boosting factors can be determined in specific cases and demonstrate the possible gain in efficiency.

\begin{example}\label{example:boost_GRO}
 We consider Example 3 from \citet{wang2022false} adapted to our setting. For each $t\in \mathbb{N}$, we test the simple null hypothesis $H_t:X_t|\mathcal{F}_{t-1}\sim \mathcal{N}(\mu_0,1)$ against the simple alternative $H_t^A:X_t|\mathcal{F}_{t-1}\sim \mathcal{N}(\mu_1,1)$, where $X_t$ denotes the data for $H_t$. In this case, the GRO e-value is given by the likelihood ratio between two normal distributions with variance $1$ and means $\mu_1$ and $\mu_0$
 \begin{align}
 E_t=\exp(\delta Z_t-\delta^2/2),\label{eq:log_normal_e-value}
 \end{align}
 where $\delta=\mu_1-\mu_0>0$ and $Z_t=X_t-\mu_0$ follows a standard normal distribution conditional on $\mathcal{F}_{t-1}$ under $H_t$. Hence, conditional on the past, each null e-value follows a log-normal distribution with parameter $(-\delta^2/2, \delta)$. With this, we obtain for all $t\in I_0^{\mathbb{P}}$: 
\begin{align*}
    & \mathbb{E}_{\mathbb{P}}[b_tE_t\mathbbm{1}\{b_tE_t\leq m_t\}+m_t \mathbbm{1}\{b_t E_t>m_t\}| \mathcal{F}_{t-1}]\\
    &=b_t\mathbb{E}_{\mathbb{P}}[E_t\mathbbm{1}\{b_tE_t\leq m_t\} | \mathcal{F}_{t-1}] +m_t \mathbb{P}(b_t E_t>m_t| \mathcal{F}_{t-1})\\
    &= b_t\left[1-\Phi\left(\frac{\delta}{2}-\frac{\log\left(m_t/b_t\right)}{\delta}\right)\right] +m_t \left[1-\Phi\left(\frac{\log(m_t/b_t)+\delta^2/2}{\delta}\right)\right],
\end{align*}
where $\Phi$ is the CDF of a standard normal distribution. The last expression can be set equal to $1$ and then be solved for $b_t$ numerically. For example, for $\delta=3$ and $m_t=20$, we obtain $b_t=3.494$. Hence, the e-value $E_t$ could be multiplied by $3.494$ without violating the true discovery guarantee, a substantial gain. In general, the larger $m_t$,  the smaller is the boosting factor. For example, if $m_t=5$, then $b_t=11.826$ and if $m_t=100$, then $b_t=1.774$. Nevertheless, even the latter boosting factor would increase the power of the true discovery procedure significantly and we would usually expect $m_t$ to be smaller than $100$ in most settings. If we use, as described in Section~\ref{sec:gro}, the e-value $1-\lambda_t+\lambda_tE_t^{\mathrm{GRO}}$, $\lambda_t\in (0,1)$ instead, we need to solve 
\begin{align*}
m_t+\Phi\left(\frac{\log(s_t)+\delta^2/2}{\delta}\right)[b_t(1-\lambda_t)-m_t]+b_t\lambda_t\left[1-\Phi\left(\frac{\delta}{2}-\frac{\log\left(s_t\right)}{\delta}\right)\right]=1, 
\end{align*}
for $b_t\in [1,1/(1-\lambda_t))$, where $s_t=(\lambda_t-1+m_t/b_t)/\lambda_t$. In this case, 
$\delta=3$, $\lambda_t=0.5$ and  $m_t=20$ yield a boosting factor of $b_t=1.354$. 

\end{example}

\begin{example}\label{example:calibrator}
    Suppose we observe sequential p-values $P_1,P_2,\ldots$ and want to apply the calibrator \eqref{eq:calibrator_duan}. If the p-values are uniformly distributed conditional on the past, the resulting e-value has the exact same distribution as the e-value in \eqref{eq:log_normal_e-value} under the null hypothesis for $\delta=x$, where $x$ is the freely chosen parameter for the calibrator. Hence, we can do the exact same calculations to obtain an appropriate boosting factor. If the p-values are stochastically larger than uniform, we could still use that same boosting factor, as the resulting e-values provide true discovery guarantee but might be conservative.   
\end{example}

\begin{example}
    In case of the \texttt{closed online-simple} method (Algorithm~\ref{alg:online-simple}) it is particular simple to \enquote{boost} the e-values. Since $E_i^{\mathrm{os}}$ only takes two different values, we can simply ensure $E_t^{\mathrm{os}}\leq m_t$ by choosing $\alpha_t$ such that $E_t^{\mathrm{os}}\leq m_t$ if $\mathbbm{1}\{P_t\leq \alpha_t\}=1$. Note that in case of $\alpha_t=\nu$ for all $t\in \mathbb{N}$ and some $\nu>0$ such that $\exp[\theta_c(1-c\nu)]\leq 1/\alpha$ it is not possible to improve the bounds of the \texttt{closed online-simple} method further by boosting, since we already have $E_t^{\mathrm{os}}\leq m_t$ almost surely.
\end{example}

In Figure~\ref{fig:sim_boost} we compare the \texttt{admissible online-simple} method with the \texttt{SeqE-Guard} algorithm with boosted GRO (the GRO e-values were first hedged and then boosted as described in Example~\ref{example:boost_GRO}) and calibrated e-values (using the calibrator in \eqref{eq:calibrator_duan} with $x=0.1$ and boosting as described in Example~\ref{example:calibrator}). The calibrated e-values perform worst in most cases, boosted GRO e-values perform best if the proportion of false hypotheses is small and the \texttt{admissible online-simple method} performs best if the proportion of false hypotheses is large. Also note that compared to Figure~\ref{fig:sim_best}, the bound of the hedged GRO e-values increased substantially using boosting. However, it adapts the e-values to the query paths and thus the bounds might decrease for other query paths.

\begin{figure}[tb]
\centering
\includegraphics[width=0.8\textwidth]{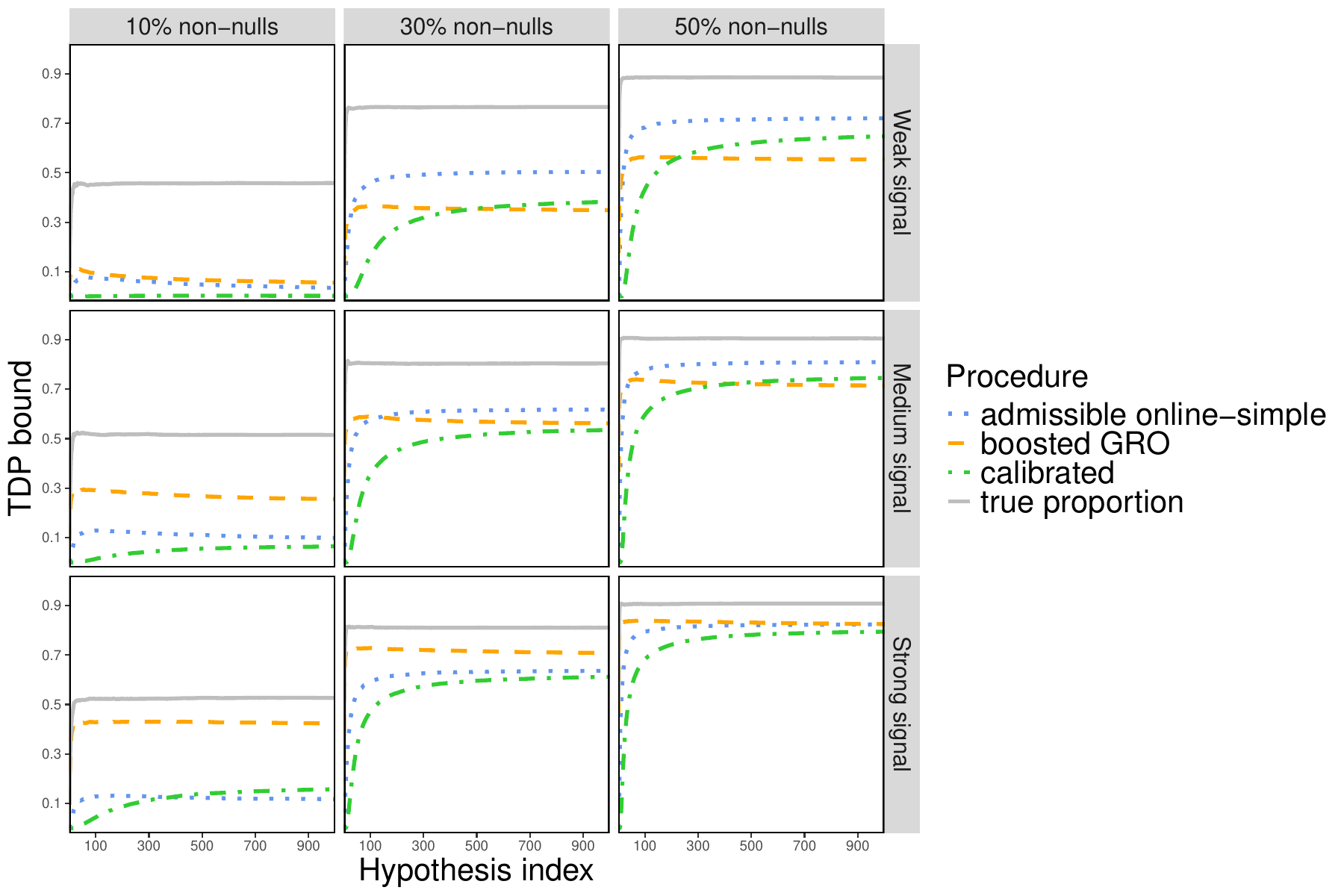}
\caption{True discovery proportion bounds obtained by the \texttt{admissible online-simple} as well as the \texttt{SeqE-Guard} algorithm with boosted GRO and (boosted) calibrated e-values. The boosted GRO e-values perform best if the proportion of false hypotheses is small and the \texttt{admissible online-simple} method performs best if the proportion is large. \label{fig:sim_boost} }\end{figure}

\section{Analysis of IMPC data\label{sec:IMPC}}
\ifdiff 
\added{The International Mouse Phenotyping Consortium (IMPC) \citep{mouse} is a large-scale effort to systematically characterize the phenotypic consequences of knocking out each protein-coding gene in the mouse genome. Since new genes are analyzed and released over time, the resulting sequential structure naturally gives rise to an online multiple testing problem, making the IMPC dataset a standard application for evaluating online procedures \citep{robertson2023online}. In this section, we apply the proposed methodology to $\numprint{172081}$ p-values that resulted from an evaluation by \citet{karp2017prevalence} and that are available at the Zenodo repository \url{zenodo.org/records/2396572} \citep{robertson2019onlinefdr}.}
\else
The International Mouse Phenotyping Consortium (IMPC) \citep{mouse} is a large-scale effort to systematically characterize the phenotypic consequences of knocking out each protein-coding gene in the mouse genome. Since new genes are analyzed and released over time, the resulting sequential structure naturally gives rise to an online multiple testing problem, making the IMPC dataset a standard application for evaluating online procedures \citep{robertson2023online}. In this section, we apply the proposed methodology to $\numprint{172081}$ p-values that resulted from an evaluation by \citet{karp2017prevalence} and that are available at the Zenodo repository \url{zenodo.org/records/2396572} \citep{robertson2019onlinefdr}.
\fi

\ifdiff \added{In our analysis, we compare the \texttt{online-simple} procedure by \citet{katsevich2020simultaneous} with our \texttt{admissible online-simple} method when applied to the query path obtained by the LORD++ algorithm \citep{javanmard2018online, ramdas2017online}. The LORD++ algorithm is an online procedure with FDR control that produces the rejection set $S_{t}=\{i\leq t: P_i\leq \alpha_i\}$, where the nonnegative thresholds $\alpha_1,\alpha_2,\ldots$ are chosen such that
$$
\widehat{\text{FDP}}_t \coloneqq \frac{\sum_{i=1}^t \alpha_i}{\max(|S_t|,1)} \leq \alpha_{\text{FDR}}.
$$
The idea is to conservatively estimate $\text{FDP}(S_t)$ by $\widehat{\text{FDP}}_t$  and then choose the levels $\alpha_1,\alpha_2,\ldots$ such that $\sup_{t\in \mathbb{N}} \widehat{\text{FDP}}_t \leq \alpha_{\text{FDR}}$, where $\alpha_{\text{FDR}}$ is the predefined nominal FDR level \citep{ramdas2017online}. Interestingly, the bound of the \texttt{online-simple} method was constructed using the estimate $\widehat{\text{FDP}}_t$ for $\text{FDP}(S_t)$ \citep{katsevich2020simultaneous}.  The LORD++ algorithm controls the FDR at level $\alpha_{\text{FDR}}$ if the p-values are independent. Please note that this assumption is not necessarily fulfilled for the IMPC dataset and the following analysis is for illustrative purposes only. The independence assumption also ensures that the p-values are valid conditional on the past and thus that the \texttt{online-simple} procedure  and the \texttt{admissible online-simple} method provide simultaneous true discovery guarantee.}
\else
In our analysis, we compare the \texttt{online-simple} procedure by \citet{katsevich2020simultaneous} with our \texttt{admissible online-simple} method when applied to the query path obtained by the LORD++ algorithm \citep{javanmard2018online, ramdas2017online}. The LORD++ algorithm is an online procedure with FDR control that produces the rejection set $S_{t}=\{i\leq t: P_i\leq \alpha_i\}$, where the nonnegative thresholds $\alpha_1,\alpha_2,\ldots$ are chosen such that
$$
\widehat{\text{FDP}}_t \coloneqq \frac{\sum_{i=1}^t \alpha_i}{\max(|S_t|,1)} \leq \alpha_{\text{FDR}}.
$$
The idea is to conservatively estimate $\text{FDP}(S_t)$ by $\widehat{\text{FDP}}_t$  and then choose the levels $\alpha_1,\alpha_2,\ldots$ such that $\sup_{t\in \mathbb{N}} \widehat{\text{FDP}}_t \leq \alpha_{\text{FDR}}$, where $\alpha_{\text{FDR}}$ is the predefined nominal FDR level \citep{ramdas2017online}. Interestingly, the bound of the \texttt{online-simple} method was constructed using the estimate $\widehat{\text{FDP}}_t$ for $\text{FDP}(S_t)$ \citep{katsevich2020simultaneous}.  The LORD++ algorithm controls the FDR at level $\alpha_{\text{FDR}}$ if the p-values are independent. Please note that this assumption is not necessarily fulfilled for the IMPC dataset and the following analysis is for illustrative purposes only. The independence assumption also ensures that the p-values are valid conditional on the past and thus that the \texttt{online-simple} procedure  and the \texttt{admissible online-simple} method provide simultaneous true discovery guarantee.
\fi

% Furthermore, it is interesting to investigate the query path  $(S_t)_{t\in \mathbb{N}}$ of the LORD++ algorithm, because LORD++ only guarantees that $\mathbb{E}_{\mathbb{P}}[\text{FDP}(R_t)]\leq \alpha_{\text{FDR}}$ for all $\mathbb{P}\in \mathcal{P}$ and any fixed $t\in \mathbb{N}$. In contrast, an online true discovery procedure $\bd$ ensures for all $\mathbb{P}\in \mathcal{P}$, $$\mathbb{P}(\bq(S_t) \geq \text{FDP}(S_t) \text{ for all } t\in \mathbb{N})\geq 1-\alpha, \quad  \text{with }\bq(S_t)=1-\frac{\bd(S_t)}{|S_t|}.$$ 

\ifdiff \added{The results of our analysis are illustrated in Figure~\ref{fig:IMPC}. The FDP bounds were simply obtained by $\bq(S_t)=1-\frac{\bd(S_t)}{|S_t|}$, where $\bd$ is the lower bound for the number of true discoveries. Both online true discovery procedures were applied with the same parameters as in Section~\ref{sec:sim} and the LORD++ algorithm with the standard values from the onlineFDR \texttt{R} package \citep{robertson2019onlinefdr} but $\gamma_i=\frac{6}{\pi^2 i^2}$. Note that the level $\alpha$ for the true discovery procedures is to be distinguished from the FDR level $\alpha_{\text{FDR}}$ used to obtain the query path $(S_t)_{t\in \mathbb{N}}$ with the LORD++ algorithm. However, in this analysis both parameters were set to $0.1$.}
\else
The results of our analysis are illustrated in Figure~\ref{fig:IMPC}. The FDP bounds were simply obtained by $\bq(S_t)=1-\frac{\bd(S_t)}{|S_t|}$, where $\bd$ is the lower bound for the number of true discoveries. Both online true discovery procedures were applied with the same parameters as in Section~\ref{sec:sim} and the LORD++ algorithm with the standard values from the onlineFDR \texttt{R} package \citep{robertson2019onlinefdr} but $\gamma_i=\frac{6}{\pi^2 i^2}$. Note that the level $\alpha$ for the true discovery procedures is to be distinguished from the FDR level $\alpha_{\text{FDR}}$ used to obtain the query path $(S_t)_{t\in \mathbb{N}}$ with the LORD++ algorithm. However, in this analysis both parameters were set to $0.1$.
\fi

\ifdiff \added{The results show that the \texttt{admissible online-simple} method substantially improves the \text{online-simple} procedure by yielding lower FDP bounds. Furthermore, the claims made by the \texttt{admissible online-simple} procedure are more informative than those made by the LORD++ algorithm for two reasons.}
\begin{enumerate}
    \item \added{Since $\bq(S_t)\approx \alpha_{\text{FDR}}=0.1$ for $|S_t|\geq 1000$ for the \texttt{admissible online-simple} procedure $\bq$, we can claim that the probability of the $\text{FDP}(S_t)$ being below $0.1$ is greater than or equal to $0.9$. In contrast, the LORD++ algorithm only allows to conclude that the expectation of $\text{FDP}(S_t)$ is bounded by $0.1$. Even though the former statement does not imply the latter (and vice versa), the high probability statement often gives stronger confidence about the true $\text{FDP}(S_t)$ than the expectation bound.}
    \item \added{The probability guarantee of the \texttt{admissible online-simple} method holds simultaneously for all $t$. In particular, this allows to stop the testing process data-adaptively while the claim made in the previous point remains valid. In contrast, the LORD++ algorithm is only proven to control the FDR at fixed times $t$, and thus the FDR could be inflated at data-adaptive stopping times. }
\end{enumerate}
\else
The results show that the \texttt{admissible online-simple} method substantially improves the \text{online-simple} procedure by yielding lower FDP bounds. Furthermore, the claims made by the \texttt{admissible online-simple} procedure are more informative than those made by the LORD++ algorithm for two reasons.
\begin{enumerate}
    \item Since $\bq(S_t)\approx \alpha_{\text{FDR}}=0.1$ for $|S_t|\geq 1000$ for the \texttt{admissible online-simple} procedure $\bq$, we can claim that the probability of the $\text{FDP}(S_t)$ being below $0.1$ is greater than or equal to $0.9$. In contrast, the LORD++ algorithm only allows to conclude that the expectation of $\text{FDP}(S_t)$ is bounded by $0.1$. Even though the former statement does not imply the latter (and vice versa), the high probability statement often gives stronger confidence about the true $\text{FDP}(S_t)$ than the expectation bound.\label{bull:IMPC_1}
    \item The probability guarantee of the \texttt{admissible online-simple} method holds simultaneously for all $t$. In particular, this allows to stop the testing process data-adaptively while the claim made in the previous point remains valid. In contrast, the LORD++ algorithm is only proven to control the FDR at fixed times $t$, and thus the FDR could be inflated at data-adaptive stopping times. 
\end{enumerate}
\fi

\begin{figure}[tb]
\centering
\includegraphics[width=0.8\textwidth]{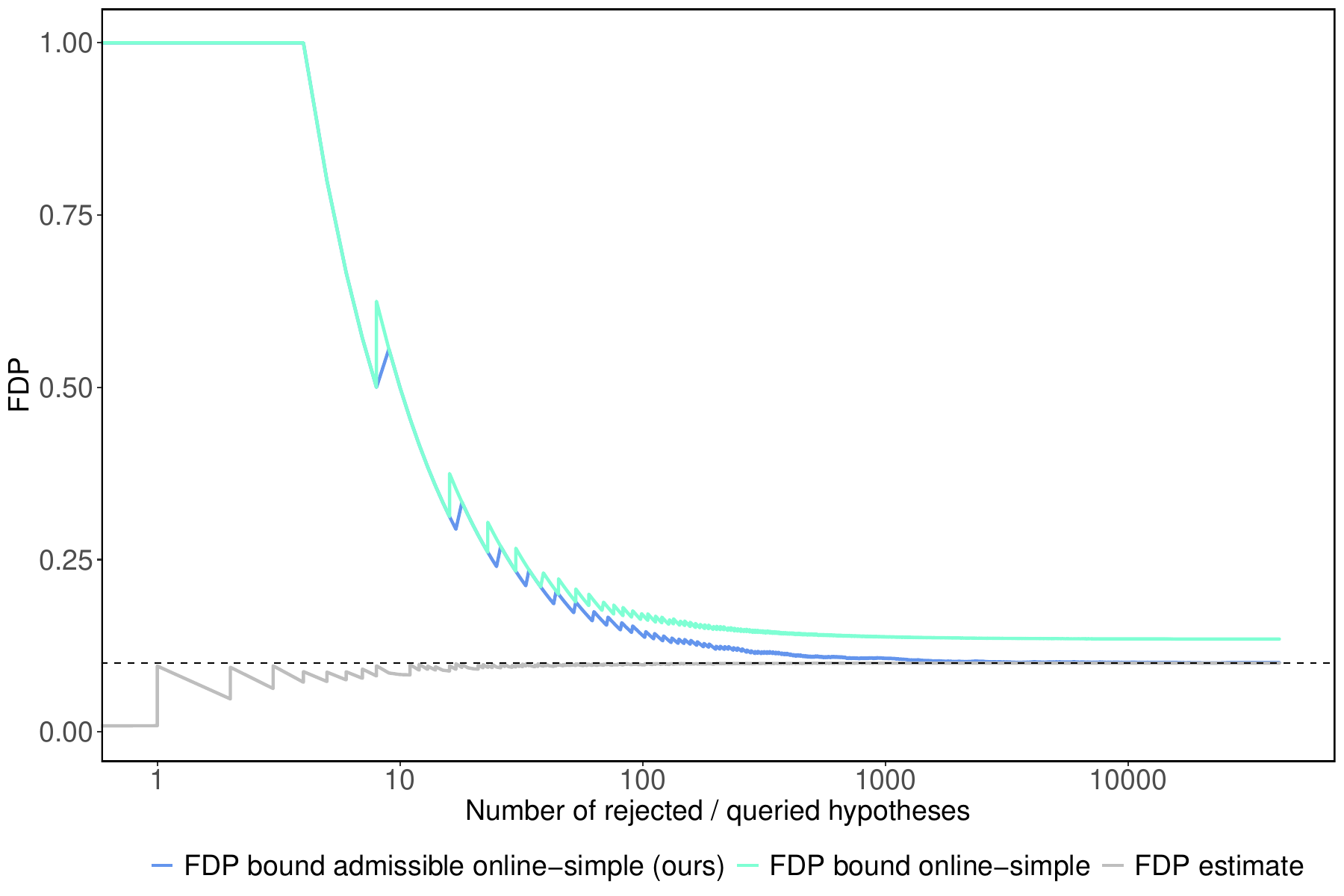}
\caption{FDP bounds obtained by the \texttt{online-simple} method \citep{katsevich2020simultaneous} and our \texttt{admissible online-simple} method applied on IMPC data. The \texttt{admissible online-simple} improves the  \texttt{online-simple} method substantially and is close to the nominal level FDR $\alpha_{\text{FDR}}$ of the LORD++ procedure. \label{fig:IMPC} }\end{figure}

\section{Proof of Theorem~\ref{theo:general_alg}\label{sec:proofs}}

\begin{proof}
 % For each $I\subseteq \mathbb{N}$ we consider the intersection test $\phi_I=\mathbbm{1}\left\{\exists t \in I: \prod_{i\in I, i\leq t} E_i \geq 1/\alpha \right\}$. Due to the explanations above, this is an $\alpha$-level intersection tests such that the corresponding closed procedure $\bd^{\bphi}$ provides simultaneous true discovery control. We now prove that Algorithm~\ref{alg:general} leads to the same bounds as $\bd^{\bphi}$.  
 Let $d_t$, $t\in \mathbb{N}$, be the bounds of \texttt{SeqE-Guard}, $\Pi_{Z}$, $Z\subseteq \mathbb{N}$, be the product of all e-values with index in $Z$, $A_t\subseteq S_t$ \ifdiff \added{(resp. $A_{t+}\subseteq S_t$)} \else (resp. $A_{t+}\subseteq S_t$) \fi be the set $A$ at step $t$ before \ifdiff \added{(resp. after)} \else (resp. after) \fi checking whether $\Pi_{i\in A\cup U} E_i\geq 1/\alpha$, $U_t\subseteq \{1,\ldots,t\}\setminus S_t$ be the index set of e-values that are smaller than $1$. \ifdiff
 \added{Recall, that $\bd^{\bphi}(S)$ is solely determined by $\phi_I$ with $I\subseteq \{1,\ldots, \max(S)\}$, since $\bphi$ is increasing (see \eqref{eq:closed_procedure_increasing}).} \else Recall, that $\bd^{\bphi}(S)$ is solely determined by $\phi_I$ with $I\subseteq \{1,\ldots, \max(S)\}$, since $\bphi$ is increasing (see \eqref{eq:closed_procedure_increasing}). \fi
 Hence, we particularly have $d_1=\bd^{\bphi}(S_1)$. Now assume that $d_1=\bd^{\bphi}(S_1), \ldots, d_{t-1}=\bd^{\bphi}(S_{t-1})$ and $S_t=S_{t-1}\cup \{t\}$. Due to the coherence of $\bd^{\bphi}$, it holds $d_{t-1}\leq \bd^{\bphi}(S_{t})\leq d_{t-1}+1$. In the following, we show that $\Pi_{A_t \cup U_t}\geq 1/\alpha$ implies that $\bd^{\bphi}(S_{t}) \geq d_{t-1}+1$ and $\Pi_{A_t \cup U_t}< 1/\alpha$ implies that $\bd^{\bphi}(S_{t}) < d_{t-1}+1$, which proves the assertion.
 % First note that due to the coherence of $\bd^{\bphi}$, we have $\bd^{\bphi}(S_{t-1})\leq \bd^{\bphi}(S_t) \leq \bd^{\bphi}(S_{t-1})+1$. Suppose $d_1=\bd^{\bphi}(S_1), \ldots, d_{t-1}=\bd^{\bphi}(S_{t-1})$.
 
 We first show that $\bd^{\bphi}(S_t) \geq d_{t-1}+1$, if $\Pi_{A_t \cup U_t}\geq 1/\alpha$. For this, we prove that $\Pi_{A_t \cup U_t}\geq 1/\alpha$ implies $\phi_I=1$ for all $I=V\cup W$, where $V\subseteq S_t$ and $W\subseteq \{1,\ldots,t\}\setminus S_t$ with $|V|\geq |A_t|$. Since $\bphi$ is increasing, and thus $\boldsymbol{d}^{\boldsymbol{\phi}}(S_t)$ can be written by \eqref{eq:closed_procedure_increasing}, this then implies that  $\boldsymbol{d}^{\boldsymbol{\phi}}(S_t)\geq |S_{t}|-|A_{t}|+1=d_{t-1}+1 $.
 First, note that it is sufficient to show the claim for all $I=V\cup U_t$ with $|V|\geq |A_t|$, since multiplication with e-values that are larger than or equal to $1$ cannot decrease the product. Now let such an $I$ and $V$ be fixed.
 \ifdiff \added{Because of definition \eqref{eq:intersection_test_mart} it is necessary and sufficient to find some $t'\leq t$ such that $\Pi_{I\cap \{1,\ldots,t'\}}\geq 1/\alpha$ (it will be shown that $t'=t_{\tilde{m}}$ does the trick, where $t_{\tilde{m}}$ is defined below).} \else Because of definition \eqref{eq:intersection_test_mart} it is necessary and sufficient to find some $t'\leq t$ such that $\Pi_{I\cap \{1,\ldots,t'\}}\geq 1/\alpha$ (it will be shown that $t'=t_{\tilde{m}}$ does the trick, where $t_{\tilde{m}}$ is defined below). \fi
 
 Let $t_1,\ldots, t_{m}$, where $t_m=t_{|S_t|-|A_t|+1}=t$, be the times at which $\Pi_{A_{t_i}\cup U_{t_i}}\geq 1/\alpha$ and $\tilde{m}\in \{1,\ldots, m\}$ be the smallest index such that $|V\cap \{1,\ldots, t_{\tilde{m}}\}|>|S_{t_{\tilde{m}}}|-\tilde{m}$. Note that $\tilde{m}$ always exists, because $|A_t|=|A_{t_m}|=|S_{t_m}|-m+1$. With this, we have 
 $$
 \Pi_{I\cap \{1,\ldots, t_{\tilde{m}}\}} \stackrel{(a)}{=} \Pi_{(V\cap \{1,\ldots, t_{\tilde{m}} \}) \cup U_{t_{\tilde{m}}}} \stackrel{(b)}{\geq} \Pi_{(A_{t_{\tilde{m}}}\cap \{1,\ldots, t_{\tilde{m}-1}\}) \cup (\{t_{\tilde{m}-1}+1,\ldots t_{\tilde{m}}\}\cap S_t) \cup U_{t_{\tilde{m}}}} \stackrel{(c)}{=} \Pi_{A_{t_{\tilde{m}}}\cup U_{t_{\tilde{m}}}} \stackrel{(d)}{\geq}  1/\alpha.
 $$
\ifdiff \added{As suggested by a referee, the individual steps of this derivation can be explained as follows:} \else 
As suggested by a referee, the individual steps of this derivation can be explained as follows: \fi
\begin{itemize}
    \item[(a)] \ifdiff \added{With $I=V\cup U_t$, it holds that $I\cap \{1,\ldots, t_{\tilde{m}}\}=(V\cap \{1,\ldots, t_{\tilde{m}}\}) \cup (U_t\cap \{1,\ldots, t_{\tilde{m}}\})$. Furthermore, $U_t\cap \{1,\ldots, t_{\tilde{m}}\}= U_{t_{\tilde{m}}}$.} \else With $I=V\cup U_t$, it holds that $I\cap \{1,\ldots, t_{\tilde{m}}\}=(V\cap \{1,\ldots, t_{\tilde{m}}\}) \cup (U_t\cap \{1,\ldots, t_{\tilde{m}}\})$. Furthermore, $U_t\cap \{1,\ldots, t_{\tilde{m}}\}= U_{t_{\tilde{m}}}$. \fi
    \item[(b)] \ifdiff \added{First, note that $(V\cap \{1,\ldots, t_{\tilde{m}}\})=(V\cap \{1,\ldots, t_{\tilde{m}-1}\}) \cup (V\cap \{t_{\tilde{m}-1}+1,\ldots, t_{\tilde{m}}\})$. For the second part of the union, we can write $V\cap \{t_{\tilde{m}-1}+1,\ldots, t_{\tilde{m}}\}=S_t\cap \{t_{\tilde{m}-1}+1,\ldots, t_{\tilde{m}}\}$. To see this, note that $V\subseteq S_t$ and, due to the definition of $\tilde{m}$, $|V\cap \{t_{\tilde{m}-1}+1,\ldots, t_{\tilde{m}}\}|=|V\cap \{1,\ldots, t_{\tilde{m}}\}|-|V\cap \{1,\ldots, t_{\tilde{m}-1}\}|\geq |S_{t_{\tilde{m}}}|-|S_{t_{\tilde{m}-1}}|=|S_t\cap \{t_{\tilde{m}-1}+1,\ldots, t_{\tilde{m}}\}|$. Now consider $J=V\cap \{1,\ldots, t_{\tilde{m}-1}\}$. By the definition of $\tilde{m}$, we have that $|J\cap \{1,\ldots,t_i\}|\leq |S_{t_i}|-i$ for all $i\in \{1,\ldots,\tilde{m}-1\}$, with an equality for $i=\tilde{m}-1$. Among all sets $J\subseteq S_t$ that satisfy the aforementioned constraints, the set $A$ at step $t_{\tilde{m}-1}$ \textit{after} completing line 13 of the algorithm, i.e. $A_{(t_{\tilde{m}-1})+}$, minimizes the product of the e-values. This is because $|(S_t\setminus J)\cap \{1,\ldots,t_i\}|\geq i$, $i\in \{1,\ldots,\tilde{m}-1\}$, and the \texttt{SeqE-Guard} algorithm exactly removes those indices from $S_t$ that correspond to the largest possible e-values. Since no indices were removed between the times $t_{\tilde{m}-1}$ and $t_{\tilde{m}}$, it follows that $A_{(t_{\tilde{m}}-1)+}\cap \{1,\ldots,t_{\tilde{m}-1}\}=A_{t_{\tilde{m}}}\cap \{1,\ldots,t_{\tilde{m}-1}\}$, showing the claim.} \else First, note that $(V\cap \{1,\ldots, t_{\tilde{m}}\})=(V\cap \{1,\ldots, t_{\tilde{m}-1}\}) \cup (V\cap \{t_{\tilde{m}-1}+1,\ldots, t_{\tilde{m}}\})$. For the second part of the union, we can write $V\cap \{t_{\tilde{m}-1}+1,\ldots, t_{\tilde{m}}\}=S_t\cap \{t_{\tilde{m}-1}+1,\ldots, t_{\tilde{m}}\}$. To see this, note that $V\subseteq S_t$ and, due to the definition of $\tilde{m}$, $|V\cap \{t_{\tilde{m}-1}+1,\ldots, t_{\tilde{m}}\}|=|V\cap \{1,\ldots, t_{\tilde{m}}\}|-|V\cap \{1,\ldots, t_{\tilde{m}-1}\}|\geq |S_{t_{\tilde{m}}}|-|S_{t_{\tilde{m}-1}}|=|S_t\cap \{t_{\tilde{m}-1}+1,\ldots, t_{\tilde{m}}\}|$. Now consider $J=V\cap \{1,\ldots, t_{\tilde{m}-1}\}$. By the definition of $\tilde{m}$, we have that $|J\cap \{1,\ldots,t_i\}|\leq |S_{t_i}|-i$ for all $i\in \{1,\ldots,\tilde{m}-1\}$, with an equality for $i=\tilde{m}-1$. Among all sets $J\subseteq S_t$ that satisfy the aforementioned constraints, the set $A$ at step $t_{\tilde{m}-1}$ \textit{after} completing line 13 of the algorithm, i.e. $A_{(t_{\tilde{m}-1})+}$, minimizes the product of the e-values. This is because $|(S_t\setminus J)\cap \{1,\ldots,t_i\}|\geq i$, $i\in \{1,\ldots,\tilde{m}-1\}$, and the \texttt{SeqE-Guard} algorithm exactly removes those indices from $S_t$ that correspond to the largest possible e-values. Since no indices were removed between the times $t_{\tilde{m}-1}$ and $t_{\tilde{m}}$, it follows that $A_{(t_{\tilde{m}}-1)+}\cap \{1,\ldots,t_{\tilde{m}-1}\}=A_{t_{\tilde{m}}}\cap \{1,\ldots,t_{\tilde{m}-1}\}$, showing the claim. \fi
    \item[(c)] \ifdiff \added{Again, since no indices were removed between the times $t_{\tilde{m}-1}$ and $t_{\tilde{m}}$, we have $A_{t_{\tilde{m}}}\cap \{t_{\tilde{m}-1}+1,\ldots, t_{\tilde{m}}\}=S_t\cap \{t_{\tilde{m}-1}+1,\ldots, t_{\tilde{m}}\}$.} \else Again, since no indices were removed between the times $t_{\tilde{m}-1}$ and $t_{\tilde{m}}$, we have $A_{t_{\tilde{m}}}\cap \{t_{\tilde{m}-1}+1,\ldots, t_{\tilde{m}}\}=S_t\cap \{t_{\tilde{m}-1}+1,\ldots, t_{\tilde{m}}\}$. \fi
    \item[(d)] \ifdiff \added{This follows immediately from the fact that $\Pi_{A_{t_i}\cup U_{t_i}}\geq 1/\alpha$ for all $i\in \{1,\ldots,m\}$.} \else This follows immediately from the fact that $\Pi_{A_{t_i}\cup U_{t_i}}\geq 1/\alpha$ for all $i\in \{1,\ldots,m\}$. \fi
\end{itemize}

% The first inequality in the display above follows since $A_t\cap \{1,\ldots t_{\tilde{m}-1}\}$ minimizes the product of the e-values over all subsets $J\subseteq S_{t_{\tilde{m}-1}}$ with $|J|=|S_{t_{\tilde{m}-1}}|-(\tilde{m}-1)$ that satisfy $|J\cap \{1,\ldots, t_{i}\}|\leq |S_{t_{i}}|-i$ for all $i\in \{1,\ldots, \tilde{m}-1\}$ and $(\{t_{\tilde{m}-1}+1,\ldots, t_{\tilde{m}}\}\cap S_t) \subseteq V$ (due to definition of $\tilde{m}$), where $t_0=0$. 

\sloppy Hence, it remains to show that $\bd^{\bphi}(S_{t}) < d_{t-1}+1$, if $\Pi_{A_t\cup U_t}<1/\alpha$. Since $\Pi_{(A_t\cap \{1,\ldots, i\}) \cup U_t}< 1/\alpha$ for all $i\in \{1,\ldots,t-1\}$, $\Pi_{A_t\cup U_t}<1/\alpha$ implies that $\phi_{A_t \cup U_t}=0$. Furthermore, since $|S_t\setminus A_t|=d_{t-1}$, the claim follows.
\end{proof}

\putbib
\end{bibunit}

\end{document}